\newtheorem{theorem}{Theorem}[section]
\newtheorem{proposition}[theorem]{Proposition}
\newtheorem{lemma}[theorem]{Lemma}
\newtheorem{corollary}[theorem]{Corollary}
\theoremstyle{definition}
\newtheorem{definition}[theorem]{Definition}
\newtheorem{remark}[theorem]{Remark}
\newtheorem{example}[theorem]{Example}
\newcommand{\N}{\mathbb N}
\newcommand{\Z}{\mathbb Z}
\newcommand{\F}{\mathbb F}
\newcommand{\Cc}{\mathcal C}
\newcommand{\MM}{\mathbb M}
\DeclareMathOperator{\rk}{rk}
\DeclareMathOperator{\srk}{srk}
\newcommand{\Cd}{\mathcal{C}^\perp}
\newcommand{\A}{\mathcal{A}}
\newcommand{\matsumi}{\mathbb M}
\newcommand{\wt}{\mathrm{wt}}
\newcommand{\gauss}{\genfrac{[}{]}{0pt}{}}
\DeclareMathOperator{\maxrk}{maxrk}
\DeclareMathOperator{\maxsrk}{maxsrk}
\DeclareMathOperator{\rowsp}{rowsp}
\DeclareMathOperator{\colsp}{colsp}
\DeclareMathOperator{\supp}{supp}
\title{Sum-rank metric codes}
\author{Elisa Gorla, Umberto~Mart\'inez-Pe\~nas, Flavio Salizzoni}
\date{}
\begin{document}

\maketitle

\begin{abstract}
Sum-rank metric codes are a natural extension of both linear block codes and rank-metric codes. They have several applications in information theory, including multishot network coding and distributed storage systems. The aim of this chapter is to present the mathematical theory of sum-rank metric codes, paying special attention to the $\F_q$-linear case in which different sizes of matrices are allowed. We provide a comprehensive overview of the main results in the area. In particular, we discuss invariants, optimal anticodes, and MSRD codes. In the last section, we concentrate on $\F_{q^m}$-linear codes.
\end{abstract}
\tableofcontents

\section{Definitions and notation}\label{sect:defn}

For a positive integer $r$, we denote by $[r]$ the set $\{1,\ldots,r\}$. For a prime power $q$ and positive integers $m \geq n$, let $\F_q^{m \times n}$ be the set of $m \times n $ matrices with entries in the finite field $\F_q$. We denote by $\rk(M)$ the rank of a matrix $M \in \F_q^{m \times n}$. Throughout this chapter we denote by $\MM$ the $\F_q$-linear vector space 
 $$\MM=\F_q^{m_1\times n_1}\times\ldots\times\F_q^{m_\ell\times n_\ell},$$
where $\ell ,m_1, \ldots ,  m_\ell,n_1, \ldots , n_\ell$ are positive integers. In order to simplify some statements and improve the readability of the text, we assume that $m_1\geq  \ldots \geq   m_\ell$ and 
$ n_i\leq m_i$ for $i\in[\ell]$. Moreover let $n = n_1 + \ldots + n_\ell$ and if $m_1=\dots=m_{\ell}$ we write $m$ in place of $m_i$.
An element $C\in\MM$ is called codeword and it can be written as $C = (C_1, \ldots , C_\ell)$, with $C_i\in\F_q^{m_i\times n_i}$ for $i\in[\ell]$. The sum-rank weight of $C$ is given by
$$\mathrm{srk}(C) = \sum_{i=1}^\ell \rk(C_i).$$
This weight naturally induces a metric on $\MM$, called the sum-rank metric. Indeed, let $d$ be the map 
$$\begin{array}{ccccc}
d  &: & \MM \times \MM & \longrightarrow & \N \\
& & (C, D) & \longmapsto & \mathrm{srk}(C-D),
\end{array}$$
then $(\MM,d)$ is a metric space.
\begin{definition}
A linear sum-rank metric code $\Cc$ is an $\F_q$-linear subspace of $(\MM,d)$. 
\end{definition}
Throughout the chapter, we often refer to it simply as a code if the metric is clear from the context. We say that a code $\Cc\subseteq\MM$ is non-trivial if $\Cc\neq 0,\MM$.
\begin{remark}
The sum-rank metric is a natural generalization of both the Hamming metric and the rank metric. Indeed, on the one side if $\ell = 1$, then $\MM= \F_q^{m \times n}$ and the sum-rank metric coincides with the rank metric on $\MM$. In this case a sum-rank metric code $\Cc$ is a rank-metric code. 
On the other side if $m_1 = \ldots = m_\ell = 1 $, then $\MM=\F_q^n$ and the sum-rank metric coincides with the Hamming metric. In this situation, a sum-rank metric code $\Cc$ is a block code endowed with the Hamming metric. 

As a consequence, a result proved in the sum-rank metric case holds true also for the rank and the Hamming metric. The mathematical theory of sum-rank metric codes, however, tends to be more complex than that of rank-metric and linear block codes. E.g., there are results that hold for both rank-metric and linear block codes, but do not hold in full generality for sum-rank metric codes. An example of such a result is Wei duality, which we discuss in the next section.
\end{remark}
\begin{remark}
It is worth noting that every sum-rank metric code can be viewed as a code with the rank metric in the appropriate ambient space. For instance, consider the space $\MM$ and let $\bar m=m_1+\dots+m_{\ell}$. We denote by $X(\MM)$ the set
\begin{equation*}
\begin{split}
    X(\MM)=\{(s,t)\in[\bar m]\times[n]:&\text{ if }m_1+\dots+m_i<s\leq m_1+\dots+m_{i+1}\\
    &\text{ then }n_1+\dots+n_i<t\leq n_1+\dots+n_{i+1}\}.
    \end{split}
\end{equation*}
Let $\F_q^{\bar m\times n}[\MM]$ be the $\F_q$-linear space of matrices supported on $X(\MM)$, i.e.,
\begin{equation*}
    \F_q^{\bar m\times n}[\MM]=\{M\in\F_q^{\bar m\times n}: M_{s,t}\neq0 \mbox{ only if } (s,t)\in X(\MM)\}.
\end{equation*}
Since $\F_q^{\bar m\times n}[\MM]$ is a subspace of $\F_q^{\bar m\times n}$, we can equip it with the metric induced by the rank metric on $\F_q^{\bar m\times n}$. Then, the $\F_q$-linear isometry $\iota:\MM\rightarrow\F_q^{\bar m\times n}[\MM]$ given by
\begin{equation*}
    (C_1,\dots,C_{\ell})\longmapsto\begin{pmatrix} C_1&&\\&\ddots&\\&&C_{\ell}\end{pmatrix}
\end{equation*}
is distance-preserving, i.e., $\srk(C)=\rk(\iota(C))$ for all $C\in\MM$. Therefore, a sum-rank metric code $\Cc\subseteq\MM$ can be always identified with its image $\iota(\Cc)$, that is a rank-metric space in $\F_q^{\bar m\times n}[\MM]$.
\end{remark}
Now we define two parameters that play a fundamental role in the theory of sum-rank metric codes.
\begin{definition}
The minimum distance of a code $0\neq\mathcal{C} \subseteq \MM$ is 
$$ d(\Cc) = \min\{ \mathrm{srk}(C) : C \in \Cc \setminus \{ 0\}  \}$$ and the maximum sum-rank distance is  
$$\maxsrk(\Cc) = \max\{ \mathrm{srk}(C) : C \in \Cc  \}.$$
\end{definition}
We say that a codeword $C\in\Cc$ realizes the minimum distance (respectively, the maximum sum-rank distance) if $d(\Cc)=\srk(C)$ (respectively, $\maxsrk(\Cc)=\srk(C)$). Notice that a codeword with this property may not be unique.

Another useful code parameter is the covering radius. It has been defined for (non-linear) sum-rank metric codes with $m_1=\dots=m_{\ell}$ in~\cite[Definition 4]{CHW22}. Here, we extend the definition to any code in $\MM$.
\begin{definition}
The covering radius of a sum-rank metric code $\Cc\subseteq\MM$ is
\begin{equation*}
    \rho(\Cc)=\min\{r\in\mathbb{Z}:d(M,\Cc)\leq r\text{ for all }M\in\MM\},
\end{equation*}
where $d(M,\Cc)=\min\{d(M,C):C\in\Cc\}$.
\end{definition}
The following lemma collects some of the basic properties of these parameters.
\begin{lemma}
Let $\Cc\subseteq\mathcal{D}\subseteq\MM$ be sum-rank metric codes. Then
\begin{enumerate}
    \item $0\leq d(\Cc),\maxsrk(\Cc),\rho(\Cc)\leq n$.
    \item $d(\Cc)\geq d(\mathcal{D})$ and $\rho(\Cc)\geq \rho(\mathcal{D})$, while $\maxsrk(\Cc)\leq\maxsrk(\mathcal{D})$.
    \item $d(\Cc)\leq 2\rho(\Cc)+1$.
\end{enumerate}

\end{lemma}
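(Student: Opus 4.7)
For parts (1) and (2), my plan is to unwind the definitions. Non-negativity of $d(\Cc)$, $\maxsrk(\Cc)$, and $\rho(\Cc)$ is immediate since $\srk$ takes values in $\N$. The upper bounds for $d(\Cc)$ and $\maxsrk(\Cc)$ come from the block-by-block estimate $\srk(X)=\sum_i\rk(X_i)\leq\sum_i n_i=n$, valid for every $X\in\MM$. For the covering radius I would use that $0\in\Cc$ (by $\F_q$-linearity), so $d(M,\Cc)\leq d(M,0)=\srk(M)\leq n$ for every $M\in\MM$; taking the max yields $\rho(\Cc)\leq n$. Part (2) is equally direct: from $\Cc\subseteq\mathcal{D}$ we get $\Cc\setminus\{0\}\subseteq\mathcal{D}\setminus\{0\}$, so the minimum of $\srk$ over the smaller set is at least that over the larger, giving $d(\Cc)\geq d(\mathcal{D})$; dually $\maxsrk(\Cc)\leq\maxsrk(\mathcal{D})$. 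For covering radii, $d(M,\mathcal{D})\leq d(M,\Cc)$ holds pointwise in $M$ (the infimum is over a superset), and maximizing over $M$ yields $\rho(\mathcal{D})\leq\rho(\Cc)$.

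The substantive statement is (3). The plan is to exhibit an element $M\in\MM$ with $d(M,\Cc)\geq\lfloor d(\Cc)/2\rfloor$, which forces $\rho(\Cc)\geq\lfloor d(\Cc)/2\rfloor$ and hence
\[
d(\Cc)\leq 2\lfloor d(\Cc)/2\rfloor+1\leq 2\rho(\Cc)+1.
\]
To produce $M$, fix a codeword $C=(C_1,\ldots,C_\ell)\in\Cc$ realizing the minimum distance, write $r_i=\rk(C_i)$, so $\sum_i r_i=d(\Cc)$, and use a rank factorization of each block to split $C_i=M_i+(C_i-M_i)$ with $\rk(M_i)=s_i$ and $\rk(C_i-M_i)=r_i-s_i$ for a choice of integers $0\leq s_i\leq r_i$. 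Concretely, writing $C_i=U_i D_i V_i$ with $U_i,V_i$ invertible and $D_i$ supported on $r_i$ of the diagonal entries, I split $D_i=D_i'+D_i''$ by partitioning those entries. Choosing the $s_i$ so that $\sum_i s_i=\lfloor d(\Cc)/2\rfloor$ (which is possible since $\sum_i r_i=d(\Cc)$), I set $M=(U_iD_i'V_i)_i\in\MM$, which satisfies $\srk(M)=\lfloor d(\Cc)/2\rfloor$ and $\srk(C-M)=\lceil d(\Cc)/2\rceil$ by construction.

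It remains to verify $d(M,\Cc)\geq\lfloor d(\Cc)/2\rfloor$. For $C'=0$ this is $\srk(M)=\lfloor d(\Cc)/2\rfloor$, and for $C'=C$ it is $\srk(C-M)=\lceil d(\Cc)/2\rceil$. For any other $C'\in\Cc$, $C'\neq 0$, I would apply subadditivity of the rank (hence of $\srk$) in the form $\srk(C'-M)\geq\srk(C')-\srk(M)\geq d(\Cc)-\lfloor d(\Cc)/2\rfloor=\lceil d(\Cc)/2\rceil$, using $\srk(C')\geq d(\Cc)$. Taking the minimum over $C'\in\Cc$ gives $d(M,\Cc)=\lfloor d(\Cc)/2\rfloor$, closing the argument. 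The only place where something non-trivial happens is the decomposition step, but since the sum-rank weight splits blockwise it reduces to the well-known fact that a rank-$r$ matrix can be written as a sum of a rank-$s$ and a rank-$(r-s)$ matrix for any $0\leq s\leq r$; I do not anticipate any real obstacle beyond bookkeeping with the floor and ceiling of $d(\Cc)/2$.
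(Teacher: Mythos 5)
Your proof is correct. The paper states this lemma without proof, treating all three items as routine consequences of the definitions, and your argument is exactly the standard one: parts (1) and (2) by unwinding definitions (using $0\in\Cc$ for the bound $\rho(\Cc)\leq n$), and part (3) by splitting a minimum-weight codeword $C$ blockwise into $M+(C-M)$ with $\srk(M)=\lfloor d(\Cc)/2\rfloor$ and using the triangle inequality to show $M$ is at distance at least $\lfloor d(\Cc)/2\rfloor$ from every codeword. No gaps; the rank-factorization step you use to split each block is valid over any field.
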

\begin{example}\label{example:sumrankcode}
Consider the space $\MM=\F_2^{3\times 2}\times\F_2^{3\times 1}\times\F_2^{2\times 2}\times\F_2^{2\times 1}\times\F_2\times\F_2$ and let $\Cc\subseteq \MM$ be the following $\F_2$-linear sum-rank metric code
\begin{equation*}
    \Cc=\left\{\left(\begin{pmatrix}a_1&0\\0&a_1+a_2\\a_1&0\end{pmatrix},\begin{pmatrix}a_3\\0\\0\end{pmatrix},\begin{pmatrix}a_2&0\\0&a_2\end{pmatrix},\begin{pmatrix}a_4\\0\end{pmatrix},a_3,a_4 \right):(a_1,a_2,a_3,a_4)\in\F_2^4\right\}.
\end{equation*}
It is easy to verify that the minimum distance is equal to $2$ and it is realized by
\begin{equation*}
    C=\left(\begin{pmatrix}0&0\\0&0\\0&0\end{pmatrix},\begin{pmatrix}1\\0\\0\end{pmatrix},\begin{pmatrix}0&0\\0&0\end{pmatrix},\begin{pmatrix}0\\0\end{pmatrix},1,0 \right),
\end{equation*}
while the maximum sum-rank distance is equal to $7$ and it is realized by
\begin{equation*}
    D=\left(\begin{pmatrix}1&0\\0&0\\1&0\end{pmatrix},\begin{pmatrix}1\\0\\0\end{pmatrix},\begin{pmatrix}1&0\\0&1\end{pmatrix},\begin{pmatrix}1\\0\end{pmatrix},1,1 \right).
\end{equation*}
In order to compute the covering radius, one can consider the following element 
\begin{equation*}
    M=\left(\begin{pmatrix}0&1\\1&1\\0&0\end{pmatrix},\begin{pmatrix}0\\1\\0\end{pmatrix},\begin{pmatrix}1&1\\1&0\end{pmatrix},\begin{pmatrix}0\\1\end{pmatrix},1,1 \right)\in\MM.
\end{equation*}
One can check by direct computation that $d(M,\Cc)=6$ and therefore $\rho(\Cc)\geq6$. On the other hand we have that $\rho(\Cc)\leq6$, since for every element of $\MM$ we can find an element of $\Cc$ with the same last two entries. Therefore $\rho(\Cc)=6$.
\end{example}
We discuss now the notion of support in the sum-rank metric. Given an ambient space $\MM$, we consider the associated space  
$$ \mathbb{S}=\mathbb{F}_{q}^{n_1} \times \mathbb{F}_{q}^{n_2} \times \cdots \times \mathbb{F}_{q}^{n_\ell}.$$
We denote by $\mathcal{P}(\mathbb{S})$ the lattice of all subspaces $\mathcal{L}$ of $\mathbb{S}$ such that $\mathcal{L}=\mathcal{L}_1\times\dots\times\mathcal{L}_{\ell}$ where $\mathcal{L}_i$ is an $\F_q$-linear subspace of $\F_{q}^{n_i}$. 
\begin{definition}[{\cite[Definition 2.4]{BGRMSRD}}]
Let $\MM$, $\mathbb{S}$ and $\mathcal{P}(\mathbb{S})$ be as above. For an element $C\in\MM$, the support of $C$ is 
$$\supp(C)=\rowsp(C_1)\times\dots\times\rowsp(C_{\ell})\in\mathcal{P}(\mathbb{S}),$$
where $\rowsp(C_i)$ is the space generated by the rows of $C_i$ over $\F_q$. The support $\supp(\Cc)$ of a code $\Cc$ is the smallest $\mathcal{L}\in\mathcal{P}(\mathbb{S})$ such that $\supp(C)\subseteq\mathcal{L}$ for all $C\in\Cc$.
\end{definition}
\begin{remark}
If $\mathbb{S}=\F_q\times\dots\times \F_q$, then this definition of support coincides with the classical one for linear block codes. If $\mathbb{S}=\F_q^n$, then it coincides with the one for $\F_{q^m}$-linear rank-metric codes. In Section~\ref{sect:invars} we define equivalences (i.e., linear isometries) of sum-rank metric codes. We stress that this notion of support is not equivalence-invariant.
\end{remark}

\begin{definition}\label{defn:rowsupp}
For $\mathcal{L}\in\mathcal{P}(\mathbb{S})$  we define the row-support space $\mathcal{V}_{\mathcal{L}}$ as
$$\mathcal{V}_{\mathcal{L}} = \{ C \in \MM : \supp(C) \subseteq \mathcal{L} \}. $$
Clearly, $\mathcal{V}_{\mathbb{S}}=\MM$. 

For a code $\Cc\subseteq\MM$ and a subspace $\mathcal{L}\in\mathcal{P}(\mathbb{S})$ the subcode of $\Cc$ supported on $\mathcal{L}$ is
\begin{equation*}
    \Cc(\mathcal{L})=\Cc\cap\mathcal{V}_{\mathcal{L}}=\{C\in\Cc:\supp(C)\subseteq\mathcal{L}\}.
\end{equation*}
\end{definition}

\begin{example}
Let $\MM$, $\Cc$, $C$ and $D$ be as in Example~\ref{example:sumrankcode}. We have that $\supp(C)=0\times \F_2\times0\times0\times\F_2\times 0$, $\supp(D)=\langle (1,0)\rangle\times \F_2\times\F_2^2\times\F_2\times\F_2\times\F_2$ and $\supp(\Cc)=\MM$. Let $\mathcal{L}=\langle (1,0)\rangle\times 0\times \F_2^2\times 0\times 0\times 0$. Then,
\begin{equation*}
        \Cc(\mathcal{L})=\Cc\cap\mathcal{V}_{\mathcal{L}}=\left\langle\left(\begin{pmatrix}1&0\\0&0\\1&0\end{pmatrix},\begin{pmatrix}0\\0\\0\end{pmatrix},\begin{pmatrix}1&0\\0&1\end{pmatrix},\begin{pmatrix}0\\0\end{pmatrix},0,0 \right)\right\rangle_{\F_2}.
\end{equation*}
\end{example}
Let $\mathrm{tr}(M)$  denote the trace of a square matrix $M$ and consider the map $\mathrm{Tr}:\MM\times \MM\rightarrow\F_q$  given by $\mathrm{Tr}(D,C)=\sum_{i=1}^\ell \mathrm{tr}(D_i C_i^t) $. Notice that $\mathrm{Tr}$ is a non-degenerate symmetric bilinear form. The dual $\Cc^{\perp}$ of a sum-rank metric code $\Cc\subseteq\MM$ is the orthogonal subspace of $\Cc$ in $\MM$ with respect to $\mathrm{Tr}$.
\begin{definition}\label{defn:dual}
Let $\Cc \subseteq \matsumi$ be a code. The dual of $\Cc$ is 
$$\Cc^\perp = \{ D \in \matsumi : \mathrm{Tr}(D,C) = 0 \mbox{ for all } C \in \Cc\}.$$
\end{definition}
Some of the basic properties that we expect from the duality are trivially true since $\mathrm{Tr}$ is a non-degenerate symmetric bilinear form. For instance, $\lvert \MM\rvert=\lvert \Cc\rvert\lvert \Cc^{\perp}\rvert\text{ and }(\Cc^{\perp})^{\perp}=\Cc$.

\section{Isometries and invariants}\label{sect:invars}

We start this section by introducing the notion of equivalent codes in the sum-rank metric.
\begin{definition}[{\cite[Definition V.1]{ourpaper}}]
An isometry between two $\F_q$-linear sum-rank metric codes $\Cc_1$ and $\Cc_2$ is an $\F_q$-linear homomorphism $\varphi:\Cc_1\rightarrow\Cc_2$ that preserves the sum-rank metric, i.e., $\srk(\varphi(C)) = \srk(C)$ for all $C\in \Cc_1$.  In addition, two codes  $\Cc_1, \Cc_2 \subseteq \matsumi$ are equivalent if there exists an isometry of the ambient space
 $\varphi : \MM \rightarrow \MM$ such that $\varphi(\Cc_1) = \Cc_2$. 
\end{definition}
The next characterization of the isometries of $\MM$ was established in~\cite{ourpaper}. The proof follows easily from the classification of isometries in the rank-metric case. We recall that for every isometry $\psi:\F_q^{m\times n}\rightarrow \F_q^{m\times n}$ there exist two matrices $A$ and $B$ such that either $\psi(M)=AMB$ for all $M\in\F_q^{m\times n}$ or $\psi(M)=AM^tB$ for all $M\in\F_q^{m\times n}$, where the second case can only happen if $m=n$. 
\begin{theorem}[{\cite[Theorem V.2]{ourpaper}}]\label{thm:isom}
Let $\varphi : \matsumi \longrightarrow \matsumi$ be an $\F_q$-linear isometry. 
Then there is a permutation
$$\sigma : [\ell] \longrightarrow [\ell]$$
with the property that $\sigma(i) = j$ implies $m_i=m_j$ and $n_i = n_j$, and there are rank-metric $\F_q$-linear isometries $\psi_i : \F_q^{m_i \times n_i} \longrightarrow \F_q^{m_i \times n_i}$  for $i\in[\ell]$ such that 
$$\varphi(C_1 , \dots , C_\ell) = (\psi_1(C_{\sigma(1)}), \dots , \psi_\ell(C_{\sigma(\ell)}))$$ for all  $(C_1, \dots , C_\ell) \in \matsumi$.
\end{theorem}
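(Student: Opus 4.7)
The plan is to reduce the statement to the single-block rank-metric classification recalled before the theorem by showing that any $\F_q$-linear isometry $\varphi:\MM\to\MM$ permutes the natural block subspaces
$$V_i := \{(0,\dots,0,C_i,0,\dots,0) : C_i\in\F_q^{m_i\times n_i}\}\subseteq\MM, \quad i\in[\ell].$$

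First I would observe that a nonzero element of $\MM$ has sum-rank weight $1$ precisely when it lies in some $V_i$ and has a rank-$1$ block-$i$ component, and that $\varphi$ is automatically bijective: it is $\F_q$-linear with trivial kernel, since $\srk(\varphi(C))=\srk(C)=0$ forces $C=0$, and $\MM$ is finite-dimensional. Hence $\varphi$ permutes the set of weight-$1$ elements. In particular, for each matrix unit $E_{ab}^{(i)}\in V_i$ (the codeword that is zero outside block $i$ and whose block-$i$ component has a $1$ in position $(a,b)$), the image $\varphi(E_{ab}^{(i)})$ is contained in a uniquely determined block $V_{j(a,b,i)}$.

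Next I would prove that $j(a,b,i)$ depends only on $i$. The key observation is that if $v,w$ are weight-$1$ elements of the same $V_i$ that share a row index or a column index, then $v+w$ still has weight $1$; consequently $\varphi(v)+\varphi(w)$ has weight $1$, which forces $\varphi(v)$ and $\varphi(w)$ to lie in the same block (otherwise their nonzero contributions in two distinct blocks would give weight at least $2$). Applying this twice, first to $E_{11}^{(i)}$ and $E_{a1}^{(i)}$ (same column) and then to $E_{a1}^{(i)}$ and $E_{ab}^{(i)}$ (same row), I conclude that all $\varphi(E_{ab}^{(i)})$ lie in a single block, which I denote by $V_{\tau(i)}$. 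Since the $E_{ab}^{(i)}$ span $V_i$ over $\F_q$, this yields $\varphi(V_i)\subseteq V_{\tau(i)}$.

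Finally, since $\MM=V_1\oplus\cdots\oplus V_\ell$ and $\varphi$ is bijective, $\tau$ must be a permutation of $[\ell]$. Comparing the maximum rank attained in $V_i$, which equals $n_i$ (using $n_i\leq m_i$), with that in $V_{\tau(i)}$ gives $n_i=n_{\tau(i)}$, and matching $\F_q$-dimensions then yields $m_i=m_{\tau(i)}$. Under the natural identifications $V_i\cong\F_q^{m_i\times n_i}$ and $V_{\tau(i)}\cong\F_q^{m_{\tau(i)}\times n_{\tau(i)}}=\F_q^{m_i\times n_i}$, the restriction $\varphi|_{V_i}$ becomes an $\F_q$-linear rank-metric self-isometry of $\F_q^{m_i\times n_i}$, which by the cited classification is of the form $X\mapsto AXB$, and additionally $X\mapsto AX^tB$ when $m_i=n_i$. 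Assembling these restrictions and relabeling $\sigma:=\tau^{-1}$ produces the formula in the statement. The only step with real content is the second one, namely the well-definedness of $\tau$; everything else is bookkeeping on top of the single-block classification.
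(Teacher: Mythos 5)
Your proof is correct and takes exactly the route the paper indicates: the paper supplies no argument beyond the remark that the result ``follows easily from the classification of isometries in the rank-metric case,'' and your write-up is precisely that reduction, with the block-preservation step (well-definedness of $\tau$ via sums of weight-one matrix units sharing a row or column index) filling in the one detail with actual content. The only cosmetic caveat is that the shared-index lemma should be invoked for \emph{distinct} matrix units (for $a=1$ or $b=1$ the step is vacuous, and in characteristic $2$ the sum of a unit with itself is $0$), but this does not affect the argument.
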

It is well-known that every isometry between two linear block codes can be extended to an isometry of the ambient space.  This result is the renowned MacWilliams Extension Theorem \cite[Theorem 7.9.4]{huffman_pless_2003}. Since we know that a similar result does not hold in the case of rank-metric codes, it is evident that we cannot expect it to work in the sum-rank metric. However, in the latter case we have some new pathologies that arise from the structure of the ambient space, as one can appreciate in the next example.
\begin{example}
Let $\MM=\F_q^{2\times 2}\times  \F_q\times \F_q$.  Consider the following two codes
\begin{equation*}
\Cc_1=\left\{\left(\begin{pmatrix} a&0\\0&a\end{pmatrix},0,0\right):a\in\F_q\right\}\text{ and }\Cc_2=\left\{\left(\begin{pmatrix} 0&0\\0&0\end{pmatrix},a,a\right):a\in\F_q\right\}.
\end{equation*}
All the $\F_q$-linear isomorphisms between $\Cc_1$ and $\Cc_2$ are isometries and none of these can be extended to the ambient space.
\end{example}
More examples along these lines may be found in~\cite[Example V.4 and Example V.5]{ourpaper}.

\newpage
The notion of weight of a code is related to the size of its support.
\begin{definition}
The weight of a code $\Cc$ is
$$\wt(\Cc)=\min\{\maxsrk(\A):\Cc\subseteq\A=\A_1\times\ldots\times\A_\ell, \A_i\subseteq\F_q^{m_i\times n_i} \mbox{ is an optimal anticode for } i\in[\ell]\}.$$
A code $ \A \subseteq \mathbb{F}_q^{m \times n} $ is an optimal anticode if it attains the rank-metric anticode bound \cite[Proposition 47]{Rav16a}, i.e., $ \dim(\A) = m \maxrk(\A) $. See also Theorem \ref{theorem:anticodebound} below.
\end{definition}
Notice that if $m_1=\ldots=m_\ell=1$, i.e.,  if $\Cc$ is a linear block code, then the weight of $\Cc$ is the cardinality of its support. If $\ell=1$ and $n<m$, then the weight of $\Cc$ is the dimension of its support. 
Finally, if $\ell=1$ and $n=m$, then $\wt(\Cc)=\min\{\dim(\rowsp(\Cc)),\dim(\colsp(\Cc))\}$, where $\rowsp(\Cc)=\sum_{M\in\Cc}\rowsp(M)$ and $\colsp(\Cc)=\sum_{M\in\Cc}\colsp(M)$. This is related to the idea of defining the support of a square matrix $M$ as the pair $(\rowsp(M),\colsp(M))$, see also~\cite[Remark 11.2.4]{G21}.

Generalized weights are among the most studied invariants in coding theory. They were defined in 1977 by Helleseth, Kl\o ve and Mykkeltveit for linear block codes. The interest in these invariants exploded in 1991  when Wei showed that they capture the code performance in the wire-tap channel of type II. In~\cite{ourpaper} the authors gave the following definition for sum-rank metric codes, that is an extension of both the original definition and the one for rank metric codes given in~\cite{Rav16}.
\begin{definition}{\cite[Definition  VI.1]{ourpaper}}\label{definition:genwei}
Let $\Cc\subseteq \matsumi$ be a sum-rank metric code. For each $r\in[\dim(\Cc)]$, the  $r$-th generalized sum-rank weight of $\Cc$ is defined as
\begin{equation*}
d_r(\Cc) = \min\{\wt(\mathcal{D}):\mathcal{D}\text{ is a subcode of }\Cc\text{ and }\dim(\mathcal{D})\geq r\}.
\end{equation*}
We say that $\mathcal{D}\subseteq\Cc$ realizes $d_r(\Cc)$ if $\wt(\mathcal{D})=d_r(\Cc)$ and $\dim(\mathcal{D})\geq r$.
\end{definition}
If $m_1=\dots=m_\ell=m$, then the previous definition can be reformulated as
\begin{equation*}
\begin{split}
d_r(\Cc) =\frac{1}{m} \min\{\dim(\A):\,&\A= \A_1\times\dots\times \A_\ell \text{ where } \A_i\subseteq\F_q^{m\times n_i} \\  & \mbox{are optimal rank-metric anticodes and} \dim(\Cc\cap \A)\geq r\}.
\end{split}
\end{equation*}
Notice that the weight and generalized weights of a code are defined using direct products of optimal rank-metric anticodes. In the next section, we will see that direct products of optimal rank-metric anticodes are optimal sum-rank metric anticodes. Moreover, if $q\neq 2$ there are no other optimal sum-rank metric anticodes, while for $q=2$ there may be more. Therefore, limiting ourselves to these optimal anticodes is necessary in order to recover the classical definition for linear block codes in the case $q=2$. In the next example, we compute the generalized weights of a code and for each of them we exhibit a subcode that realizes it.
\begin{example}
Let $\MM=\F_q^{3\times 1}\times\F_q^{2\times 2}\times \F_q$. Consider the code $\Cc\subseteq \MM$ given by
\begin{equation*}
\Cc=\left\{\left(\begin{pmatrix} a_1\\a_2\\a_3\end{pmatrix},\begin{pmatrix} a_4&0\\0&a_4\end{pmatrix},a_3\right):(a_1,a_2,a_3,a_4)\in\F_q^4\right\}.
\end{equation*}
For $i\in[4]$, denote by $C_i$ the element of $\Cc$ that we obtain by setting $a_i=1$ and $a_j=0$ for $j\neq i$. One can verify that $d_1(\Cc)=d_2(\Cc)=1$, $d_3(\Cc)=2$, and $d_4(\Cc)=4$. The generalized weights are realized respectively by $\mathcal{D}_1=\langle C_1\rangle_{\F_q}$,  $\mathcal{D}_2=\langle C_1,C_2\rangle_{\F_q}$, $\mathcal{D}_3=\langle C_1,C_2,C_3\rangle_{\F_q}$, and $\mathcal{D}_4=\langle C_1,C_2,C_3,C_4\rangle_{\F_q}$.
\end{example}
Since an isometry of $\MM$ maps a product of optimal rank-metric anticodes into a product of optimal rank-metric anticodes with the same maximum sum-rank distance,  we have that generalized weights are invariant under equivalence. The next proposition collects some basic properties of generalized weights. In particular, it shows that they are non-decreasing and the first one always coincides with minimum distance of the code.
\begin{proposition}[\protect{\cite[Proposition VI.6]{ourpaper}}]\label{properties}
Let $0\neq\Cc \subseteq\mathcal{D} \subseteq \MM$, then:
\begin{enumerate}
\item $d_1(\Cc)= d(\Cc)$,
\item $d_r(\Cc)\leq d_s(\Cc)$ for $1\leq r\leq s\leq\dim(\Cc)$,
\item $d_r(\Cc)\geq d_r(\mathcal{D})$ for $r\in[\dim(\Cc)]$,
\item $d_{\dim(\Cc)}(\Cc)=\wt(\Cc)\leq n_1 + \dots + n_\ell$,
\end{enumerate}
\end{proposition}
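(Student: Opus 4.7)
The plan is to verify the four items in order, most of which follow by definition-chasing once one observes that the weight function $\wt$ is monotone with respect to inclusion. I would first record this monotonicity as a preliminary: if $\Cc'\subseteq\Cc''\subseteq\MM$, then every product $\A=\A_1\times\dots\times\A_\ell$ of optimal anticodes that contains $\Cc''$ also contains $\Cc'$, so the infimum defining $\wt(\Cc')$ is taken over a larger family and hence $\wt(\Cc')\leq\wt(\Cc'')$.

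Items (2) and (3) follow immediately. For (2), a subcode $\mathcal{E}\subseteq\Cc$ with $\dim(\mathcal{E})\geq s$ automatically satisfies $\dim(\mathcal{E})\geq r$, so the candidate set defining $d_s(\Cc)$ is contained in the one defining $d_r(\Cc)$ and the minimum can only grow with $s$. For (3), any subcode of $\Cc$ is a subcode of $\mathcal{D}$, so the family over which $d_r(\Cc)$ is minimized is contained in the one for $d_r(\mathcal{D})$, whence $d_r(\Cc)\geq d_r(\mathcal{D})$.

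For (1), I would argue that by the preliminary monotonicity the minimum in the definition of $d_1(\Cc)$ is attained on a one-dimensional subcode $\langle C\rangle$ for some nonzero $C=(C_1,\dots,C_\ell)\in\Cc$. It then remains to show $\wt(\langle C\rangle)=\srk(C)$. For the lower bound, any product $\A=\A_1\times\dots\times\A_\ell$ containing $C$ must have $C_i\in\A_i$, hence $\maxrk(\A_i)\geq\rk(C_i)$, and summing gives $\maxsrk(\A)\geq\srk(C)$. For the upper bound, taking $\A_i=\{M\in\F_q^{m_i\times n_i}:\rowsp(M)\subseteq\rowsp(C_i)\}$ yields $\dim(\A_i)=m_i\rk(C_i)$ (so each $\A_i$ is an optimal rank-metric anticode) and $\maxrk(\A_i)=\rk(C_i)$, so $\maxsrk(\A)=\srk(C)$. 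Combining, $d_1(\Cc)=\min\{\srk(C):C\in\Cc\setminus\{0\}\}=d(\Cc)$.

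Finally, for (4), the only subcode of $\Cc$ of dimension at least $\dim(\Cc)$ is $\Cc$ itself, so $d_{\dim(\Cc)}(\Cc)=\wt(\Cc)$. The inequality $\wt(\Cc)\leq n_1+\dots+n_\ell$ follows by choosing $\A_i=\F_q^{m_i\times n_i}$: each factor is a trivially optimal anticode (since $\dim(\A_i)=m_in_i=m_i\maxrk(\A_i)$ under our standing hypothesis $n_i\leq m_i$), and $\maxsrk(\A_1\times\dots\times\A_\ell)=n_1+\dots+n_\ell$. The only step requiring any real work is the identification $\wt(\langle C\rangle)=\srk(C)$ in (1); the rest is bookkeeping.
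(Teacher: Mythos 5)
Your proof is correct, and since the paper only states this proposition with a citation to \cite{ourpaper} and gives no proof of its own, there is nothing to diverge from: your argument is exactly the standard definition-chasing one, with the monotonicity of $\wt$ under inclusion and the identification $\wt(\langle C\rangle)=\srk(C)$ (via the optimal anticodes $\A_i=\{M:\rowsp(M)\subseteq\rowsp(C_i)\}$) correctly isolated as the only steps with content. All four items check out, including the use of the standing hypothesis $n_i\leq m_i$ to see that $\F_q^{m_i\times n_i}$ is itself an optimal anticode in item (4).
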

Even though generalized weights are not strictly increasing, they still have to increase after a given number of steps. For instance, let $\Cc\subseteq\MM$ be a code and let $k\in[\ell]$, $r+m_k\in[\dim(\Cc)]$. In \cite[Lemma VI.8]{ourpaper} it is shown that if
\begin{equation}\label{eqn:growthgenwts}
d_{r+m_k}(\Cc)>\sum_{i=1}^{k-1}n_i\,\,\text{ then }\,\,d_{r+m_k}(\Cc)\geq d_r(\Cc)+1.
\end{equation}
In particular this implies that 
\begin{equation}\label{eqn:growthgenwts2}
d_{r+n_1m_1+\cdots+n_{j-1}m_{j-1}+\delta m_{j}}(\Cc)\geq d_r(\Cc)+n_1+\cdots+n_{j-1}+\delta
\end{equation}
for $j\in[\ell]$, $r\in[ \dim(\Cc)-(n_1m_1+\cdots+n_{j-1}m_{j-1}+\delta m_j)]$, and $0\leq\delta\leq n_j-1$. See \cite[Proposition VI.6]{ourpaper} for a proof of~\eqref{eqn:growthgenwts2}.

Notice that, together with Proposition~\ref{properties}, \eqref{eqn:growthgenwts2} allows us to explicitly compute the generalized weights of the ambient space $\MM$. In fact, since $\dim(\MM)=n_1m_1+\cdots+n_{\ell}m_{\ell}$,~\eqref{eqn:growthgenwts2} implies that
$$d_{n_1m_1+\cdots+n_{j-1}m_{j-1}+\delta m_{j}+s}(\MM)=n_1+\dots+n_{j-1}+\delta+1,$$
for $j\in[\ell]$, $0\leq\delta\leq n_j-1$, and $0<s\leq m_j$. 
In other words, the sequence of generalized weights of $\MM$ takes all the values between $1$ and $n_1+\ldots+n_{\ell}$ and the integer $n_1+\ldots+n_{j-1}+t$ with $t\in[n_j]$ appears exactly $m_{j}$ times.
Notice that, together with Proposition~\ref{properties}, \eqref{eqn:growthgenwts} implies that the sequence of generalized weights of a code $\Cc\subseteq\MM$ is a subsequence of that of $\MM$. 

Both in the Hamming metric and in the rank metric, the generalized weights of a code determine those of its dual. This result is broadly known as Wei's Duality Theorem, see~\cite[Theorem~3]{Wei} and~\cite[Corollary 38]{Rav16}. An extension of this theorem in the case of sum-rank metric codes with $m_1=\ldots=m_\ell=m$ is presented below. Let $\Cc \subseteq \F_q^{m\times n_1}\times\dots\times\F_q^{m\times n_{\ell}}$ be a sum-rank metric code. For each $r\in \Z$, let $D_r(\Cc)$ and $\overline{D}_r(\Cc)$ be the sets
$$D_r(\Cc) = 
\{ d_{r+sm}(\Cc) : s \in \Z , r+sm\in[\dim(\Cc)]\},$$
$$\overline{D}_r(\Cc) = \bigg\{ n+ 1 - d_{r+sm}(\Cc) : s \in \Z , r+sm\in[\dim(\Cc)]\bigg \}.$$
\begin{theorem}[\protect{\cite[Theorem VI.9]{ourpaper}}]\label{weiduality}
Let $m_1=\ldots=m_\ell=m$, $r\in[m]$, and let $\Cc \subseteq \MM$ be a sum-rank metric code. Then 
$$D_r(\Cd) = [n] \backslash \overline{D}_{r + \dim(\Cc)}(\Cc).$$
In particular, the generalized weights of a sum-rank metric code $\Cc$ determine the generalized weights of $\Cd$.
\end{theorem}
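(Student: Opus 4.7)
The plan is to combine a linear-algebraic duality identity for row-support subcodes with a modular bookkeeping argument on the profile of generalized weights, exploiting the assumption $m_1=\ldots=m_\ell=m$ decisively.

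Step 1 is to establish the key duality identity. For every $\mathcal{L}=\mathcal{L}_1\times\cdots\times\mathcal{L}_\ell\in\mathcal{P}(\mathbb{S})$, a direct computation with the bilinear form $\mathrm{Tr}$ shows that $\V_\mathcal{L}^\perp=\V_{\mathcal{L}^\perp}$, where $\mathcal{L}^\perp=\mathcal{L}_1^\perp\times\cdots\times\mathcal{L}_\ell^\perp$. Combining this with the standard identity $(\Cc\cap\V_\mathcal{L})^\perp=\Cd+\V_\mathcal{L}^\perp$ and $\dim\V_\mathcal{L}=m\dim\mathcal{L}$ yields, by a short dimension count,
$$\dim\Cc(\mathcal{L})-\dim\Cd(\mathcal{L}^\perp)=\dim\Cc-m(n-\dim\mathcal{L}).$$

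Step 2 is to translate the generalized weights into this language. Using the reformulation of $d_r$ given in the text for the equal-$m$ case, and the fact that the row-type spaces $\V_{\mathcal{L}_i}$ are the canonical optimal rank-metric anticodes saturating $\dim=m\cdot\maxrk$, one rewrites $d_r(\Cc)=\min\{\dim\mathcal{L}:\mathcal{L}\in\mathcal{P}(\mathbb{S}),\ \dim\Cc(\mathcal{L})\geq r\}$. I then define the profile $f_\Cc(t)=\max\{\dim\Cc(\mathcal{L}):\dim\mathcal{L}\leq t\}$, which is non-decreasing with $f_\Cc(0)=0$ and $f_\Cc(n)=\dim\Cc$, and whose jumps $j_\Cc(t):=f_\Cc(t)-f_\Cc(t-1)$ lie in $\{0,1,\ldots,m\}$ since a unit increase of $\dim\mathcal{L}$ enlarges $\V_\mathcal{L}$ by exactly $m$. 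In this language $d_r(\Cc)=\min\{t:f_\Cc(t)\geq r\}$, so each $t\in[n]$ occurs in $\{d_r(\Cc)\}_r$ with multiplicity $j_\Cc(t)$. Taking the max of the duality identity over $\mathcal{L}$ with $\dim\mathcal{L}=n-s$ delivers the dual profile relation
$$f_\Cd(s)=f_\Cc(n-s)+ms-\dim\Cc,$$
and hence $j_\Cd(s)+j_\Cc(n-s+1)=m$ for all $s\in[n]$.

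Step 3 is the combinatorial endgame. Unpacking definitions, $t\in D_r(\Cd)$ iff the interval $(f_\Cd(t-1),f_\Cd(t)]$ contains an integer congruent to $r\pmod m$, while, setting $u=n+1-t$, $t\in\overline{D}_{r+\dim\Cc}(\Cc)$ iff $(f_\Cc(u-1),f_\Cc(u)]$ contains an integer congruent to $r+\dim\Cc\pmod m$. Reducing the profile relation modulo $m$ gives $f_\Cd(t-1)\equiv f_\Cc(u)-\dim\Cc$ and $f_\Cd(t)\equiv f_\Cc(u-1)-\dim\Cc\pmod m$, so the residues of the first interval and the residues of the second interval shifted by $-\dim\Cc$ are two blocks of, respectively, $m-j_\Cc(u)$ and $j_\Cc(u)$ consecutive classes in $\Z/m\Z$ that together partition all of $\Z/m\Z$. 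Therefore $r$ hits the first interval iff $r+\dim\Cc$ misses the second, which is precisely $D_r(\Cd)=[n]\setminus\overline{D}_{r+\dim\Cc}(\Cc)$.

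The main obstacle is Step~2: when some $n_i$ equals $m$, column-type rank-metric anticodes (and, over $\F_2$, further optimal anticodes pointed out in the text) also compete in the infimum defining $d_r$, so the row-support characterization via $\mathcal{P}(\mathbb{S})$ is not automatic. One has to reduce to it component-by-component by invoking the transpose isometries provided by Theorem~\ref{thm:isom}, which match column-type optimal anticodes in a square component with row-type ones after transposition, and then verify that the duality identity of Step~1 is preserved under such component-wise transposition. Once this characterization is secured, the modular arithmetic in Step~3 is essentially mechanical.
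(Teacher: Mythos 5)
Your Steps 1 and 3 are sound, and the profile-function packaging is attractive: the chapter itself gives no proof (it cites the source), and the standard route there follows Wei's two-step strategy (disjointness of the two sets, then a cardinality count using the monotonicity and growth properties as in \eqref{eqn:growthgenwts}), whereas your complementary-cyclic-interval observation delivers both halves in one stroke. The genuine gap is where you locate it, in Step 2, but it is more serious than your proposed patch acknowledges. The identity $d_r(\Cc)=\min\{\dim(\mathcal{L}):\mathcal{L}\in\mathcal{P}(\mathbb{S}),\ \dim(\Cc\cap\mathcal{V}_{\mathcal{L}})\geq r\}$ is precisely the statement $d_r=d^{Supp}_r$, and this fails whenever some $n_i=m>1$: for $\ell=1$ and $\Cc=\{M\in\F_q^{m\times m}:\colsp(M)\subseteq L_0\}$ with $\dim(L_0)=1$ one has $d_m(\Cc)=\wt(\Cc)=1$, while the row-support minimum equals $m$, since $\Cc\cap\mathcal{V}_{\mathcal{L}}$ has dimension $\dim(\mathcal{L})$. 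Your transposition fix does not close this gap as described: the set of square components in which the minimizing anticode is of column type may vary with $r$, and a \emph{single} isometry would have to convert the minimizers for every $r$, and simultaneously for both $\Cc$ and $\Cc^\perp$, into row type. You give no argument that such a uniform choice exists, and the theorem is a statement about all generalized weights of both codes at once, so an $r$-dependent equivalence is not enough.

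The clean repair is to drop row-support spaces altogether and run your Steps 1--3 over the class of products $\A=\A_1\times\cdots\times\A_\ell$ of optimal rank-metric anticodes, which is exactly the class appearing in the definitions of $\wt$ and $d_r$. This class is closed under trace duality component by component (the dual of a row-type anticode is row-type, of a column-type anticode is column-type), and satisfies $\dim(\A)=m\maxsrk(\A)$ and $\maxsrk(\A^\perp)=n-\maxsrk(\A)$; moreover shrinking the defining subspace of one component by one dimension lowers $\maxsrk$ by $1$ and $\dim$ by exactly $m$, which is all that your jump bound $0\leq j(t)\leq m$ requires. Setting $f_\Cc(t)=\max\{\dim(\Cc\cap\A):\maxsrk(\A)\leq t\}$, the identity $d_r(\Cc)=\min\{t:f_\Cc(t)\geq r\}$ holds by definition, your Step 1 computation gives $\dim(\Cd\cap\A^\perp)=\dim(\Cc\cap\A)-\dim(\Cc)+m(n-\maxsrk(\A))$ verbatim, the profile relation $f_{\Cd}(s)=f_\Cc(n-s)+ms-\dim(\Cc)$ follows because $\A\mapsto\A^\perp$ is a bijection between anticodes of maximum sum-rank $n-s$ and those of maximum sum-rank $s$, and Step 3 then goes through unchanged. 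With this substitution your argument is correct and, in my view, a genuinely more streamlined proof than the counting argument.
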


If we remove the assumption that $m_1=\ldots=m_\ell=m$, then it is no longer true in general that the generalized weights of a code determine those of its dual. This is shown in the next example.
\begin{example}\label{example:duality}
Consider the codes $\Cc_1,\Cc_2\subseteq\F_2^{2\times 2}\times \F_2$ given by
\begin{equation*}
        \Cc_1=\left\{\left(  \begin{pmatrix} 0 & 0 \\0 &0\end{pmatrix},a\right) : a \in \mathbb{F}_2\right\}\text{ and }\Cc_2=\left\{\left(  \begin{pmatrix} a & 0 \\0 &0\end{pmatrix},0\right) : a \in \mathbb{F}_2\right\}.
\end{equation*}
The corresponding duals are
\begin{equation*}
        \Cc_1^{\perp}=\left\{\left(  \begin{pmatrix} a & b \\c &d\end{pmatrix},0\right) : (a,b,c,d) \in \mathbb{F}_2^4\right\}\text{ and }\Cc_2^{\perp}=\left\{\left(  \begin{pmatrix} 0 & b \\c &d\end{pmatrix},a\right) : (a,b,c,d) \in \mathbb{F}_2^4\right\}.
 \end{equation*}
One has that $d_1(\Cc_1)=d_1(\Cc_2)=1$, while $d_4(\Cc_1^\perp) = 2$ and $d_4(\Cc_2^\perp) = 3$.
\end{example}

The following alternative definition of generalized weights was proposed and discussed in~\cite{ourpaper}.
\begin{definition}[{\cite[Appendix]{ourpaper}}]\label{definition:suppgenwei}
Let $\Cc\subseteq\mathbb{M}$ be a code.  The $r$-th generalized row-support weight of $\Cc$ is given by
\begin{equation*}
d^{Supp}_r(\Cc) = \min \left\{\dim(\mathcal{L}) :\,  \mathcal{L}\in\mathcal{P}(\mathbb{S})\text{ and } \dim \left( \Cc \cap \mathcal{V}_{\mathcal{L}}\right) \geq r\right\}
\end{equation*}
for $r\in[\dim(\mathcal{C})]$.
\end{definition}
Generalized row-support weights are an extension of the generalized weights introduced in~\cite{MM18} for rank-metric codes. They have a practical relevance in multishot linear network coding, since they measure information leakage to a wire-tapper. We refer the interested reader to~\cite[Appendix]{ourpaper} for more detail. Notice that generalized row-support weights are not in general invariant under equivalence. Indeed, if $n_i=m_i$ for some $i\in[\ell]$, a transposition in the $i$-th coordinate does not map a row-support space to a row-support space in general. However, the next proposition shows that in many cases this definition coincides with Definition~\ref{definition:genwei}.
\begin{proposition}[{\cite[Appendix]{ourpaper}}]
Let $\Cc\subseteq\MM$  be a code. If for all $i\in[\ell]$ we have $m_i> n_i$ or $ m_i=n_i= 1 $, then $ d^{Supp}_r(\Cc) = d_r(\Cc)$ for all $r\in[\dim(\Cc)]$.
\end{proposition}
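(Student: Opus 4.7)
The plan is to reduce the definition of $d_r(\Cc)$, which minimizes over products of optimal rank-metric anticodes, to a minimization over row-support spaces, so that it literally matches the definition of $d_r^{Supp}(\Cc)$. The key structural input is the classification of optimal rank-metric anticodes in each factor $\F_q^{m_i\times n_i}$ under the hypothesis.

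First I would recall (or cite) the classification of optimal rank-metric anticodes in $\F_q^{m_i\times n_i}$. When $m_i>n_i$, the bound $\dim(\A_i)=m_i\maxrk(\A_i)$ forces $\A_i$ to be of row-restricted type, that is, $\A_i=\{M\in\F_q^{m_i\times n_i}:\rowsp(M)\subseteq\mathcal{L}_i\}$ for some $\mathcal{L}_i\subseteq\F_q^{n_i}$ with $\dim(\mathcal{L}_i)=\maxrk(\A_i)$; column-restricted anticodes in this case only have dimension $n_i\maxrk(\A_i)<m_i\maxrk(\A_i)$, so they fail the bound strictly, and the remaining potential families in the classification are excluded for the same dimensional reason. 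The case $m_i=n_i=1$ is trivial: the only subspaces of $\F_q$ are $\{0\}$ and $\F_q$, both of the stated row-restricted form. Hence, under the hypothesis, every admissible product $\A=\A_1\times\dots\times\A_\ell$ in the definition of $\wt$ equals $\mathcal{V}_{\mathcal{L}}$ for $\mathcal{L}=\mathcal{L}_1\times\dots\times\mathcal{L}_\ell\in\mathcal{P}(\mathbb{S})$, and moreover $\maxsrk(\A)=\sum_{i=1}^\ell \dim(\mathcal{L}_i)=\dim(\mathcal{L})$.

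Next I would translate the weight of a subcode into support language. Given a subcode $\mathcal{D}\subseteq\Cc$, the containment $\mathcal{D}\subseteq\mathcal{V}_{\mathcal{L}}$ is equivalent to $\supp(\mathcal{D})\subseteq\mathcal{L}$, so the minimum of $\dim(\mathcal{L})$ subject to $\mathcal{D}\subseteq\mathcal{V}_{\mathcal{L}}$ is attained at $\mathcal{L}=\supp(\mathcal{D})$. Combining with the previous paragraph, this yields the identity
\begin{equation*}
\wt(\mathcal{D})=\dim(\supp(\mathcal{D})).
\end{equation*}

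Finally I would establish the two inequalities by the standard duality between subcodes and row-support spaces. For $d_r^{Supp}(\Cc)\leq d_r(\Cc)$, take a subcode $\mathcal{D}\subseteq\Cc$ with $\dim(\mathcal{D})\geq r$ realizing $d_r(\Cc)$; then $\mathcal{L}:=\supp(\mathcal{D})\in\mathcal{P}(\mathbb{S})$ satisfies $\mathcal{D}\subseteq\Cc\cap\mathcal{V}_{\mathcal{L}}$, so $\dim(\Cc\cap\mathcal{V}_{\mathcal{L}})\geq r$ and $\dim(\mathcal{L})=\wt(\mathcal{D})=d_r(\Cc)$. For the reverse inequality, given $\mathcal{L}\in\mathcal{P}(\mathbb{S})$ realizing $d_r^{Supp}(\Cc)$, set $\mathcal{D}:=\Cc\cap\mathcal{V}_{\mathcal{L}}$; then $\dim(\mathcal{D})\geq r$ and $\supp(\mathcal{D})\subseteq\mathcal{L}$, so $\wt(\mathcal{D})=\dim(\supp(\mathcal{D}))\leq\dim(\mathcal{L})=d_r^{Supp}(\Cc)$, proving $d_r(\Cc)\leq d_r^{Supp}(\Cc)$. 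The only nontrivial part of the argument is the classification input in Step 1; the rest is a routine dictionary translation once all optimal anticodes are known to be row-supported.
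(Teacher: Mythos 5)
Your proof is correct and takes the same route as the argument the paper defers to the Appendix of \cite{ourpaper}: under the hypothesis the classification of optimal rank-metric anticodes forces every factor $\A_i$ to be a row-support space (the square case $m_i=n_i\geq 2$, where transposed anticodes would spoil this, being excluded), which yields $\wt(\mathcal{D})=\dim(\supp(\mathcal{D}))$ and turns both $d_r$ and $d_r^{Supp}$ into the same minimization. The only input you should cite explicitly rather than sketch is that classification (\cite{Rav16}), which holds for all $q$ in the rank-metric factors; everything else in your write-up is the routine dictionary you describe.
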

Another important family of invariants of sum-rank metric codes is the sum-rank distribution.
\begin{definition}[\protect{\cite[Definition 2.8]{BGRMSRD}}]
Let $\Cc\subseteq\MM$ be a sum-rank metric code. For a non-negative integer $r$, let
\begin{equation*}
W_r(\Cc)=\left\lvert\{C\in\Cc:\srk(\Cc)=r\}\right\rvert.
\end{equation*} 
The sequence $\left(W_r(\Cc)\right)_{r>0}$ is the sum-rank distribution of $\Cc$. 
\end{definition}
The notion of sum-rank distribution generalizes that of weight distribution for linear block codes and rank-metric codes. In contrast to what happens in those two cases, we do not have MacWilliams Identities for sum-rank metric codes, as was first observed in \cite[Example 5.2]{BGRMSRD}.
The following is a simple example of this phenomenon.
\begin{example}
Let $\Cc_1,\Cc_2\subseteq\F_2^{2\times 2}\times \F_2$ be as in Example~\ref{example:duality}. Then, $\Cc_1$ and $\Cc_2$ have the same sum-rank distribution. Indeed, $W_0(\Cc_1)=W_0(\Cc_2)=W_1(\Cc_1)=W_1(\Cc_2)=1$ and $W_1(\Cc_i)=W_1(\Cc_i)=0$ for $i> 1$.  Instead for $\Cc_1^\perp,\Cc_2^\perp$ we have $W_1(\Cc_1^\perp)=9$ and $W_1(\Cc_2^\perp)=6$.
\end{example}

It is nevertheless possible to define other partitions of a sum-rank metric code for which there exists a relation between the distribution of a code and the one of its dual.
\begin{definition}[\protect{\cite[Definition 2.8]{BGRMSRD}}]
 For a vector $v\in\mathbb{Z}_{\geq0}^{\ell}$, let
\begin{equation*}
W_v(\Cc)=\left\lvert\{C\in\Cc:\rk(C_i)=v_i\text{ for all }i\in[\ell]\}\right\rvert.
\end{equation*} 
The list $\left(W_v(\Cc)\right)_{v\in\mathbb{Z}_{\geq0}^{\ell}}$ is called the rank-list distribution of $\Cc$.
\end{definition}
In the next example we compute the rank-list distribution of the codes of Example~\ref{example:duality}.
\begin{example}
Let $\Cc_1,\Cc_2\subseteq\F_2^{2\times 2}\times \F_2$ be as in Example~\ref{example:duality}. The rank-list distribution of $\Cc_1$ is given by
$W_{(0,0)}(\Cc_1)=W_{(0,1)}(\Cc_1)=1$ and $W_u(\Cc_1)=0$ when $u\neq(0,0),(0,1)$. The rank-list distribution of $\Cc_1^{\perp}$ is $W_{(0,0)}(\Cc_1^{\perp})=1$, $W_{(1,0)}(\Cc_1^{\perp})=9$, $W_{(2,0)}(\Cc_1^{\perp})=6$ and $0$ in all the remaining cases. The rank-list distribution of $\Cc_2$ is $W_{(0,0)}(\Cc_1)=W_{(1,0)}(\Cc_1)=1$ and $W_u(\Cc_1)=0$ when $u\neq(0,0),(1,0)$. The rank-list distribution of $\Cc_2^{\perp}$ is $W_{(0,0)}(\Cc_2^{\perp})=1$, $W_{(1,0)}(\Cc_2^{\perp})=5$, $W_{(0,1)}(\Cc_2^{\perp})=1$, $W_{(2,0)}(\Cc_2^{\perp})=2$, $W_{(1,1)}(\Cc_2^{\perp})=5$, $W_{(2,1)}(\Cc_2^{\perp})=2$  and $0$ in the remaining cases.
\end{example}
The next theorem is the analogue of the MacWilliams Identities for the rank-list distribution.
Recall that the $q$-ary Gaussian coefficient of $a,b\in\mathbb{Z}$ is defined as
\begin{equation*}
\gauss{a}{b}_q=\begin{cases} 0&\text{if }a<0,\,b<0,\text{ or }b>a\\
1&\text{if }b=0\text{ and }a\geq 0\\
\frac {(q^a-1)(q^{a-1}-1)\cdots(q^{a-b+1}-1)}{(q^b-1)(q^{b-1}-1)\cdots(q-1)}&\text{otherwise}.\end{cases}
\end{equation*}
\begin{theorem}[\protect{\cite[Theorem 5.5]{BGRMSRD}}]
Let $\Cc\subseteq\MM$ be a code. Then
\begin{equation*}
W_v(\Cc^{\perp})= \frac {1}{\lvert \Cc\rvert}\sum_{u\in\mathbb{Z}_{\geq0}^{\ell}}W_u(\Cc)\sum_{w\leq v}q^{\sum_{i=1}^{\ell}m_iw_i}\prod_{i=1}^{\ell}(-1)^{v_i-w_i}q^{\binom{v_i-w_i}{2}}\gauss{n_i-u_i}{w_i}_q\gauss{n_i-w_i}{v_i-w_i}_q
\end{equation*}
for $v\in\mathbb{Z}_{\geq0}^{\ell}$.
\end{theorem}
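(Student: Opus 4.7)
The plan is to derive the identity from an abstract Poisson summation (MacWilliams-type) argument, reducing the problem to the $\ell=1$ rank-metric case (due to Delsarte) via the product structure of $\MM$.

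First, fix a nontrivial additive character $\chi$ of $\F_q$ and define on $\MM$ the character $\chi_D(C)=\chi(\mathrm{Tr}(D,C))=\prod_{i=1}^{\ell}\chi(\mathrm{tr}(D_iC_i^t))$. Since $\mathrm{Tr}$ is a non-degenerate symmetric bilinear form, $\Cc^{\perp}$ is exactly the annihilator of $\Cc$ under these characters. The standard Poisson summation formula then yields, for any $f:\MM\to\mathbb{C}$,
\begin{equation*}
\sum_{C\in\Cc^{\perp}}f(C)=\frac{1}{\lvert\Cc\rvert}\sum_{D\in\Cc}\widehat{f}(D),\qquad\widehat{f}(D)=\sum_{C\in\MM}f(C)\chi_D(C).
\end{equation*}

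Next, I would take $f$ to be the indicator of rank-list $v$, namely $f(C)=\prod_{i=1}^{\ell}\mathbf{1}_{[\rk(C_i)=v_i]}$. The left-hand side of Poisson is then exactly $W_v(\Cc^{\perp})$. Because both $f$ and $\chi_D$ factor as products over the $\ell$ blocks, the Fourier transform factors:
\begin{equation*}
\widehat{f}(D)=\prod_{i=1}^{\ell}\widehat{g_{v_i}}(D_i),\qquad g_{v_i}(X)=\mathbf{1}_{[\rk(X)=v_i]}\text{ on }\F_q^{m_i\times n_i}.
\end{equation*}
In particular, $\widehat{f}(D)$ depends on $D$ only through the rank-list $u=(\rk(D_1),\dots,\rk(D_\ell))$, so grouping the sum on the right-hand side of Poisson by rank-list reproduces $\tfrac{1}{|\Cc|}\sum_{u}W_u(\Cc)\,\widehat{f}(D_u)$ for any representative $D_u$ of rank-list $u$.

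It then remains to compute $\widehat{g_{v_i}}(D_i)$ for a single rank-metric block. This is precisely the classical Delsarte computation for bilinear forms schemes: one expands $g_{v_i}(X)=\sum_{w_i\le v_i}(-1)^{v_i-w_i}q^{\binom{v_i-w_i}{2}}\gauss{n_i-w_i}{v_i-w_i}_q\,h_{w_i}(X)$, where $h_{w_i}(X)$ counts subspaces of $\rowsp(X)$ of dimension $w_i$ (an instance of Möbius inversion in the subspace lattice). One then uses the identity $\sum_{X:\mathcal{L}\subseteq\rowsp(X)}\chi(\mathrm{tr}(D_iX^t))=q^{m_iw_i}\mathbf{1}_{[\mathcal{L}\subseteq\ker(D_i)]}$ together with the fact that the number of $w_i$-dimensional subspaces of $\ker(D_i)\subseteq\F_q^{n_i}$ is $\gauss{n_i-u_i}{w_i}_q$ when $\rk(D_i)=u_i$. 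Assembling these ingredients yields
\begin{equation*}
\widehat{g_{v_i}}(D_i)=\sum_{w_i\le v_i}(-1)^{v_i-w_i}q^{\binom{v_i-w_i}{2}}q^{m_iw_i}\gauss{n_i-u_i}{w_i}_q\gauss{n_i-w_i}{v_i-w_i}_q.
\end{equation*}
Multiplying these across $i\in[\ell]$ and pulling the product $\prod_i q^{m_iw_i}=q^{\sum_i m_iw_i}$ out of the product gives exactly the claimed formula.

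The only delicate step is the single-block Fourier computation (essentially Delsarte's Krawtchouk identity for the bilinear forms scheme); once this is granted, the sum-rank version is a formal tensorization of the rank-metric identity, and the multi-index sum over $w\leq v$ arises simply as the product of the individual sums over $w_i\leq v_i$.
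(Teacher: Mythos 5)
This chapter states the theorem as a citation of \cite[Theorem 5.5]{BGRMSRD} and gives no proof of its own, so I am comparing your argument with the proof in the cited source. Your overall architecture is sound and is a genuinely different (though classical) route: you use Poisson summation over the character group of $\MM$, exploit the fact that both the rank-list indicator and the characters $\chi_D$ factor over the $\ell$ blocks, and reduce to Delsarte's eigenvalue computation for a single bilinear-forms block. Byrne, Gluesing-Luerssen and Ravagnani instead work character-free: they first prove the support-distribution identity (their Theorem 5.4, also quoted later in this chapter) by M\"obius inversion over the lattice $\mathcal{P}(\mathbb{S})$ together with the dimension formula relating $\Cc\cap\mathcal{V}_{\mathcal{L}}$ and $\Cc^\perp\cap\mathcal{V}_{\mathcal{L}}^\perp$, and then obtain the rank-list identity by summing over all $\mathcal{L}$ with $\dim(\mathcal{L}_i)=v_i$. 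Your route is more self-contained and makes the ``tensorization'' over blocks completely transparent; theirs yields the finer support-level statement along the way.

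There is, however, one step in your sketch that is wrong as written and needs repair: you have the two set containments pointing in incompatible directions. If $h_{w}(X)$ counts the $w$-dimensional subspaces \emph{of} $\rowsp(X)$, then $h_w(X)=\gauss{\rk(X)}{w}_q$, and the expansion $\mathbf{1}_{[\rk(X)=v]}=\sum_{w\le v}(-1)^{v-w}q^{\binom{v-w}{2}}\gauss{n-w}{v-w}_q h_w(X)$ is false (try $n=2$, $v=\rk(X)=1$: the right-hand side equals $-q$). Moreover the set $\{X:\mathcal{L}\subseteq\rowsp(X)\}$ is not a subspace, so its character sum is not $q^{mw}\mathbf{1}_{[\cdots]}$. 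The correct version takes $h_w(X)$ to be the number of $w$-dimensional subspaces \emph{containing} $\rowsp(X)$, i.e.\ $\gauss{n-\rk(X)}{w-\rk(X)}_q$, for which the displayed inversion identity does hold; the corresponding double count then runs over the row-support spaces $\mathcal{V}_{\mathcal{L}}=\{X:\rowsp(X)\subseteq\mathcal{L}\}$, which are genuine $q^{m_iw_i}$-element subspaces whose character sums give $q^{m_iw_i}\mathbf{1}_{[\mathcal{L}\subseteq\rowsp(D_i)^{\perp}]}$, and the count of admissible $\mathcal{L}$ is $\gauss{n_i-u_i}{w_i}_q$ as you state. With the containments straightened out, your final formula for $\widehat{g_{v_i}}(D_i)$ is correct and the theorem follows exactly as you assemble it.
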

The next theorem is the sum-rank metric analogue of the binomial moments of MacWilliams identities.
\begin{theorem}[\protect{\cite[Theorem 5.6]{BGRMSRD}}]
Let $\Cc\subseteq\MM$ be a code. Then
\begin{equation*}
\sum_{u\in\mathbb{Z}_{\geq0}^{\ell}}W_u(\Cc)\prod_{i=1}^{\ell}\gauss{n_i-u_i}{v_i-u_i}_q=\frac{\lvert\Cc\rvert}{q^{\sum_{i=1}^{\ell}m_i(n_i-v_i)}}\sum_{u\in\mathbb{Z}_{\geq0}^{\ell}}W_u(\Cc^{\perp})\prod_{i=1}^{\ell}\gauss{n_i-u_i}{v_i}_q
\end{equation*}
for $v\in\mathbb{Z}_{\geq0}^{\ell}$.
\end{theorem}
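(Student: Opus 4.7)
The plan is to interpret the left-hand side combinatorially as a count of incidences between codewords of $\Cc$ and product subspaces of prescribed dimensions, then to pivot to $\Cc^\perp$ via the standard duality relating $\Cc \cap \mathcal{V}_{\mathcal{L}}$ and $\Cc^\perp \cap \mathcal{V}_{\mathcal{L}^\perp}$, and finally to re-interpret the resulting expression using the same double counting applied to the dual.

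First, I would use the fact that for any fixed $u_i$-dimensional subspace of $\F_q^{n_i}$, the Gaussian coefficient $\gauss{n_i - u_i}{v_i - u_i}_q$ counts the $v_i$-dimensional subspaces of $\F_q^{n_i}$ that contain it. Grouping codewords $C\in\Cc$ by the tuple $u=(\rk(C_1),\dots,\rk(C_\ell))$ and summing yields
\begin{equation*}
\sum_{u\in\Z_{\geq 0}^\ell} W_u(\Cc)\prod_{i=1}^\ell \gauss{n_i-u_i}{v_i-u_i}_q \;=\; \sum_{\mathcal{L}} |\Cc(\mathcal{L})|,
\end{equation*}
where the sum on the right runs over all $\mathcal{L}=\mathcal{L}_1\times\cdots\times\mathcal{L}_\ell\in\mathcal{P}(\mathbb{S})$ with $\dim\mathcal{L}_i=v_i$ for every $i$.

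Next, I would verify that $\mathcal{V}_{\mathcal{L}}^{\perp}=\mathcal{V}_{\mathcal{L}^{\perp}}$, where $\mathcal{L}^{\perp}:=\mathcal{L}_1^{\perp}\times\cdots\times\mathcal{L}_\ell^{\perp}$. This follows from a direct computation, observing that $\mathrm{tr}(D_iC_i^t)$ decomposes as a sum of inner products of rows of $D_i$ with rows of $C_i$, and then comparing dimensions. Combined with the standard identity
\begin{equation*}
\dim(\Cc\cap\mathcal{V}_{\mathcal{L}}) \;=\; \dim\Cc+\dim\mathcal{V}_{\mathcal{L}}-\dim\MM+\dim(\Cc^{\perp}\cap\mathcal{V}_{\mathcal{L}^{\perp}}),
\end{equation*}
and the fact that $|\mathcal{V}_{\mathcal{L}}|=q^{\sum_i m_iv_i}$ while $|\MM|=q^{\sum_i m_in_i}$, this yields
\begin{equation*}
|\Cc(\mathcal{L})| \;=\; \frac{|\Cc|}{q^{\sum_i m_i(n_i-v_i)}}\;|\Cc^{\perp}(\mathcal{L}^{\perp})|.
\end{equation*}

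Summing over $\mathcal{L}$ with $\dim\mathcal{L}_i=v_i$ and reindexing via the bijection $\mathcal{L}\mapsto \mathcal{L}^{\perp}$, which sends such $\mathcal{L}$ to products $\mathcal{M}$ with $\dim\mathcal{M}_i=n_i-v_i$, the left-hand side of the theorem becomes
\begin{equation*}
\frac{|\Cc|}{q^{\sum_i m_i(n_i-v_i)}}\sum_{\mathcal{M}:\,\dim\mathcal{M}_i=n_i-v_i}|\Cc^{\perp}(\mathcal{M})|.
\end{equation*}
Applying the first step in reverse, now to $\Cc^{\perp}$ with target dimensions $n_i-v_i$, the inner sum equals $\sum_u W_u(\Cc^{\perp})\prod_i \gauss{n_i-u_i}{n_i-v_i-u_i}_q$. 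The symmetry $\gauss{n_i-u_i}{n_i-v_i-u_i}_q=\gauss{n_i-u_i}{v_i}_q$ of the Gaussian coefficients then delivers exactly the right-hand side of the statement.

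The routine-but-delicate step will be verifying $\mathcal{V}_{\mathcal{L}}^{\perp}=\mathcal{V}_{\mathcal{L}^{\perp}}$ with respect to the trace bilinear form defining the dual; once this identification is in place, the rest is a transparent two-step double count sandwiching the general intersection-dimension identity.
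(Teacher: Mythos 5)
Your argument is correct: the incidence double count, the identity $\mathcal{V}_{\mathcal{L}}^{\perp}=\mathcal{V}_{\mathcal{L}^{\perp}}$ (which holds because $\mathrm{tr}(D_iC_i^t)$ pairs corresponding rows via the standard inner product, plus a dimension count), and the intersection--dimension formula fit together exactly as you describe, and the edge cases $v_i<u_i$ or $v_i>n_i$ are handled automatically by the vanishing conventions for the Gaussian coefficients. The chapter itself gives no proof, only the citation to \cite[Theorem 5.6]{BGRMSRD}, and your route is essentially the standard one used there and in the $\ell=1$ case of \cite[Theorem 31]{Rav16a}, so there is nothing to add.
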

Notice that for $\ell=1$ the previous equality becomes
\begin{equation*}
\sum_{i=0}^{r}W_i(\Cc)\gauss{n-i}{r-i}_q=\frac{\lvert\Cc\rvert}{q^{m(n-r)}}\sum_{i=0}^{n-r}W_i(\Cc^{\perp})\gauss{n-i}{r}_q,
\end{equation*} 
for $r\in[n]$. This is exactly the identity proved in~\cite[Theorem 31]{Rav16a} for rank metric codes. Moreover in the case $m_1=\dots=m_{\ell}=1$ we obtain 
\begin{equation*}
 \sum_{\substack{u\in\{0,1\}^{\ell}\\ u\leq v}}W_u(\Cc)=\frac{\lvert \Cc\rvert}{q^{\ell-\wt(v)}}\sum_{\substack{u\in\{0,1\}^{\ell}\\u\leq(1,\dots,1)-v}}W_{u}(\Cc^{\perp})
\end{equation*}
for $v\in\{0,1\}^{\ell}$.
Summing over all $v$ with $\wt(v)=g$, one finds 
\begin{equation*}
\sum_{i=0}^{g}W_i(\Cc)\binom{\ell-i}{\ell-g}=\frac{\lvert \Cc\rvert}{q^{\ell-g}}\sum_{i=0}^{\ell-g}W_i(\Cc^{\perp})\binom{\ell-i}{g},
\end{equation*}
which is the equation given in~\cite[equation (M1), page 257]{huffman_pless_2003} for linear block codes. 

We conclude this section by discussing an additional definition of distribution for sum-rank metric codes.
\begin{definition}[\protect{\cite[Definition 2.8]{BGRMSRD}}]
Let $\Cc\subseteq\MM$ be a code. For $\mathcal{L}\in\mathcal{P}(\mathbb{S})$, let 
\begin{equation*}
W_{\mathcal{L}}(\Cc)=\left\lvert\{C\in\Cc:\supp(C)=\mathcal{L}\}\right\rvert.
\end{equation*} 
The list $\left(W_{\mathcal{L}}(\Cc)\right)_{\mathcal{L}\in\mathcal{P}(\mathbb{S})}$ is the support distribution of $\Cc$.
\end{definition}
Similarly to the rank-list distribution, the support distribution of a code determines that of its dual.
\begin{theorem}[\protect{\cite[Theorem 5.4]{BGRMSRD}}]
Let $\Cc\subseteq\MM$ be a code. Let $\mathcal{L}=\mathcal{L}_1\times\dots\times\mathcal{L}_{\ell}\in\mathcal{P}(\mathbb{S})$ and $v=(\dim(\mathcal{L}_1),\dots,\dim(\mathcal{L}_{\ell}))$. Then
\begin{equation*}
W_{\mathcal{L}}(\Cc^{\perp})=\frac{1}{\lvert\Cc\rvert}\sum_{\mathcal{H}\in\mathcal{P}(\MM)}W_{\mathcal{H}}(\Cc)\sum_{u\leq v}q^{\sum_{i=1}^{\ell}m_iu_i}\prod_{i=1}^{\ell}(-1)^{v_i-u_i}q^{\binom{v_i-u_i}{2}}\gauss{\dim(\mathcal{H}_i\cap\mathcal{L}_i)}{u_i}_q.
\end{equation*}
\end{theorem}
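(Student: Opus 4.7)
The plan is to prove the identity by the standard Plancherel/character-duality technique on the finite abelian group $\MM$. Fix a non-trivial additive character $\chi:\F_q\to\mathbb{C}^*$. Since $\mathrm{Tr}$ is a non-degenerate symmetric bilinear form and $\Cc^{\perp}$ is exactly its orthogonal of $\Cc$, we have the Poisson summation identity
\begin{equation*}
\sum_{D\in\Cc^{\perp}}f(D)\;=\;\frac{1}{|\Cc|}\sum_{C\in\Cc}\widehat{f}(C),\qquad \widehat{f}(C)=\sum_{D\in\MM}f(D)\,\chi(\mathrm{Tr}(C,D)),
\end{equation*}
valid for every $f:\MM\to\mathbb{C}$. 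I would apply this with $f=\mathbf{1}_{\{D\in\MM:\,\supp(D)=\mathcal{L}\}}$, so that the left-hand side is precisely $W_{\mathcal{L}}(\Cc^{\perp})$, and then reduce the right-hand side to the stated form.

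To compute $\widehat{f}(C)$, I would use that the condition $\supp(D)=\mathcal{L}$ factors as $\rowsp(D_i)=\mathcal{L}_i$ over $i\in[\ell]$ and that $\mathrm{Tr}$ splits as a sum over blocks. Consequently $\widehat{f}(C)=\prod_{i=1}^{\ell}S_i$ with $S_i=\sum_{D_i:\,\rowsp(D_i)=\mathcal{L}_i}\chi(\mathrm{tr}(D_iC_i^t))$. To handle the ``exactly equal to $\mathcal{L}_i$'' condition I would apply Möbius inversion on the subspace lattice of $\mathcal{L}_i$, using the classical value $\mu(\mathcal{L}'_i,\mathcal{L}_i)=(-1)^{v_i-\dim\mathcal{L}'_i}q^{\binom{v_i-\dim\mathcal{L}'_i}{2}}$, to rewrite $S_i$ in terms of the easier sums $T_i(\mathcal{L}'_i)=\sum_{D_i:\,\rowsp(D_i)\subseteq\mathcal{L}'_i}\chi(\mathrm{tr}(D_iC_i^t))$ for $\mathcal{L}'_i\le\mathcal{L}_i$.

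Each $T_i(\mathcal{L}'_i)$ is a character sum of a linear functional $D_i\mapsto\mathrm{tr}(D_iC_i^t)$ over an $\F_q$-subspace of $\F_q^{m_i\times n_i}$ of dimension $m_i\dim\mathcal{L}'_i$. By orthogonality of characters, this sum equals $q^{m_i\dim\mathcal{L}'_i}$ precisely when the linear functional vanishes on that subspace, and is $0$ otherwise; a direct unpacking shows the vanishing condition is equivalent to $\mathcal{L}'_i$ being orthogonal to $\rowsp(C_i)=\mathcal{H}_i$. Grouping the surviving subspaces $\mathcal{L}'_i\le\mathcal{L}_i$ of dimension $u_i$ then yields the Gaussian binomial coefficient appearing in the identity, and plugging this back produces the inner sum $\sum_{u\le v}q^{\sum_i m_iu_i}\prod_i(-1)^{v_i-u_i}q^{\binom{v_i-u_i}{2}}\gauss{\dim(\mathcal{H}_i\cap\mathcal{L}_i)}{u_i}_q$.

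Finally, the output of step three depends on $C$ only through the datum $\mathcal{H}=\supp(C)$, so partitioning $\Cc$ by support converts $\sum_{C\in\Cc}\widehat{f}(C)$ into $\sum_{\mathcal{H}\in\mathcal{P}(\mathbb{S})}W_{\mathcal{H}}(\Cc)\cdot(\cdot\cdot\cdot)$, giving the claimed formula. The main obstacle I anticipate is the precise bookkeeping in step three: one must verify that the orthogonality condition singled out by the character sum corresponds exactly to subspaces contained in the intersection whose dimension appears in the Gaussian binomial of the statement, and that all sign and $q^{\binom{\cdot}{2}}$ contributions from the Möbius function combine cleanly with the size $q^{m_i\dim\mathcal{L}'_i}$ coming from character orthogonality. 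Sanity-checking on the trivial cases $\ell=1$ (which must recover the rank-metric MacWilliams identity of \cite{Rav16a}) and $m_1=\cdots=m_\ell=1$ (the Hamming case) would be my way of catching any dualization error.
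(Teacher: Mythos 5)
Your overall strategy --- Poisson summation over $(\MM,\mathrm{Tr})$, factorization of the indicator of $\{D:\supp(D)=\mathcal{L}\}$ over the $\ell$ blocks, M\"obius inversion on the subspace lattice of each $\mathcal{L}_i$, and character orthogonality on the row-support spaces --- is sound, and each of those steps is correct as you describe it (note that this chapter states the theorem without proof, citing \cite{BGRMSRD}, so there is no in-paper argument to compare against). The problem sits exactly in the bookkeeping you deferred. Carrying out your third step: $T_i(\mathcal{L}'_i)=q^{m_i\dim\mathcal{L}'_i}$ precisely when every row of $C_i$ is orthogonal to $\mathcal{L}'_i$, i.e.\ when $\mathcal{L}'_i\subseteq\rowsp(C_i)^\perp=\mathcal{H}_i^\perp$. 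Hence the surviving subspaces of dimension $u_i$ are the $u_i$-dimensional subspaces of $\mathcal{L}_i\cap\mathcal{H}_i^\perp$, so what your computation actually produces is
\[
\gauss{\dim(\mathcal{H}_i^\perp\cap\mathcal{L}_i)}{u_i}_q \qquad \text{rather than} \qquad \gauss{\dim(\mathcal{H}_i\cap\mathcal{L}_i)}{u_i}_q .
\]
These are genuinely different quantities, so your assertion that the grouping ``yields the Gaussian binomial coefficient appearing in the identity'' is unverified and, taken literally, false; the argument as written does not arrive at the displayed formula.

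The sanity check you yourself proposed settles which version is correct. Take $\Cc=0$, so $\Cc^\perp=\MM$ and only $\mathcal{H}=0$ contributes, with $W_0(\Cc)=1$. The left-hand side equals $\prod_{i=1}^\ell\prod_{j=0}^{v_i-1}(q^{m_i}-q^j)$ (the number of surjections of $\F_q^{m_i}$ onto each $\mathcal{L}_i$), and this is exactly what the $\mathcal{H}_i^\perp$ version returns via the $q$-binomial identity $\sum_{u=0}^{v}(-1)^{v-u}q^{\binom{v-u}{2}}\gauss{v}{u}_q x^u=\prod_{j=0}^{v-1}(x-q^j)$ evaluated at $x=q^{m_i}$; by contrast, with $\dim(\mathcal{H}_i\cap\mathcal{L}_i)=0$ the stated right-hand side collapses to $\prod_i(-1)^{v_i}q^{\binom{v_i}{2}}$, which is wrong (and can be negative). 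So your method is essentially complete once written out, but it proves the identity with $\mathcal{H}_i^\perp\cap\mathcal{L}_i$ in the Gaussian coefficient. To close the proof you must either reconcile this with the statement --- most plausibly the statement as transcribed should carry $\mathcal{H}_i^\perp$, or the source uses a dual-support convention --- or exhibit a step converting one expression into the other; no such pointwise conversion exists, as the $\Cc=0$ example shows.
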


\section{Bounds, optimal anticodes, and MSRD codes}\label{sect:extremal}
The concept of $r$-anticode was first introduced in~\cite[Definition 2.1]{BGROAC} in the wider context of non-linear sum-rank metric codes. Here, we recall the definition in the linear case.  
\begin{definition}
For $r\in[n]$, a sum-rank metric code $\Cc\subseteq\MM$ is an $r$-anticode if $\maxsrk(\Cc)\leq r$.
\end{definition}
A first upper bound on the dimension of an $r$-anticode appears in~\cite[Theorem 2.2]{BGROAC}. For $r\in[n]$, let $0\leq i<\ell$ and $0\leq\delta<n_{i+1}$ be such that $r=n_1+\dots+n_i+\delta$. Notice that $i$ and $\delta$ are uniquely determined by $r$. Then, if $\Cc$ is an $r$-anticode,
\begin{equation}\label{equation:oldanticodebound}
    \dim(\Cc)\leq m_1n_1+\dots+m_in_i+m_{i+1}\delta.
\end{equation}
A stronger bound was independently proved in~\cite[Theorem IV.1]{ourpaper}. Its proof is related to the theorem by Meshulam~\cite[Theorem~2]{Mes85} and its generalization to cosets of vector spaces by de Seguins Pazzis~\cite[Corollary 2]{Pazzis2010TheAP}.
In order to see that the next bound is tighter than~\eqref{equation:oldanticodebound}, it suffices to observe that $\sum_{i=1}^\ell m_i \rk(C_i)\leq m_1n_1+\dots+m_in_i+m_{i+1}\delta$ for every $C\in\MM$ with $\srk(C)=r$.
\begin{theorem}[Anticode Bound, {\cite[Theorem IV.1]{ourpaper}}]\label{theorem:anticodebound}
Let $\Cc\subseteq \MM$ be a code. Then 
\begin{equation*}
\dim(\Cc)\leq \max_{C\in\Cc} \left\{\sum_{i=1}^\ell m_i \rk(C_i)\right\}.
\end{equation*}
\end{theorem}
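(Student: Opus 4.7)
The plan is to prove a stronger statement by induction on $\ell$: for every $A\in\MM$,
$$\dim(\Cc)\leq\max_{C\in A+\Cc}\sum_{i=1}^\ell m_i\rk(C_i).$$
The theorem is the case $A=0$. The reason for tracking arbitrary cosets is that a naive split by projection yields two independent maxima (one on the image block, one on the kernel subcode), and these must ultimately be realized at a \emph{single} codeword of $\Cc$; cosets let us fix the representative achieved on the image before descending to the remaining blocks. The base case $\ell=1$ is precisely the coset extension of Meshulam's rank bound due to de Seguins Pazzis~\cite[Corollary 2]{Pazzis2010TheAP}: if $V\subseteq\F_q^{m\times n}$ is a subspace with $m\geq n$ and $A\in\F_q^{m\times n}$, then $\dim V\leq m\cdot\max_{M\in A+V}\rk(M)$.

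For the inductive step, I would project $\Cc$ onto its last component via $\pi_\ell:\MM\to\F_q^{m_\ell\times n_\ell}$. The image lies in the coset $A_\ell+\pi_\ell(\Cc)$; letting $r_\ell$ be the maximum rank achieved on this coset, the base case gives $\dim\pi_\ell(\Cc)\leq m_\ell r_\ell$. I then pick $C^*\in\Cc$ with $\rk(A_\ell+C^*_\ell)=r_\ell$ and set $\Cc^0=\ker(\pi_\ell|_\Cc)$, identified with a code in $\MM'=\F_q^{m_1\times n_1}\times\cdots\times\F_q^{m_{\ell-1}\times n_{\ell-1}}$. Applying the inductive hypothesis to the coset $(A+C^*)_{[1,\ldots,\ell-1]}+\Cc^0$ in $\MM'$ produces some $E'\in\Cc^0$ with $\sum_{i<\ell}m_i\rk((A+C^*+E')_i)\geq\dim\Cc^0$. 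The codeword $C^{**}=A+C^*+E'\in A+\Cc$ then has last block $A_\ell+C^*_\ell$ of rank $r_\ell$ (since $E'_\ell=0$), so
$$\sum_{i=1}^\ell m_i\rk(C^{**}_i)\geq\dim\Cc^0+m_\ell r_\ell\geq\dim\Cc^0+\dim\pi_\ell(\Cc)=\dim\Cc,$$
which closes the induction.

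The main obstacle is exactly this single-codeword requirement. A direct induction on $\ell$ (without cosets) only yields $\dim\Cc\leq m_\ell\maxrk(\pi_\ell(\Cc))+\max_{C\in\Cc^0}\sum_{i<\ell}m_i\rk(C_i)$, and there is no a priori reason the two separate maximizers can be realized at a common element of $\Cc$. The coset strengthening is what allows the maximizer on the $\ell$-th block to dictate a fixed shift under which the inductive call on $\MM'$ is made, so that both maxima are achieved at $C^{**}$ simultaneously. Once this packaging is in place, the de Seguins Pazzis bound handles each reduction cleanly.
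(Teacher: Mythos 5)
Your proof is correct and follows essentially the same route as the paper's: the argument in \cite{ourpaper} likewise strengthens the claim to arbitrary cosets $A+\Cc$, inducts on $\ell$ via the projection onto one block, and invokes de Seguins Pazzis's affine (coset) generalization of Meshulam's bound \cite[Corollary 2]{Pazzis2010TheAP} both for the base case $\ell=1$ and for the projected factor. The coset strengthening you identify as the key point is exactly the device used there.
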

If $m_1=\ldots=m_\ell=m$, then the Anticode Bound can be expressed more concisely as $$\dim(\Cc)\leq m\maxsrk(\Cc).$$
Starting from this formula, it is straightforward to verify that Theorem~\ref{theorem:anticodebound} is an extension of both the rank-metric and the Hamming-metric Anticode Bound. Indeed, if $\ell=1$ we immediately obtain $\dim(\Cc)\leq m\mathrm{maxrk}(\Cc)$, while if $m=1$ we have $\dim(\Cc)\leq\mathrm{maxwt}(\Cc)$, where the $\mathrm{maxwt}(\Cc)$ is the maximum Hamming weight of a codeword in $\Cc$.
\begin{definition}[{\cite[Definition IV.2]{ourpaper}}]\label{definition:optimalanticode}
A sum-rank metric code $\Cc\subseteq\MM$ is an optimal anticode if attains the bound in Theorem~\ref{theorem:anticodebound}, i.e.,
\begin{equation*}
\dim(\Cc)=\max_{C\in\Cc}\left\{ \sum_{i=1}^\ell m_i \rk(C_i)\right\}.
\end{equation*}
\end{definition} 
\begin{remark}
Optimal anticodes are defined in~\cite{BGROAC} as the $r$-anticodes that attain bound~\eqref{equation:oldanticodebound}. A complete classification of these codes is given in~\cite[Corollary 3.8]{BGROAC}. Since the bound in Theorem~\ref{theorem:anticodebound} is tighter than~\eqref{equation:oldanticodebound}, it is clear that the codes in this family are also optimal anticodes according to Definition~\ref{definition:optimalanticode}. The converse is not true, as the next example shows.
\end{remark}
\begin{example}
 Let $\MM=\F_2^{3\times 2}\times\F_2^{3\times 1}\times\F_2^{2\times 2}\times\F_2^{2\times 1}\times\F_2\times\F_2$. The code $\Cc=0\times0\times\F_2^{2\times2}\times 0\times 0\times 0$ is an optimal anticode of dimension $4$ and $\maxsrk(\Cc)=2$. However, $\Cc$ does not meet the bound in~\eqref{equation:oldanticodebound} since $\dim(\Cc)=4<6=m_1n_1$. Therefore, $\Cc$ is an optimal anticode according to Definition~\ref{definition:optimalanticode}, but not according to~\cite[Definition 2.3]{BGROAC}.
\end{example}

The next theorem provides a complete classification of optimal anticodes in the sum-rank metric as products of optimal anticodes in the rank and the Hamming metric.

\begin{theorem}[Classification of optimal anticodes, {\cite[Theorem IV.11]{ourpaper}}]\label{thm:classificationanticodes}
Let $k=0$ if $m_1=1$ and $k=\max\{i\in[\ell]\mid m_i>1\}$ otherwise. 
A code $\Cc\subseteq\MM$ is an optimal anticode if and only if there is an optimal anticode $\Cc^\prime\subseteq\F_q^{\ell-k}$ and optimal anticodes $\Cc_i\subseteq\F_q^{m_i\times n_i}$ for all $i\in[k]$ such that $\Cc=\prod_{i=1}^k\Cc_i\times\Cc^\prime$. 
\end{theorem}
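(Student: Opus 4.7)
The plan is to prove the two implications separately.

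For sufficiency, assume $\Cc=\prod_{i=1}^{k}\Cc_i\times\Cc'$ with each $\Cc_i$ an optimal rank-metric anticode in $\F_q^{m_i\times n_i}$ and $\Cc'$ an optimal Hamming anticode in $\F_q^{\ell-k}$. A direct computation gives
\[
\dim(\Cc)=\sum_{i=1}^{k}\dim(\Cc_i)+\dim(\Cc')=\sum_{i=1}^{k}m_i\maxrk(\Cc_i)+\maxwt(\Cc').
\]
Because $\Cc$ is a direct product, in each factor one can independently pick a codeword realizing the respective maximum, and concatenating these yields $C^*\in\Cc$ with $\rk(C^*_i)=\maxrk(\Cc_i)$ for every $i\in[k]$ and with the last $\ell-k$ entries of $C^*$ of Hamming weight $\maxwt(\Cc')$. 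Consequently $\max_{C\in\Cc}\sum_{i=1}^{\ell}m_i\rk(C_i)$ equals $\dim(\Cc)$, so $\Cc$ attains the bound of Theorem~\ref{theorem:anticodebound}.

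For the converse, let $\Cc\subseteq\MM$ be an optimal anticode. For $i\in[k]$ let $\pi_i\colon\MM\to\F_q^{m_i\times n_i}$ be the projection onto the $i$-th block and let $\pi_H\colon\MM\to\F_q^{\ell-k}$ be the projection onto the Hamming block. Set $\Cc_i=\pi_i(\Cc)$ and $\Cc'=\pi_H(\Cc)$. Then $\Cc\subseteq\prod_{i=1}^{k}\Cc_i\times\Cc'$, and combining this containment with the rank-metric and Hamming anticode bounds yields
\[
\dim(\Cc)\leq\sum_{i=1}^{k}\dim(\Cc_i)+\dim(\Cc')\leq\sum_{i=1}^{k}m_i\maxrk(\Cc_i)+\maxwt(\Cc').
\]
Separately, subadditivity of the maximum gives $\dim(\Cc)=\max_{C}\sum_{i}m_i\rk(C_i)\leq\sum_{i=1}^{k}m_i\maxrk(\Cc_i)+\maxwt(\Cc')$. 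The strategy is to show that optimality forces equality throughout both chains; this simultaneously yields $\Cc=\prod_{i=1}^{k}\Cc_i\times\Cc'$ (from the projection inequality) and that each $\Cc_i$ and $\Cc'$ is itself an optimal anticode in its metric (from the anticode bound inequalities), which is exactly the desired decomposition.

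The main obstacle is proving these equalities, which reduces to showing that under optimality the maximum of $\sum_{i}m_i\rk(C_i)$ over $\Cc$ is attained simultaneously in every coordinate, i.e., equals $\sum_{i=1}^{k}m_i\maxrk(\Cc_i)+\maxwt(\Cc')$. I would induct on $\ell$: the base case $\ell=1$ is the classical classification of optimal rank-metric anticodes (for $m_1>1$) together with the trivial Hamming case. For the inductive step, peel off the last block through the short exact sequence $0\to K_\ell\to\Cc\to\rho(\Cc)\to 0$, where $\rho\colon\MM\to\MM':=\prod_{i=1}^{\ell-1}\F_q^{m_i\times n_i}$ projects onto the first $\ell-1$ blocks and $K_\ell\subseteq\F_q^{m_\ell\times n_\ell}$ is identified with the last components of codewords whose first $\ell-1$ components vanish. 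A pencil-type perturbation argument, combining a codeword $C^*$ realizing $\max_{C}\sum_{i}m_i\rk(C_i)$ with an element of $K_\ell$ of maximal rank, should force $\maxrk(K_\ell)=\maxrk(\pi_\ell(\Cc))$; together with the anticode bound equalities this gives $K_\ell=\pi_\ell(\Cc)$, hence $\Cc=\rho(\Cc)\times\pi_\ell(\Cc)$. The induction hypothesis applied to $\rho(\Cc)\subseteq\MM'$, followed by grouping the $m_i=1$ factors into the Hamming block $\Cc'$, closes the argument.
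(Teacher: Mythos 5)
Your overall architecture for the converse is sound: if one can show that an optimal anticode satisfies $\max_{C\in\Cc}\sum_i m_i\rk(C_i)=\sum_{i=1}^{k}m_i\maxrk(\pi_i(\Cc))+\maxwt(\pi_H(\Cc))$, then equality propagates through both of your inequality chains and the decomposition follows exactly as you say. The sufficiency direction is also essentially complete (you should add the one-line upper bound $\sum_i m_i\rk(C_i)\leq\sum_{i\leq k}m_i\maxrk(\Cc_i)+\maxwt(\Cc')$ valid for every codeword of the product, which gives the reverse inequality to the one witnessed by $C^*$). But the "simultaneous attainment" claim is the entire content of the theorem, and the inductive scheme you propose for it does not work as stated. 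Since $m_1\geq\ldots\geq m_\ell$, the block you peel off is a Hamming coordinate whenever $k<\ell$, and there your intermediate claims are simply false for $q=2$: take $\Cc=\{(0,0),(1,1)\}\subseteq\F_2\times\F_2$, which is an optimal anticode ($\dim=\maxwt=1$). Here $K_2=0$ while $\pi_2(\Cc)=\F_2$, so $\maxrk(K_2)=0\neq 1=\maxrk(\pi_2(\Cc))$ and $\Cc\neq\rho(\Cc)\times\pi_2(\Cc)$. No perturbation argument can establish a false statement; this is precisely why the theorem keeps the entire Hamming tail $\F_q^{\ell-k}$ as a single undecomposed factor $\Cc'$. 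The induction must therefore be set up on the rank-metric blocks only (equivalently on $k$, with the pure Hamming situation as a trivially true base case), splitting off one block with $m_i>1$ at a time.

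Even after that repair, the heart of the proof -- your "pencil-type perturbation argument should force $\maxrk(K_\ell)=\maxrk(\pi_\ell(\Cc))$" -- is asserted rather than proved, and it is not a routine step. Establishing that a rank-metric block splits off as a direct factor of an optimal sum-rank anticode is where all the work lies: the argument in the cited source requires the classification of optimal rank-metric anticodes as row-support (or, when $m_i=n_i$, column-support) spaces, the fact that optimal anticodes are generated by their elements of maximum weight, and several lemmas controlling how the ranks $\rk(C_i+\lambda D_i)$ in the different blocks can vary simultaneously as $\lambda$ ranges over $\F_q$ (this is where the hypotheses $q\neq 2$ versus $q=2$ enter and why the binary Hamming part behaves differently). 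The pencil idea is indeed the right one in spirit, but as written your proposal replaces the difficult step with the word "should", so the proof is incomplete.
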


When $q\neq2$ or $n\leq 2$, an optimal Hamming-metric anticode in $\F_q^n$ is a direct product of copies of $\F_q$ and $0$. Therefore, if we restrict to that situation, we can reformulate the previous theorem as follows.

\begin{corollary}[{\cite[Corollary IV.13]{ourpaper}}]
Assume that either $q\neq2$ or $m_{\ell-2} \geq 2$. An $\F_q$-linear space $\Cc\subseteq\MM$ is an optimal anticode if and only if for all $i\in[\ell]$ there is $\Cc_i\subseteq\mathbb{F}_q^{m_i\times n_i}$ optimal anticode such that $\Cc=\prod_{i=1}^\ell\Cc_i$. 
\end{corollary}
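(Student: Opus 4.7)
The plan is to deduce this corollary directly from Theorem~\ref{thm:classificationanticodes} (the classification of optimal sum-rank anticodes) combined with the classification of optimal Hamming-metric anticodes in $\F_q^n$ quoted in the paragraph preceding the statement. Theorem~\ref{thm:classificationanticodes} already decomposes any optimal anticode $\Cc\subseteq\MM$ as $\Cc=\prod_{i=1}^k\Cc_i\times\Cc'$, where each $\Cc_i$ is an optimal rank-metric anticode in $\F_q^{m_i\times n_i}$ (for the ``big'' blocks with $m_i>1$) and $\Cc'\subseteq\F_q^{\ell-k}$ is an optimal Hamming-metric anticode living on the trailing $m_i=n_i=1$ coordinates. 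So it suffices to show that, under the hypothesis of the corollary, $\Cc'$ further decomposes as a direct product $\Cc_{k+1}\times\cdots\times\Cc_\ell$ with each $\Cc_i\in\{0,\F_q\}$, since these are precisely the optimal anticodes in $\F_q^{m_i\times n_i}=\F_q$.

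The ($\Leftarrow$) direction is then automatic: if $\Cc=\prod_{i=1}^\ell\Cc_i$ with each $\Cc_i$ an optimal anticode in $\F_q^{m_i\times n_i}$, then the product $\Cc':=\prod_{i=k+1}^\ell\Cc_i\subseteq\F_q^{\ell-k}$ is a direct product of copies of $0$ and $\F_q$, hence an optimal Hamming-metric anticode, so $\Cc$ has the form prescribed by Theorem~\ref{thm:classificationanticodes} and is therefore optimal.

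For the ($\Rightarrow$) direction, after invoking Theorem~\ref{thm:classificationanticodes} the remaining task is to split $\Cc'$. Here the hypothesis enters via the cited fact that an optimal Hamming-metric anticode in $\F_q^n$ is a direct product of copies of $\F_q$ and $0$ whenever $q\neq 2$ or $n\leq 2$. If $q\neq 2$ this applies directly to $\Cc'$. If instead $m_{\ell-2}\geq 2$, then the monotonicity $m_1\geq\cdots\geq m_\ell$ forces $\{i:m_i=1\}\subseteq\{\ell-1,\ell\}$, so $k\geq\ell-2$ and hence $\ell-k\leq 2$; the cited fact then applies again. In either case $\Cc'=\Cc_{k+1}\times\cdots\times\Cc_\ell$ as desired. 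There is no real obstacle once Theorem~\ref{thm:classificationanticodes} is in hand: the only delicate bookkeeping is translating the hypothesis $m_{\ell-2}\geq 2$ into the bound $\ell-k\leq 2$, which is what unlocks the small-length classification of optimal Hamming-metric anticodes.
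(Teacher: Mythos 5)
Your proposal is correct and follows essentially the same route as the paper: the corollary is presented there as a direct reformulation of Theorem~\ref{thm:classificationanticodes} using the quoted fact that optimal Hamming-metric anticodes in $\F_q^n$ split as products of copies of $\F_q$ and $0$ when $q\neq 2$ or $n\leq 2$. Your translation of the hypothesis $m_{\ell-2}\geq 2$ into $\ell-k\leq 2$ via the monotonicity $m_1\geq\cdots\geq m_\ell$ is exactly the bookkeeping the paper leaves implicit.
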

In particular, one obtains the following results.
\begin{proposition}[{\cite[Proposition IV.14]{ourpaper}}]\label{prop:dualOAC}
Let $q\neq 2$ or $ m_{\ell-2}\geq 2$. Then $\A \subseteq \matsumi$  is an optimal anticode if and only if $\A^\perp \subseteq \matsumi$ is an optimal anticode.
\end{proposition}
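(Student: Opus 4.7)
The plan is to leverage the structural classification provided by the preceding corollary to reduce the duality statement for sum-rank optimal anticodes to the corresponding statement for each rank-metric (or Hamming-metric) factor.

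First, under the hypothesis $q\neq 2$ or $m_{\ell-2}\geq 2$, the corollary gives a product decomposition of any optimal anticode: $\A=\prod_{i=1}^{\ell}\A_i$ where each $\A_i\subseteq\F_q^{m_i\times n_i}$ is an optimal rank-metric anticode (and in the case $m_i=n_i=1$ it is simply $0$ or $\F_q$, which is an optimal Hamming anticode). Second, the bilinear form $\mathrm{Tr}$ on $\MM$ splits as a direct sum of the Frobenius trace forms on each factor $\F_q^{m_i\times n_i}$, so the orthogonal complement of a product is the product of orthogonal complements:
\begin{equation*}
\A^\perp=\prod_{i=1}^{\ell}\A_i^\perp,
\end{equation*}
where $\A_i^\perp$ is computed with respect to $(D,C)\mapsto \mathrm{tr}(DC^t)$ on $\F_q^{m_i\times n_i}$.

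The core step is then to verify that in each factor the dual of an optimal anticode is again an optimal anticode. For the Hamming factors this is immediate since dualization swaps $0$ and $\F_q$. For a rank-metric factor with $m_i>1$, I would invoke the classification (going back to Meshulam) which says that every optimal rank-metric anticode $\A_i$ in $\F_q^{m_i\times n_i}$ has the form $\{M:\rowsp(M)\subseteq W\}$ for some $W\subseteq \F_q^{n_i}$ (or, when $m_i=n_i$, possibly the column-space analogue). Writing the elements of such an $\A_i$ as $M=AB$ for $B$ a basis matrix of $W$ and $A\in\F_q^{m_i\times\dim W}$ arbitrary, the duality condition $\mathrm{tr}(D(AB)^t)=\mathrm{tr}(A^tDB^t)=0$ for all $A$ is equivalent to $DB^t=0$, i.e., $\rowsp(D)\subseteq W^\perp$. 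Thus $\A_i^\perp$ is a row-type optimal anticode of dimension $m_i(n_i-\dim W)$, matching the bound. The column-type case (when $m_i=n_i$) is handled identically by symmetry via transposition.

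Putting these pieces together, $\A^\perp=\prod_{i=1}^{\ell}\A_i^\perp$ is a product of optimal rank-metric and Hamming optimal anticodes, so applying the corollary in the reverse direction concludes that $\A^\perp$ is itself an optimal sum-rank anticode. The converse follows by the same argument using $(\A^\perp)^\perp=\A$. The main obstacle is bookkeeping: correctly isolating the Hamming factors where $m_i=n_i=1$ and the two distinct types of rank-metric optimal anticodes when $m_i=n_i$, to make sure the classification and the ensuing dual computation cover every case uniformly.
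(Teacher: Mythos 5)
Your proof is correct and follows essentially the same route as the paper's: apply the product classification of optimal anticodes (valid under the hypothesis $q\neq 2$ or $m_{\ell-2}\geq 2$), note that the trace form splits so that $\left(\prod_i\A_i\right)^\perp=\prod_i\A_i^\perp$, and reduce to the known fact that the dual of an optimal rank-metric anticode is again one. Your explicit verification of that last fact via the row-support form $\{M:\rowsp(M)\subseteq W\}$ and the computation $\A_i^\perp=\{D:\rowsp(D)\subseteq W^\perp\}$ is a correct (and slightly more self-contained) rendering of the rank-metric ingredient the paper simply cites.
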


It is easy to show that, if $q\neq 2$ or $m_{\ell-2} \geq 2$, an optimal sum-rank metric anticode is generated by its elements of sum-rank weight one. Moreover, one has the following.

\begin{theorem}[{\cite[Theorem IV.7]{ourpaper}}]
Assume that either $q\neq 2$ or $m_{\ell-1} \geq 2$ and let $\Cc\subseteq\MM$ be an optimal anticode. Then $\Cc$ is generated by its elements of maximum sum-rank weight. 
\end{theorem}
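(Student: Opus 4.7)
The plan is to reduce $\Cc$ to a product of optimal rank-metric anticodes via the classification theorem, then show via a ``one-hot'' trick that every vector with a single nonzero component lies in the span of maximum-weight elements, and finally conclude by linearity.

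By Theorem~\ref{thm:classificationanticodes}, we can write $\Cc = \Cc_1 \times \cdots \times \Cc_k \times \Cc'$ with each $\Cc_i$ an optimal rank-metric anticode in $\F_q^{m_i \times n_i}$ and $\Cc' \subseteq \F_q^{\ell-k}$ an optimal Hamming anticode. The hypothesis is precisely what makes $\Cc'$ split further as a product of copies of $\F_q$ and $\{0\}$: either $q \neq 2$, in which case we invoke the classification of optimal Hamming anticodes for such $q$ recalled earlier in the text, or $q = 2$ and $m_{\ell-1} \geq 2$, which forces $k \geq \ell-1$ so that $\Cc'$ occupies at most one coordinate (hence decomposes trivially). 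Thus $\Cc = \Cc_1 \times \cdots \times \Cc_\ell$ with each $\Cc_i$ an optimal rank-metric anticode, and the maximum-weight elements of $\Cc$ form the product $R_1 \times \cdots \times R_\ell$, where $R_i = \{D \in \Cc_i : \rk(D) = r_i\}$ and $r_i = \maxrk(\Cc_i)$.

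Let $V = \mathrm{span}_{\F_q}(R_1 \times \cdots \times R_\ell)$. I would fix an $F_j \in R_j$ for each $j$ and observe that, for every $A, B \in R_i$, both $(F_1,\ldots,F_{i-1},A,F_{i+1},\ldots,F_\ell)$ and $(F_1,\ldots,F_{i-1},B,F_{i+1},\ldots,F_\ell)$ are maximum-weight, so their difference $(0,\ldots,0,A-B,0,\ldots,0)$ lies in $V$. By linearity, $V$ contains every one-hot vector whose $i$-th component lies in $\mathrm{span}(R_i - R_i)$. The key sub-lemma is that $\mathrm{span}(R_i - R_i) = \Cc_i$; once this is in hand, every one-hot vector belongs to $V$, and any $C = (C_1, \ldots, C_\ell) \in \Cc$ can be written as $\sum_i (0,\ldots,0,C_i,0,\ldots,0) \in V$, giving $V = \Cc$.

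The main obstacle is establishing the sub-lemma $\mathrm{span}(R_i - R_i) = \Cc_i$. Using the classification of optimal rank-metric anticodes as row- or column-bounded spaces, $\Cc_i$ is isomorphic to $\F_q^{m_i \times r_i}$ and $R_i$ corresponds to the set of full-column-rank matrices; I would show that each standard basis matrix is the difference of two such, which is immediate for $m_i > r_i$ or $q \geq 3$ and requires only a short case analysis over $\F_2$ when $m_i = r_i \geq 2$ (e.g., by embedding a $2\times 2$ invertible block inside a larger identity). The sub-lemma genuinely fails only when $q = 2$ and $m_i = n_i = r_i = 1$, i.e., $\Cc_i = \F_2$; by hypothesis this can occur at most at $i = \ell$, and in that single residual case one recovers $(0,\ldots,0,1) \in V$ indirectly by writing it as the max-weight vector $(F_1,\ldots,F_{\ell-1},1)$ minus $(F_1,\ldots,F_{\ell-1},0)$, whose second summand is itself a sum of one-hot vectors already shown to belong to $V$ in the previous step.
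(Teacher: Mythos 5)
Your argument is correct, and the details you defer do check out: under the stated hypothesis the Hamming factor $\Cc'$ in Theorem~\ref{thm:classificationanticodes} occupies at most one coordinate (or splits into copies of $\F_q$ and $0$ when $q\neq 2$), the maximum-weight elements of a product code are exactly the product of the per-block maximum-rank sets, and for an optimal rank-metric anticode $\Cc_i\cong\F_q^{m_i\times r_i}$ the differences of full-column-rank matrices span everything except precisely when $q=2$ and $m_i=r_i=1$ --- the single residual block you treat at the end, and exactly the source of the counterexample $\F_2\times\F_2$ (whose only maximum-weight element is $(1,1)$) that explains why the hypothesis is $m_{\ell-1}\geq 2$ rather than the $m_{\ell-2}\geq 2$ of the classification corollary. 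One small overstatement: over $\F_2$ with $m_i>r_i$ the diagonal units $E_{tt}$ are not literally immediate (perturbing column $t$ by a row of index greater than $r_i$ does it), but this is a one-line fix. The substantive difference from the source is logical order: in \cite{ourpaper} this is Theorem IV.7 while the classification is Theorem IV.11, so the original establishes generation by maximum-weight elements \emph{before} and independently of the classification, by a direct argument, and the classification is built on top of such structural lemmas. Your derivation inverts that order. It is perfectly legitimate here, where the classification is quoted as a black box, and it buys a short, transparent proof reducing the theorem to an exercise about spans of full-rank matrices; but it rests on the hardest result of the section and could not substitute for the original proof unless one first checks that the classification is obtained without appeal to this theorem.
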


Theorem \ref{thm:classificationanticodes} allows us to write an explicit formula for the generalized weights of most optimal anticodes.

\begin{theorem}\label{thm:genwtsOAC}
Let $\Cc=\prod_{i=1}^\ell\Cc_i\subseteq\MM$ be an optimal sum-rank metric anticode, where $\Cc_i\subseteq\F_q^{m_i\times n_i}$ is an optimal anticode for $i\in[\ell]$ with $t_i=\maxrk(\Cc_i)$. Let $r\in[\dim(\Cc)]$ and write $r=\sum_{i=1}^{j-1}m_it_i+m_j\tau+\sigma+1$, where $j\in[\ell]$, $0\leq\tau\leq t_j-1$, and $0\leq\sigma\leq m_j-1$.
Then
$$d_r(\Cc)=\sum_{i=1}^{j-1}t_i+\tau+1.$$
\end{theorem}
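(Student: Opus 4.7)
The plan is to establish the matching bounds $d_r(\Cc) \leq \sum_{i=1}^{j-1} t_i + \tau + 1$ and $d_r(\Cc) \geq \sum_{i=1}^{j-1} t_i + \tau + 1$ separately. A preliminary observation I would record first is that every subspace $\mathcal{D} \subseteq \Cc = \prod_{i=1}^\ell \Cc_i$ satisfies
$$
\wt(\mathcal{D}) = \sum_{i=1}^\ell \wt(\pi_i(\mathcal{D})),
$$
where $\pi_i \colon \MM \to \F_q^{m_i \times n_i}$ is the $i$-th projection and each $\wt(\pi_i(\mathcal{D}))$ is a rank-metric weight computed in $\F_q^{m_i \times n_i}$. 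The $\leq$-direction comes from enclosing each $\pi_i(\mathcal{D})$ in a smallest-maxrank optimal rank-metric anticode and taking the product; the $\geq$-direction follows by projecting any containing product of optimal rank-metric anticodes back onto its factors.

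For the upper bound, I would invoke the classification of optimal rank-metric anticodes: up to equivalence, $\Cc_i$ consists of all matrices in $\F_q^{m_i \times n_i}$ whose rowspace lies in some fixed $t_i$-dimensional $W_i \subseteq \F_q^{n_i}$ (with an analogous column-based description when $m_i = n_i$). Fixing a complete flag $0 \subset W_i^{(1)} \subset \cdots \subset W_i^{(t_i)} = W_i$ and letting $\A_i^{(s)}$ denote the matrices with rowspace in $W_i^{(s)}$, the $\A_i^{(s)}$ form a chain of optimal rank-metric anticodes with $\maxrk(\A_i^{(s)}) = s$ and $\dim(\A_i^{(s)}) = m_i s$. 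Choosing any $\mathcal{D}_j$ with $\A_j^{(\tau)} \subseteq \mathcal{D}_j \subseteq \A_j^{(\tau+1)}$ and $\dim(\mathcal{D}_j) = m_j \tau + \sigma + 1$, I set
$$
\mathcal{D} := \Cc_1 \times \cdots \times \Cc_{j-1} \times \mathcal{D}_j \times 0 \times \cdots \times 0.
$$
Then $\dim(\mathcal{D}) = r$; moreover $\wt(\mathcal{D}_j) \leq \tau + 1$ since $\mathcal{D}_j \subseteq \A_j^{(\tau+1)}$, and $\wt(\mathcal{D}_j) \geq \tau + 1$ since every optimal rank-metric anticode of maxrank $\tau$ has dimension $m_j \tau < \dim(\mathcal{D}_j)$ and therefore cannot contain $\mathcal{D}_j$. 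Combining gives $\wt(\mathcal{D}) = \sum_{i<j} t_i + \tau + 1$.

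For the lower bound, I would take any subcode $\mathcal{D} \subseteq \Cc$ with $\dim(\mathcal{D}) \geq r$ and set $s_i := \wt(\pi_i(\mathcal{D})) \leq t_i$. The rank-metric Anticode Bound (Theorem \ref{theorem:anticodebound} in the case $\ell = 1$) applied to $\pi_i(\mathcal{D}) \subseteq \F_q^{m_i \times n_i}$ yields $\dim(\pi_i(\mathcal{D})) \leq m_i s_i$, whence
$$
r \leq \dim(\mathcal{D}) \leq \sum_{i=1}^\ell \dim(\pi_i(\mathcal{D})) \leq \sum_{i=1}^\ell m_i s_i.
$$
The claim reduces to a combinatorial statement: if $\sum_i m_i s_i \geq r = \sum_{i<j} m_i t_i + m_j \tau + \sigma + 1$ with $0 \leq s_i \leq t_i$ and $m_1 \geq \cdots \geq m_\ell$, then $\sum_i s_i \geq \sum_{i<j} t_i + \tau + 1$. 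Supposing the opposite, $\sum_i s_i \leq \sum_{i<j} t_i + \tau$, the maximum of $\sum_i m_i s_i$ under these constraints is attained by greedy saturation at the largest-$m_i$ indices and equals $\sum_{i<j} m_i t_i + m_j \tau = r - (\sigma + 1) < r$, a contradiction.

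The main obstacle I anticipate is the upper-bound construction, since it leans on the structural classification of rank-metric optimal anticodes to produce the flag inside $\Cc_i$, and the case $m_i = n_i$ (where transposed anticodes also exist) needs a symmetric treatment. Once the projection identity for $\wt$ is in place, the lower bound is a clean dimension count followed by an elementary greedy argument.
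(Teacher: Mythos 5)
Your proof is correct. The paper states this theorem without a written proof, saying only that it follows from the classification of optimal anticodes (Theorem~\ref{thm:classificationanticodes}); your derivation --- splitting $\wt(\mathcal{D})=\sum_i\wt(\pi_i(\mathcal{D}))$ across blocks, using a flag of optimal rank-metric anticodes inside $\Cc_j$ for the upper bound, and combining the rank-metric anticode bound with the greedy exchange argument (which relies on the standing assumption $m_1\geq\cdots\geq m_\ell$) for the lower bound --- is exactly the argument the paper has in mind, and every step checks out, including the reduction of the $m_i=n_i$ column-space case to the row-space case via a transposition isometry, which preserves generalized weights.
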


It easily follows from the definition that a rank-metric code or a non-binary linear block code which has the generalized weights of an optimal anticode is an optimal anticode. However, this is not the case in general for sum-rank metric codes, as the next example shows.

\begin{example}
Let $\Cc_1,\Cc_2\subseteq\F_2^{2\times 2}\times \F_2$ be the codes of Example \ref{example:duality}. 
One has that $d_1(\Cc_1)=d_1(\Cc_2)=1$. However, $\Cc_1$ is an optimal anticode, while $\Cc_2$ is not.
\end{example}

The Singleton Bound involves the size of the ambient space and the dimension and minimum distance of a code. For (not necessarily linear) sum-rank metric codes, a Singleton Bound relating the size of the ambient space, the minimum distance, and the cardinality of the code was shown in~\cite[Theorem 3.2]{BGRMSRD}. 
For linear sum-rank metric codes, the next theorem is more general, as it provides a bound on the dimension of a code, based on the value of its $r$-th generalized weight.
\begin{theorem}[Singleton Bound, {\cite[Theorem VII.4]{ourpaper}}]\label{singletonbound}
Let $\Cc\subseteq\MM$ be a code and let $r\in[\dim(\Cc)]$. Let $j\in[\ell]$ and $0\leq\delta\leq n_j-1$ be such that $$d_r(\Cc)-1\geq\sum_{i=1}^{j-1} n_i+\delta.$$
Then
$$
\dim(\Cc)\leq\sum_{i=j}^\ell m_in_i-m_{j}\delta+r-1.$$
\end{theorem}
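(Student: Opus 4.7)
The plan is to exploit the observation, made in the paragraph following equation~\eqref{eqn:growthgenwts2}, that the sequence of generalized weights of $\Cc$ is a (non-decreasing) subsequence of the explicitly known sequence of generalized weights of $\MM$. Once this is in hand, the bound will drop out of a pure counting argument: I just need to count how many positions $p\in[\dim(\MM)]$ satisfy $d_p(\MM)\geq d_r(\Cc)$.

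Concretely, I would proceed as follows. First, recall the explicit formula
\[
d_{n_1m_1+\cdots+n_{j-1}m_{j-1}+\delta m_j+s}(\MM)=n_1+\cdots+n_{j-1}+\delta+1
\]
for $j\in[\ell]$, $0\leq\delta\leq n_j-1$, and $0<s\leq m_j$, also derived from~\eqref{eqn:growthgenwts2}. Setting $v=\sum_{i=1}^{j-1}n_i+\delta+1$, a direct count shows that the number of indices $p\in[\dim(\MM)]$ with $d_p(\MM)\leq v-1$ is precisely $\sum_{i=1}^{j-1}m_in_i+\delta m_j$, so the number of indices $p$ with $d_p(\MM)\geq v$ equals
\[
\dim(\MM)-\sum_{i=1}^{j-1}m_in_i-\delta m_j=\sum_{i=j}^\ell m_in_i-m_j\delta.
\]

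Next, by Proposition~\ref{properties}(2) the sequence $(d_s(\Cc))_{s=1}^{\dim(\Cc)}$ is non-decreasing, so the hypothesis $d_r(\Cc)\geq v$ gives $d_s(\Cc)\geq v$ for every $s\in\{r,r+1,\ldots,\dim(\Cc)\}$. Because this sequence is a subsequence of $(d_p(\MM))_{p=1}^{\dim(\MM)}$, there are strictly increasing indices $p_1<p_2<\cdots<p_{\dim(\Cc)}$ with $d_s(\Cc)=d_{p_s}(\MM)$; in particular, $p_r,p_{r+1},\ldots,p_{\dim(\Cc)}$ are pairwise distinct elements of $\{p\in[\dim(\MM)]:d_p(\MM)\geq v\}$. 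Comparing cardinalities,
\[
\dim(\Cc)-r+1\leq\sum_{i=j}^\ell m_in_i-m_j\delta,
\]
which rearranges to the desired Singleton bound.

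The only real obstacle is the subsequence claim itself, which is not proved in this excerpt but is attributed to~\eqref{eqn:growthgenwts} together with Proposition~\ref{properties}; granting this, the rest is the short bookkeeping argument above. If one preferred to avoid invoking the subsequence statement directly, an equivalent route is to apply~\eqref{eqn:growthgenwts} iteratively $n_j-\delta$ times with $k=j$ and then $n_{j+1},\ldots,n_\ell$ times with $k=j+1,\ldots,\ell$ respectively, showing that if $r+(n_j-\delta)m_j+n_{j+1}m_{j+1}+\cdots+n_\ell m_\ell\leq\dim(\Cc)$ then $d_{\dim(\Cc)}(\Cc)\geq n+1$, contradicting Proposition~\ref{properties}(4).
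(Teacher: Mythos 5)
Your argument is correct, and both of the routes you sketch go through. The counting step is sound: the value $w$ occurs in the weight sequence of $\MM$ with multiplicity $m_{j(w)}$, where $j(w)$ is the block containing $w$, so exactly $\sum_{i=j}^\ell m_in_i-m_j\delta$ indices $p$ satisfy $d_p(\MM)\geq\sum_{i=1}^{j-1}n_i+\delta+1$; the subsequence property together with monotonicity then forces $\dim(\Cc)-r+1$ to be at most that number. Your fallback route, iterating~\eqref{eqn:growthgenwts} with $k=j,j+1,\ldots,\ell$ until the weight would exceed $n$ and contradicting Proposition~\ref{properties}(4), is really just an unpacking of the subsequence claim and needs nothing beyond \cite[Lemma VI.8]{ourpaper}, so it is the more self-contained of the two.

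This chapter states the theorem without proof, and the proof in the cited source is of a different flavour from yours: it is the anticode-intersection argument in the style of the rank-metric Singleton bound. One takes the product optimal anticode $\A=\F_q^{m_1\times n_1}\times\cdots\times\F_q^{m_{j-1}\times n_{j-1}}\times\A_j\times 0\times\cdots\times 0$, where $\A_j\subseteq\F_q^{m_j\times n_j}$ is an optimal rank-metric anticode with $\maxrk(\A_j)=\delta$, so that $\maxsrk(\A)=\sum_{i=1}^{j-1}n_i+\delta\leq d_r(\Cc)-1$ and $\dim(\A)=\sum_{i=1}^{j-1}m_in_i+\delta m_j$. Since every subcode of $\Cc\cap\A$ has weight at most $\maxsrk(\A)<d_r(\Cc)$, one gets $\dim(\Cc\cap\A)\leq r-1$, and the dimension formula $\dim(\Cc)\leq\dim(\MM)-\dim(\A)+\dim(\Cc\cap\A)$ yields the bound at once. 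Your derivation trades this for the growth estimates on generalized weights: it exhibits the Singleton bound as a purely combinatorial consequence of the fact that the weight sequence of $\Cc$ embeds into that of $\MM$, which is conceptually pleasant, but it rests entirely on Lemma VI.8, whose proof is comparable in depth to the anticode argument itself. Both approaches are legitimate and yield the same bound; just be aware that yours is not independent of the machinery it invokes, whereas the anticode argument is essentially self-contained given Theorem~\ref{theorem:anticodebound}.
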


Theorem \ref{singletonbound} yields upper bounds on all the generalized weights of $\Cc$.

\begin{corollary}[{\cite[Corollary VII.5]{ourpaper}}]\label{cor.rsb}
Let $\Cc\subseteq\MM$ be a code. Let $r\in[\dim(\Cc)]$, $j\in[\ell]$, and $0\leq\delta\leq n_j-1$ be such that $\dim(\Cc)\geq\sum_{i=j}^\ell m_in_i-m_{j}\delta+r$. Then
$$d_r(\Cc)\leq\sum_{i=1}^{j-1} n_i+\delta.$$
In particular, if $\dim(\Cc)=\sum_{i=j}^\ell m_in_i-m_{j}\delta$, then $$d_1(\Cc)\leq\ldots\leq d_{m_j}(\Cc)\leq \sum_{i=1}^{j-1} n_i+\delta+1.$$
\end{corollary}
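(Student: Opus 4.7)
The plan is to derive the first inequality by contraposing Theorem \ref{singletonbound}. Fix $r$, $j$, and $\delta$ as in the hypothesis and suppose, towards a contradiction, that $d_r(\Cc) > \sum_{i=1}^{j-1}n_i+\delta$, i.e.\ $d_r(\Cc)-1 \geq \sum_{i=1}^{j-1}n_i+\delta$. Theorem \ref{singletonbound} applied with the same $j$ and $\delta$ then yields $\dim(\Cc) \leq \sum_{i=j}^\ell m_in_i - m_j\delta + r - 1$, which contradicts the hypothesis $\dim(\Cc) \geq \sum_{i=j}^\ell m_in_i - m_j\delta + r$. This gives the first inequality $d_r(\Cc) \leq \sum_{i=1}^{j-1}n_i+\delta$.

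For the ``in particular'' statement I would assume $\dim(\Cc) = \sum_{i=j}^\ell m_in_i - m_j\delta$ and deduce the bound on $d_r(\Cc)$ for $r\in[m_j]$ by invoking the main inequality with shifted parameters, so that the right-hand side becomes $\sum_{i=1}^{j-1}n_i+\delta+1$. Two subcases need to be handled. If $\delta \leq n_j-2$, I would use the admissible pair $(j,\delta+1)$: the required condition $\dim(\Cc) \geq \sum_{i=j}^\ell m_in_i - m_j(\delta+1) + r$ simplifies to $m_j \geq r$, which holds. If $\delta = n_j - 1$ and $j<\ell$, I would instead use $(j+1,0)$: then $\dim(\Cc) = m_j + \sum_{i=j+1}^\ell m_in_i \geq \sum_{i=j+1}^\ell m_in_i + r$, again because $m_j \geq r$, and the output bound $\sum_{i=1}^{j}n_i$ equals $\sum_{i=1}^{j-1}n_i + \delta + 1$. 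The remaining corner $j=\ell$, $\delta=n_\ell-1$ is trivial since $d_r(\Cc) \leq n$ always. The monotonicity $d_1(\Cc) \leq \dots \leq d_{m_j}(\Cc)$ is then immediate from item~(2) of Proposition~\ref{properties}.

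I do not expect any real technical obstacle: the main bound is essentially the contrapositive of the Singleton Bound, and the ``in particular'' part is bookkeeping. The only delicate point is the wrap-around at $\delta=n_j-1$, where the shift $(j,\delta) \mapsto (j,\delta+1)$ is no longer admissible and one must move to the next block in the product via $(j+1,0)$; separating that case cleanly is the only step that requires any care.
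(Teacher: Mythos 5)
Your proof is correct and follows exactly the route the paper intends: the first bound is the contrapositive of Theorem \ref{singletonbound} applied with the same pair $(j,\delta)$, and the ``in particular'' part follows by re-applying it with the shifted pair $(j,\delta+1)$ (or $(j+1,0)$ at the block boundary, trivially when $j=\ell$ and $\delta=n_\ell-1$) together with the monotonicity of generalized weights. The case split at $\delta=n_j-1$ is handled correctly and is the only point requiring care, as you note.
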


In view of Corollary \ref{cor.rsb}, it is natural to define MSRD codes as follows. 

\begin{definition}[{\cite[Definition 3.3]{BGRMSRD}} and {\cite[Definition VII.6]{ourpaper}}]
A code $\Cc$ is MSRD if there exist $j\in[\ell]$ and $0\leq \delta\leq n_j-1$ such that $$d(\Cc)=\sum_{i=1}^{j-1} n_i+\delta+1 \;\mbox{ and }\; \dim(\Cc)=\sum_{i=j}^{\ell} m_in_i-\delta m_{j}.$$
\end{definition}

The next proposition collects some properties which are equivalent to being MSRD.

\begin{proposition}[{\cite[Proposition VII.7 and Proposition VII.8]{ourpaper}}]
Let $\Cc\subseteq\MM$ be a code. The following are equivalent:
\begin{itemize}
\item $\Cc$ is MSRD.
\item Write $d(\Cc)=\sum_{i=1}^{j-1}n_i+\delta+1$, where $j\in[\ell]$ and $0\leq\delta\leq n_j-1$. For any $\mathcal{A}$ optimal sum-rank anticode of $\maxsrk(\A)=d(\Cc)-1$ and $\dim(\A)=\sum_{i=1}^{j-1} m_in_i+(\delta+1)m_j$, one has $\Cc+\A=\MM$.
\item There are $j\in[\ell]$ and $0\leq\delta\leq n_j-1$ such that $\dim(\Cc)=\sum_{i=j}^\ell m_in_i-m_j\delta$ and for any $\mathcal{A}$ optimal anticode of $\maxsrk(\A)\leq \sum_{i=1}^{j-1}n_i+\delta$ one has $\Cc\cap\A=0$. 
\item For $S\subseteq[n]$, denote by $\F_q[S]$ the set of elements of $\MM$ which are zero outside of the columns indexed by $S$.
For any $d(\Cc)\leq h\leq n$, let $S_h:=[d(\Cc)-1]\cup\{h\}$. Then 
$$\dim(\Cc\cap\mathbb{F}_q[S_h])=m_{k},$$ 
where $k=\max\{\nu\ |\ \sum_{i=1}^{\nu-1} n_i<h\}$.
\end{itemize}
\end{proposition}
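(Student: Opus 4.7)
I would prove the four equivalences by anchoring everything on (i) and exploiting a common framework: the Singleton Bound (Theorem~\ref{singletonbound}) and Corollary~\ref{cor.rsb} for dimension bounds, combined with the elementary dimension identity $\dim(\Cc+\A)=\dim(\Cc)+\dim(\A)-\dim(\Cc\cap\A)$ for each specific optimal anticode $\A$ considered. In all three implications, the strategy is to match an MSRD dimension formula against a known dimension of an optimal anticode so that the condition collapses to a clean statement about the intersection.

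For (i)$\Leftrightarrow$(iii), the forward direction is immediate: if $\Cc$ is MSRD, any optimal anticode $\A$ with $\maxsrk(\A)\le \sum_{i=1}^{j-1}n_i+\delta=d(\Cc)-1$ cannot contain a nonzero codeword, so $\Cc\cap\A=0$, and the dimension of $\Cc$ is the MSRD value by definition. For the converse, the dimension assumption together with Corollary~\ref{cor.rsb} (applied at $r=1$) already forces $d(\Cc)\le\sum_{i=1}^{j-1}n_i+\delta+1$. If this were strict, a minimum-weight codeword $C$ could be embedded into an explicit optimal anticode $\A=\A_1\times\dots\times\A_\ell$ with $\A_i=\{M\in\F_q^{m_i\times n_i}:\rowsp(M)\subseteq V_i\}$ for any subspace $V_i\supseteq\rowsp(C_i)$ of dimension $\rk(C_i)$, which produces a nonzero element of $\Cc\cap\A$ with $\maxsrk(\A)\le\sum_{i=1}^{j-1}n_i+\delta$, a contradiction.

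For (i)$\Leftrightarrow$(ii), the same minimum distance argument gives $\Cc\cap\A=0$ for any $\A$ of the stated form, at which point the dimension identity turns $\Cc+\A=\MM$ into the MSRD dimension equation. Conversely, $\Cc+\A=\MM$ gives the lower bound $\dim(\Cc)\ge\dim(\MM)-\dim(\A)=\sum_{i=j}^\ell m_in_i-m_j\delta$, which is sharp against the Singleton Bound of Theorem~\ref{singletonbound} applied to $d(\Cc)=\sum_{i=1}^{j-1}n_i+\delta+1$ — forcing MSRD.

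For (i)$\Leftrightarrow$(iv), the space $\F_q[S_h]$ is itself a product of standard rank-metric optimal anticodes over the blocks, hence an optimal sum-rank anticode of $\maxsrk$ equal to $d(\Cc)$ and dimension $\sum_{i=1}^{j-1}m_in_i+m_j\delta+m_k$. Assuming MSRD, the subspace $\F_q[[d(\Cc)-1]]\subseteq\F_q[S_h]$ is itself an optimal anticode with $\maxsrk=d(\Cc)-1$, and so (iii) gives $\Cc\cap\F_q[[d(\Cc)-1]]=0$; hence $\Cc\cap\F_q[S_h]$ injects into the $m_k$-dimensional quotient, giving an upper bound of $m_k$, while the modular dimension count gives the matching lower bound. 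The main obstacle will be the converse direction (iv)$\Rightarrow$(i): here I would consider the projection of $\Cc$ onto the last $n-d(\Cc)+1$ columns — its kernel is $\Cc\cap\F_q[[d(\Cc)-1]]=0$, and for each $h\ge d(\Cc)$ the hypothesis $\dim(\Cc\cap\F_q[S_h])=m_k$ together with triviality of this kernel forces the column-$h$ evaluation $\Cc\cap\F_q[S_h]\to\F_q^{m_k}$ to be a bijection. Varying $h$ across $\{d(\Cc),\dots,n\}$ shows the projection's image contains every single-column submodule, so it equals $\F_q[\{d(\Cc),\dots,n\}]$, of dimension $\sum_{i=j}^\ell m_in_i-m_j\delta$, which forces MSRD.
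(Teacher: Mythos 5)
Your arguments for (i)$\Leftrightarrow$(iii) and (i)$\Leftrightarrow$(iv) are correct and complete: the product anticode $\prod_i\{M:\rowsp(M)\subseteq\rowsp(C_i)\}$ built from a minimum-weight codeword, the injectivity of the column-$h$ evaluation on $\Cc\cap\F_q[S_h]$ (its kernel being $\Cc\cap\F_q[[d(\Cc)-1]]=0$), and the reconstruction of the projection's image from the single-column spaces are exactly the right ingredients, and the dimension counts against Theorem~\ref{singletonbound} and Corollary~\ref{cor.rsb} close both equivalences. The chapter only cites \cite{ourpaper} for this proposition and gives no in-text proof, so there is nothing to compare against; your route is the natural one.

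There is, however, a genuine problem with your treatment of (ii), and it points to an inconsistency you should have flagged rather than silently smoothed over. As printed, condition (ii) requires $\maxsrk(\A)=d(\Cc)-1=\sum_{i=1}^{j-1}n_i+\delta$ \emph{and} $\dim(\A)=\sum_{i=1}^{j-1}m_in_i+(\delta+1)m_j$. But the Anticode Bound (Theorem~\ref{theorem:anticodebound}) together with $m_1\geq\ldots\geq m_\ell$ forces any anticode with $\maxsrk(\A)=\sum_{i=1}^{j-1}n_i+\delta$ to satisfy $\dim(\A)\leq\sum_{i=1}^{j-1}m_in_i+\delta m_j$, which is strictly smaller than the prescribed dimension; hence no such $\A$ exists, condition (ii) is vacuously true for every code, and the equivalence fails as literally stated. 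Your computation $\dim(\MM)-\dim(\A)=\sum_{i=j}^{\ell}m_in_i-m_j\delta$ is only valid if $\dim(\A)=\sum_{i=1}^{j-1}m_in_i+\delta m_j$, so you have implicitly corrected what is presumably a typo. Either the dimension in (ii) should read $\sum_{i=1}^{j-1}m_in_i+\delta m_j$ (in which case your argument goes through verbatim), or $\maxsrk(\A)$ should be $d(\Cc)$ (in which case $\Cc\cap\A=0$ no longer holds and the correct claim becomes $\dim(\Cc\cap\A)=m_j$, requiring an argument along the lines of your proof of (iv)). Without acknowledging and resolving this discrepancy, your proof does not establish the statement as given.
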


We refer the interested reader to \cite[Section VII]{ourpaper} for a detailed discussion of conditions which are equivalent to or stronger than the property of being MSRD.

It is well known that the dual of an MDS linear block code is MDS and the dual of an MRD rank-metric code is MRD.
More generally, if $m_1=\ldots=m_\ell$, then the dual of an MSRD sum-rank metric code is MSRD. However, in the generality of sum-rank metric codes, this is essentially the only situation in which the dual of an MSRD code is MSRD. 

\begin{theorem}[{\cite[Proposition VII.12 and Proposition VII.13]{ourpaper}}]
If $m_1=\ldots=m_\ell$, then the dual of an MSRD code is MSRD. Conversely, if there exists a non-trivial code $\Cc\subseteq\MM$ such that both $\Cc$ and $\Cc^\perp$ are MSRD, then $m_1=\cdots= m_\ell$.
\end{theorem}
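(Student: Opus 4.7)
My plan is to handle the two implications separately, with Wei duality doing the heavy lifting for the forward direction and the generalized Singleton framework for the converse.

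For the forward direction, suppose $m_1=\cdots=m_\ell=m$ and let $\Cc\subseteq\MM$ be MSRD with $d=d(\Cc)$, so that $\dim(\Cc)=m(n-d+1)$. I would first pin down every generalized weight of $\Cc$ in closed form. By Proposition~\ref{properties} the sequence $d_1(\Cc)\leq d_2(\Cc)\leq\cdots$ is nondecreasing and starts at $d$; by \eqref{eqn:growthgenwts2} with constant $m_i=m$ it increases by at least one every $m$ steps; and by Corollary~\ref{cor.rsb}, applied to $\Cc$ with its Singleton-saturating dimension, it can grow by at most one every $m$ steps. Together these force
\[ d_r(\Cc)=d-1+\lceil r/m\rceil\qquad\text{for all }r\in[\dim(\Cc)]. \]
Substituting this explicit formula into Theorem~\ref{weiduality} at $r=1$ gives $\overline{D}_{1+\dim(\Cc)}(\Cc)=\{1,2,\ldots,n-d+1\}$, whence $D_1(\Cd)=\{n-d+2,\ldots,n\}$ and $d(\Cd)=n-d+2$. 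Combined with $\dim(\Cd)=mn-m(n-d+1)=m(d-1)=m(n-d(\Cd)+1)$, this confirms that $\Cd$ saturates the Singleton bound and is therefore MSRD.

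For the converse, assume $\Cc\neq 0,\MM$ and both $\Cc,\Cd$ are MSRD. Write $d=d(\Cc)$ and $d'=d(\Cd)$, and encode the ambient data in the nonincreasing sequence $\mu\in\Z^n$ obtained by listing $m_i$ exactly $n_i$ times for $i\in[\ell]$. A direct rewrite of the MSRD equalities gives $\dim(\Cc)=N-\sum_{k=1}^{d-1}\mu_k$ and $\dim(\Cd)=N-\sum_{k=1}^{d'-1}\mu_k$ where $N=\dim(\MM)=\sum_k \mu_k$. Summing yields the key identity
\[ \sum_{k=1}^{d-1}\mu_k=\sum_{k=d'}^{n}\mu_k, \]
with $d,d'\geq 2$ by nontriviality, so both sums are nonempty. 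From here the plan is to combine this identity with the monotonicity of $\mu$ and with finer information -- the full MSRD parameters $(j,\delta)$ of $\Cc$ and $(j',\delta')$ of $\Cd$, together with the classification of optimal anticodes in Theorem~\ref{thm:classificationanticodes} -- to force $\mu$ to be constant, which is equivalent to $m_1=\cdots=m_\ell$.

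The main obstacle is this last step. The numerical identity displayed above admits solutions with non-constant $\mu$, so the argument cannot be purely combinatorial on $\mu$; it must use the fact that $\Cc$ is a genuine code and interact with the explicit parametrization by $(j,\delta)$. A natural approach is to intersect $\Cc$ and $\Cd$ with carefully chosen optimal product anticodes supplied by Theorem~\ref{thm:classificationanticodes} and to use the MSRD characterizations from the preceding proposition to turn the Singleton inequalities of Theorem~\ref{singletonbound} into equalities on both sides simultaneously, extracting contradictions as soon as some $m_j>m_{j+1}$. A case analysis on whether $j$ and $j'$ lie in the same block or in different blocks is essential, and this is where the argument demands the most care.
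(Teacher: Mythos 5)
Your forward direction is sound. With $m_1=\cdots=m_\ell=m$, the closed form $d_r(\Cc)=d-1+\lceil r/m\rceil$ does follow from monotonicity, the growth property \eqref{eqn:growthgenwts}, and Corollary~\ref{cor.rsb} (it is also exactly what Theorem~\ref{genwtsMSRD} gives), and substituting it into Theorem~\ref{weiduality} at $r=1$ yields $d(\Cc^\perp)=n-d(\Cc)+2$, which together with $\dim(\Cc^\perp)=m(d(\Cc)-1)$ shows $\Cc^\perp$ is MSRD. Apart from the degenerate case $\Cc=\MM$, this half is complete.

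The converse has a genuine gap, and you have located it yourself: the identity $\sum_{k=1}^{d-1}\mu_k=\sum_{k=d'}^{n}\mu_k$ does not force $\mu$ to be constant, and your plan for closing the argument (intersecting with optimal anticodes and a case analysis on $(j,\delta)$ and $(j',\delta')$) is a statement of intent rather than a proof. Concretely, the ambient space $\F_q^{2\times 1}\times\F_q\times\F_q$ with $\mu=(2,1,1)$ and $d=d'=2$ satisfies your identity together with all the dimension constraints, yet has $m_1\neq m_2$; so no purely combinatorial manipulation of $\mu$ can finish the job, and some code-theoretic input is unavoidable. The missing ingredient is the equality $d(\Cc)+d(\Cc^\perp)=n+2$, which is precisely the proposition quoted immediately after this theorem in the text (Proposition VII.11 of the cited paper). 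Note that applying Theorem~\ref{singletonbound} with $r=1$ to both $\Cc$ and $\Cc^\perp$ and adding only yields $\sum_{k=1}^{d-1}\mu_k+\sum_{k=1}^{d'-1}\mu_k\leq N$, hence $d(\Cc)+d(\Cc^\perp)\leq n+2$; the reverse inequality is the step that genuinely uses the existence of the code (via the anticode characterization of the MSRD property) and is exactly what your sketch does not supply. Once it is in hand, the conclusion is two lines: the two sums in your identity then have the same number of terms, $d-1=n-d'+1$, so you are equating the sum of the $d-1$ largest entries of the nonincreasing positive sequence $\mu$ with the sum of its $d-1$ smallest entries; the termwise inequalities $\mu_k\geq\mu_{d'-1+k}$ must all be equalities, and since $d'\geq 2$ the intervals $[k,d'-1+k]$ for $k\in[d-1]$ overlap consecutively and cover $[1,n]$, forcing $\mu_1=\cdots=\mu_n$, i.e.\ $m_1=\cdots=m_\ell$. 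As written, the converse in your proposal is therefore incomplete.
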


Finally, the property that both $\Cc$ and $\Cc^\perp$ are MSRD has a simple equivalent numerical formulation.

\begin{proposition}[{\cite[Proposition VII.11]{ourpaper}}]
Let $\Cc\subseteq\MM$ be a non-trivial code. The following are equivalent:
\begin{itemize}
\item $\Cc$ and $\Cc^\perp$ are MSRD,
\item $d(\Cc)+d(\Cc^\perp)=n+2$.
\end{itemize}
\end{proposition}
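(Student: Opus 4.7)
The plan is to prove the two implications separately: the forward direction follows swiftly from the theorem in the excerpt just above, while the reverse direction requires a careful combination of the Singleton bound applied to both $\Cc$ and $\Cc^\perp$ with the monotonicity $m_1\geq\ldots\geq m_\ell$.

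For $(\Rightarrow)$, assuming both $\Cc$ and $\Cc^\perp$ are MSRD, the theorem recalled just above forces $m_1=\ldots=m_\ell=m$. In this uniform setting the MSRD property of $\Cc$ reads $\dim(\Cc)=m(n-d(\Cc)+1)$, and similarly $\dim(\Cc^\perp)=m(n-d(\Cc^\perp)+1)$. Summing the two equalities and using $\dim(\Cc)+\dim(\Cc^\perp)=\dim(\MM)=mn$ yields $d(\Cc)+d(\Cc^\perp)=n+2$.

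For $(\Leftarrow)$, suppose $d(\Cc)+d(\Cc^\perp)=n+2$. First I would write the unique decompositions $d(\Cc)-1=\sum_{i=1}^{j-1}n_i+\delta$ with $0\leq\delta\leq n_j-1$ and $d(\Cc^\perp)-1=\sum_{i=1}^{k-1}n_i+\gamma$ with $0\leq\gamma\leq n_k-1$ prescribed by the Singleton bound (Theorem~\ref{singletonbound}), and let $A=\sum_{i=j}^{\ell}m_in_i-m_j\delta$ and $B=\sum_{i=k}^{\ell}m_in_i-m_k\gamma$ be the corresponding upper bounds on $\dim(\Cc)$ and $\dim(\Cc^\perp)$. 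Since $\dim(\Cc)+\dim(\Cc^\perp)=\sum_i m_in_i$, the inequality $A+B\geq\sum_i m_in_i$ is automatic. Establishing the reverse inequality $A+B\leq\sum_i m_in_i$ forces equality everywhere, hence $\dim(\Cc)=A$ and $\dim(\Cc^\perp)=B$, i.e., both codes are MSRD.

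The crux of the argument is the reverse inequality. Without loss of generality $j\leq k$; the hypothesis $d(\Cc)+d(\Cc^\perp)=n+2$ rewrites as the numerical identity $\sum_{i=k}^{\ell}n_i=\sum_{i=1}^{j-1}n_i+\delta+\gamma$. Combined with the monotonicity of the $m_i$, namely $\sum_{i=k}^{\ell}m_in_i\leq m_k\sum_{i=k}^{\ell}n_i$ together with $m_k\leq m_j\leq m_i$ for every $i\leq j-1$, one gets the chain $\sum_{i=k}^{\ell}m_in_i\leq m_k\sum_{i=1}^{j-1}n_i+m_k\delta+m_k\gamma\leq\sum_{i=1}^{j-1}m_in_i+m_j\delta+m_k\gamma$, which rearranges exactly to $A+B\leq\sum_i m_in_i$. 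The main obstacle is orchestrating this chain correctly and keeping track of the two decompositions; as a byproduct, equality throughout forces $m_i=m_j$ for every $i$ with $n_i>0$, consistently with the converse half of the previous theorem.
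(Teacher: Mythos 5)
Your argument is correct, and the chapter itself does not reproduce a proof of this proposition (it only cites \cite{ourpaper}), so I will assess it on its own terms. The reverse implication, which is the substantive half, is exactly right: applying Theorem~\ref{singletonbound} with $r=1$ to $\Cc$ and to $\Cc^\perp$ gives $\dim(\Cc)\leq A$ and $\dim(\Cc^\perp)\leq B$, the identity $\dim(\Cc)+\dim(\Cc^\perp)=\dim(\MM)$ gives $A+B\geq\sum_i m_in_i$ for free, and your chain
$\sum_{i=k}^{\ell}m_in_i\leq m_k\sum_{i=k}^{\ell}n_i=m_k\bigl(\sum_{i=1}^{j-1}n_i+\delta+\gamma\bigr)\leq\sum_{i=1}^{j-1}m_in_i+m_j\delta+m_k\gamma$
is valid precisely because $m_i\geq m_j\geq m_k$ for $i\leq j-1$ and $m_i\leq m_k$ for $i\geq k$ (with $j\leq k$ after the harmless swap of $\Cc$ and $\Cc^\perp$); forcing equality in $\dim(\Cc)\leq A$ and $\dim(\Cc^\perp)\leq B$ is exactly the definition of MSRD for the pair $(j,\delta)$, resp.\ $(k,\gamma)$, determined by the minimum distances. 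Two small points deserve attention. First, you should note explicitly that non-triviality is what guarantees $1\leq d(\Cc),d(\Cc^\perp)\leq n$, so that the two decompositions exist (and in fact the hypothesis forces $d(\Cc),d(\Cc^\perp)\geq 2$). Second, and more importantly, your forward implication invokes the converse half of the theorem stated just above (that both codes MSRD forces $m_1=\cdots=m_\ell$); in the original reference that statement is Proposition VII.13 and is proved \emph{after}, and by means of, the present proposition, so tracing dependencies back to the source makes your forward direction circular. Within the logical order of this chapter the step is legitimate, and the repair is in any case cheap --- granting $m_1=\cdots=m_\ell=m$, the MSRD conditions read $\dim(\Cc)=m(n-d(\Cc)+1)$ and $\dim(\Cc^\perp)=m(n-d(\Cc^\perp)+1)$, and summing gives the claim --- but you should either flag the dependence or prove that converse independently (your own equality analysis at the end of the reverse direction essentially does this).
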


Both for linear block codes and rank-metric codes, the generalized weights of an MDS or MRD code are determined by its parameters. 
The next theorem extends this result to MSRD codes. In particular, the generalized weights of an MSRD code $\Cc$ are determined by its parameters and are the largest possible for a subcode of $\MM$ of the same dimension as $\Cc$. In fact, they coincide with the last $\dim(\Cc)$ generalized weights of $\MM$.

\begin{theorem}[\protect{\cite[Theorem VII.14]{ourpaper}}]\label{genwtsMSRD}
Let $\Cc\subseteq\MM$ be an MSRD code and write $d(\Cc)=\sum_{i=1}^{j-1} n_i+\delta+1$ for some $j\in[\ell]$ and $0\leq \delta\leq n_j-1$. Then for $r\in[\dim(\Cc)]$
$$d_r(\Cc)=d_{\sum_{i=1}^{j-1}m_in_i+m_j\delta+r}(\MM).$$
\end{theorem}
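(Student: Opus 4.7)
Set $N := \sum_{i=1}^{j-1} m_i n_i + m_j \delta$. The MSRD hypothesis reads $\dim(\Cc) = \dim(\MM) - N$, and the formula for $d_s(\MM)$ recalled after~\eqref{eqn:growthgenwts2} gives $d_{N+1}(\MM) = \sum_{i=1}^{j-1} n_i + \delta + 1 = d(\Cc) = d_1(\Cc)$, together with $d_N(\MM) = d_1(\Cc) - 1$ when $N\geq 1$ (the case $N=0$ forces $\Cc=\MM$ and is trivial). The plan is to prove the two matching inequalities $d_r(\Cc) \leq d_{N+r}(\MM)$ and $d_r(\Cc) \geq d_{N+r}(\MM)$ separately.

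For the upper bound I would invoke Corollary~\ref{cor.rsb}. Write $N+r = \sum_{i=1}^{j_0-1} m_i n_i + m_{j_0}\delta_0 + s_0$ with $1 \leq s_0 \leq m_{j_0}$ and $0 \leq \delta_0 \leq n_{j_0}-1$, so that $d_{N+r}(\MM) = \sum_{i=1}^{j_0-1} n_i + \delta_0 + 1$. I would then apply Corollary~\ref{cor.rsb} with auxiliary parameters $(j_0,\delta_0+1)$ when $\delta_0+1\leq n_{j_0}-1$, with $(j_0+1,0)$ when $\delta_0=n_{j_0}-1$ and $j_0<\ell$, and use the trivial bound $d_r(\Cc)\leq n$ in the remaining edge case $j_0=\ell$, $\delta_0=n_\ell-1$. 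In each instance the required hypothesis $\dim(\Cc)\geq\sum_{i=j''}^\ell m_in_i-m_{j''}\delta''+r$ reduces, via $\dim(\Cc)=\dim(\MM)-N$, to the trivial inequality $s_0\leq m_{j_0}$, and the conclusion of the corollary is precisely $d_r(\Cc) \leq d_{N+r}(\MM)$.

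For the lower bound I would use the subsequence observation recalled after~\eqref{eqn:growthgenwts2}: $(d_r(\Cc))_{r=1}^{\dim(\Cc)}$ is a subsequence of $(d_s(\MM))_{s=1}^{\dim(\MM)}$, so there are indices $s_1<s_2<\dots<s_{\dim(\Cc)}$ with $d_r(\Cc)=d_{s_r}(\MM)$. Choosing these greedily (at each step picking the smallest admissible index), $s_1$ becomes the smallest $s$ with $d_s(\MM)=d_1(\Cc)$, namely $s_1=N+1$, since $d_N(\MM)=d_1(\Cc)-1$ while $d_{N+1}(\MM)=d_1(\Cc)$. Strict monotonicity then forces $s_r\geq s_1+(r-1)\geq N+r$, and since $(d_s(\MM))_s$ is non-decreasing we conclude $d_r(\Cc)=d_{s_r}(\MM)\geq d_{N+r}(\MM)$, which combines with the upper bound to give the desired equality.

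The main obstacle I expect is the bookkeeping in the upper-bound case split, in particular the rollover case $\delta_0=n_{j_0}-1$ and its degenerate subcase $j_0=\ell$; once the correct auxiliary parameters are chosen, each instance collapses to the elementary inequality $s_0\leq m_{j_0}$. The lower-bound half, by contrast, is essentially free once one invokes the subsequence property already available from~\cite{ourpaper}.
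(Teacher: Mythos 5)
Your argument is correct, and both halves check out: in the upper bound the hypothesis of Corollary~\ref{cor.rsb} does reduce to $s_0\leq m_{j_0}$ in each of your three cases, and the lower bound is a legitimate use of the subsequence property stated after~\eqref{eqn:growthgenwts2}. Note, however, that this chapter states Theorem~\ref{genwtsMSRD} without proof (it is quoted from~\cite{ourpaper}), so there is no in-text argument to compare against; I can only assess your proposal on its own terms. On those terms, one remark: your lower-bound half already proves the full equality, making the entire Corollary~\ref{cor.rsb} case analysis redundant. Indeed, for \emph{any} embedding $s_1<\dots<s_{\dim(\Cc)}$ witnessing the subsequence property one has $s_1\geq N+1$ (since $N+1$ is the first index at which the sequence of $\MM$ attains the value $d_1(\Cc)$), hence $s_r\geq N+r$ by strict monotonicity; but also $s_{\dim(\Cc)}\leq\dim(\MM)=N+\dim(\Cc)$, which squeezes $s_r=N+r$ for every $r$ and gives $d_r(\Cc)=d_{s_r}(\MM)=d_{N+r}(\MM)$ outright. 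So the only real inputs are the subsequence property, the explicit formula for the generalized weights of $\MM$, and the two numerical identities $\dim(\Cc)=\dim(\MM)-N$ and $d(\Cc)=d_{N+1}(\MM)>d_N(\MM)$ coming from the MSRD hypothesis; you could drop the first two paragraphs of bookkeeping entirely.
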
 

Alternatively, one can express the generalized weights of an MSRD code as follows.
Let $h\in[n]$ and write $h=\sum_{i=1}^{t-1}n_i+\delta+1$, with $0\leq \delta\leq n_t-1$.
For each $h\in[n]$ we define $r_h$ as $$r_h=\sum_{i=1}^{t-1} m_in_i+(\delta+1)m_t.$$ 
For an MSRD code $\Cc$, let $d(\Cc)\leq h\leq n$ and let $k=\max\{\nu\mid\sum_{i=1}^{\nu-1}n_i<h\}$. Theorem~\ref{genwtsMSRD} states that 
$$d_{r_h-r_{d(\Cc)-1}-m_k+1}(\Cc)=\cdots=d_{r_h-r_{d(\Cc)-1}}(\Cc)=h.$$

A similar result holds for $r$-MSRD codes. $r$-MSRD codes are a generalization of $r$-MRD codes in the rank metric and are defined as follows. 
\begin{definition}[\protect{\cite[Definition VII.18]{ourpaper}}]\label{defn:rMSRD}
Let $j\in[\ell]$, $0\leq \delta\leq n_j-1$, and let $\Cc\subseteq\MM$ be a code of $\dim(\Cc)=\sum_{i=j}^\ell m_in_i-\delta m_j$. 
Define $d_{\max}=\sum_{i=1}^{j-1}n_i+\delta+1$. Let $d_{\max}\leq h\leq n$ and $$r=r_h-r_{d_{\max}-1}-m_k+1,$$ 
where $k=\max\{\nu\mid \sum_{i=1}^{\nu-1}n_i<h\}$.
We say that $\Cc$ is $r$-MSRD if $d_r(\Cc)=h.$
\end{definition}
Notice that a code is MSRD if and only if it is $1$-MSRD. Therefore, the next theorem extends Theorem~\ref{genwtsMSRD}.
\begin{theorem}[\protect{\cite[Theorem VII.19]{ourpaper}}]\label{thm:rMSRD}
Let $j\in[\ell]$, $0\leq\delta\leq n_j-1$, and let $\Cc\subseteq\MM$ be a non-trivial code of $\dim(\Cc)=\sum_{i=j}^\ell m_in_i-\delta m_j$. Define $d_{\max}=\sum_{i=1}^{j-1}n_i+\delta+1$. Let $d_{\max}\leq h\leq n$ and $$r=r_h-r_{d_{\max}-1}-m_k+1,$$ 
where $k=\max\{\nu\mid \sum_{i=1}^{\nu-1}n_i<h\}$. If $\Cc$ is $r$-MSRD, then  $$d_{r}(\Cc)=\ldots=d_{r+m_k-1}(\Cc)=h.$$
Moreover, $\Cc$ is $(r+m_k)$-MSRD. 
\end{theorem}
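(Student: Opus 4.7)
The plan is to pin the values $d_r(\Cc), d_{r+1}(\Cc), \dots, d_{r+m_k}(\Cc)$ by sandwiching them between the monotonicity and growth lower bounds of Proposition~\ref{properties} and equation~\eqref{eqn:growthgenwts}, and the Singleton-type upper bound of Corollary~\ref{cor.rsb}. I will first unwind the closed form of $r$: writing $h = \sum_{i<k} n_i + \delta_h + 1$ with $0\le \delta_h \le n_k-1$, a direct substitution yields $r_h = \sum_{i<k} m_i n_i + (\delta_h+1)m_k$ and $r_{d_{\max}-1} = \sum_{i<j} m_i n_i + \delta m_j$, hence
$$r = \sum_{i=j}^{k-1} m_i n_i + \delta_h m_k - \delta m_j + 1.$$
Every subsequent inequality reduces to a linear check against this expression.

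\textbf{The chain $d_r(\Cc)=\dots=d_{r+m_k-1}(\Cc)=h$.} The $r$-MSRD hypothesis anchors $d_r(\Cc)=h$, so Proposition~\ref{properties}(2) gives $d_{r+s}(\Cc)\ge h$ for every $0\le s\le m_k-1$. For the reverse inequality I will apply Corollary~\ref{cor.rsb} with $r'=r+m_k-1$ and select $(j',\delta')$ so that $\sum_{i<j'}n_i+\delta'=h$: the natural choices are $(j',\delta')=(k,\delta_h+1)$ when $\delta_h\le n_k-2$ and $(j',\delta')=(k+1,0)$ when $\delta_h=n_k-1$ and $k<\ell$, while the sub-case $h=n$ makes the bound $d_{r+m_k-1}(\Cc)\le n$ automatic via Proposition~\ref{properties}(4). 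A direct substitution shows that the hypothesis of Corollary~\ref{cor.rsb} holds with equality in the two non-trivial cases, yielding $d_{r+m_k-1}(\Cc)\le h$ and therefore the asserted chain of equalities.

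\textbf{The jump at $r+m_k$ and the MSRD conclusion.} Assuming $h<n$ (otherwise the moreover statement is vacuous, as $r+m_k>\dim(\Cc)$), I will observe that $d_{r+m_k}(\Cc)\ge d_r(\Cc)=h>\sum_{i<k}n_i$, the strict inequality being immediate from $\delta_h+1\ge 1$, so equation~\eqref{eqn:growthgenwts} applied with the given $k$ forces $d_{r+m_k}(\Cc)\ge h+1$. For the matching upper bound I will invoke Corollary~\ref{cor.rsb} again, now with $r'=r+m_k$ and a decomposition satisfying $\sum_{i<j'}n_i+\delta'=h+1$; the concrete choice depends on whether $\delta_h\le n_k-3$, $\delta_h=n_k-2$, or $\delta_h=n_k-1$ (and on whether $n_{k+1}\ge 2$), but in every case the required inequality holds with slack $1\le m_k$ or $1\le m_{k+1}$. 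Therefore $d_{r+m_k}(\Cc)=h+1$. A final identification via the relation $r_{h+1}=r_h+m_{k'}$, where $k'=k$ if $\delta_h<n_k-1$ and $k'=k+1$ if $\delta_h=n_k-1$, gives $r+m_k=r_{h+1}-r_{d_{\max}-1}-m_{k'}+1$, which is precisely the index attached to $h+1$ in Definition~\ref{defn:rMSRD}; consequently $\Cc$ is $(r+m_k)$-MSRD.

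\textbf{Principal obstacle.} The only genuine difficulty is notational: one has to track the transition $k\to k+1$, and in degenerate sub-cases $k\to k+2,\dots$ when consecutive $n_i$ equal $1$, whenever $\delta_h$ saturates $n_k-1$, so that the chosen $(j',\delta')$ in the Corollary and the identity $r_{h+1}=r_h+m_{k'}$ remain coherent. Each case ultimately collapses to a linear inequality whose slack is a single positive factor $m_\nu$, so the underlying arithmetic stays uniform throughout.
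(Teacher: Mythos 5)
The chapter states this theorem without proof --- it is quoted verbatim from \cite[Theorem VII.19]{ourpaper} --- so there is no in-text argument to compare against; judging your proposal on its own, it is correct and is assembled entirely from results the survey does state. The three load-bearing computations all check out: (i) with $h=\sum_{i<k}n_i+\delta_h+1$ one gets $r+m_k-1=\sum_{i=j}^{k-1}m_in_i+(\delta_h+1)m_k-\delta m_j$, and for the unique admissible pair $(j',\delta')$ with $\sum_{i<j'}n_i+\delta'=h$ the hypothesis of Corollary~\ref{cor.rsb} is satisfied with equality, giving $d_{r+m_k-1}(\Cc)\leq h$ (with the $h=n$ case absorbed by Proposition~\ref{properties}(4)); (ii) the hypothesis $d_{r+m_k}(\Cc)>\sum_{i<k}n_i$ of \eqref{eqn:growthgenwts} follows from monotonicity and $\delta_h+1\geq 1$, forcing the jump to $h+1$; (iii) for the pair realizing $h+1$ the same Singleton-type bound holds with slack exactly $m_\nu-1\geq 0$, and the identity $r_{h+1}=r_h+m_{k'}$ makes $r+m_k$ precisely the index that Definition~\ref{defn:rMSRD} attaches to $h+1$, independently of whether $k'=k$ or $k'=k+1$. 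Two small points worth making explicit in a write-up: $r+m_k\leq\dim(\Cc)$ holds exactly when $h<n$ (so the ``moreover'' clause is indeed vacuous at $h=n$, as you say), and the bookkeeping for $(j',\delta')$ when several consecutive $n_i$ equal $1$ only affects the choice in Corollary~\ref{cor.rsb}, not the value of $k'$ in the index identification. This is essentially the argument one would expect the cited source to give, so I have no substantive objection.
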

Similarly to MSRD codes, any $r$-MSRD code $\Cc$ has the property that its generalized weights $d_r(\Cc),\ldots,d_{\dim(\Cc)}(\Cc)$ are determined by its parameters and are as large as possible, for a subcode of $\MM$ of the same dimension as $\Cc$. In fact, they coincide with the corresponding generalized weights of an MSRD code of the same dimension as $\Cc$.

We conclude the section by discussing the existence of MSRD codes. In the next section, we discuss explicit MSRD code constructions that are linear over an extension field, including linearized Reed-Solomon codes \cite{linearizedRS}. Such constructions satisfy the condition $ m_1 = m_2 = \ldots = m_\ell $. Non-trivial constructions of MSRD codes such that not all $ m_1, m_2 , \ldots, m_\ell $ are equal are given in \cite[Section VII]{BGRMSRD}. In addition, the next results allow us to construct MSRD codes by puncturing and shortening MSRD codes.

\begin{theorem}[\protect{\cite[Theorem VI.14]{BGRMSRD}}] \label{th shortening MSRD}
Suppose there exists an MSRD code $\Cc\subseteq\MM$ such that $$d(\Cc)=\sum_{i=1}^{j-1} n_i+\delta+1 \;\mbox{ and }\; \dim(\Cc)=\sum_{i=j}^{\ell} m_in_i-\delta m_{j},$$ for some $j\in[\ell]$ and $0\leq \delta\leq n_j-1$.
\begin{enumerate}
\item
Choose $ s \in \{j,\ldots, \ell \} $ and set
$$ \widetilde{n}_i = \left\lbrace \begin{array}{ll}
n_i, & \textrm{if } i \neq s, \\
n_s - 1, & \textrm{if } i = s.
\end{array} \right. $$
There exists an MSRD code with sum-rank distance $ d $ in $ \widetilde{\MM} := \mathbb{F}_q^{m_1 \times \widetilde{n}_1} \times \cdots \times \mathbb{F}_q^{m_\ell \times \widetilde{n}_\ell} $. 
\item
Choose $ s \in \{j + 1,\ldots, \ell \} $ and set
$$ \widetilde{m}_i = \left\lbrace \begin{array}{ll}
m_i, & \textrm{if } i \neq s, \\
m_s - 1, & \textrm{if } i = s.
\end{array} \right. $$
There exists a linear MSRD code with sum-rank distance $ d $ in $ \widetilde{\MM} := \mathbb{F}_q^{\widetilde{m}_1 \times n_1} \times \cdots \times \mathbb{F}_q^{\widetilde{m}_\ell \times n_\ell} $. 
\end{enumerate}
\end{theorem}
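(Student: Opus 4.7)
The plan is to construct $\widetilde{\Cc}$ by \emph{shortening} $\Cc$: take the subcode of codewords vanishing on a chosen coordinate (a column of a block for part~(1), a row for part~(2)) and project that coordinate out. The proof then reduces to a sandwich between an elementary lower bound on the dimension of the shortened code and the Singleton Bound (Theorem~\ref{singletonbound}) applied in the smaller ambient space.

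For part~(1), fix $s\in\{j,\ldots,\ell\}$ and (after relabeling columns of the $s$-th block) assume that the column to be removed is the $n_s$-th one. Let $\phi\colon\MM\to\F_q^{m_s}$ be the linear map extracting this column and $\pi\colon\MM\to\widetilde{\MM}$ the projection deleting it; set $\widetilde{\Cc}:=\pi(\Cc\cap\ker\phi)$. Since codewords in $\ker\phi$ are zero in the deleted column, $\pi|_{\ker\phi}$ is injective, and rank--nullity applied to $\phi|_\Cc$ gives
$$\dim(\widetilde{\Cc})=\dim(\Cc\cap\ker\phi)\geq\dim(\Cc)-m_s.$$
Moreover, for $\widetilde{C}=\pi(C)\in\widetilde{\Cc}$ with $C\in\Cc\cap\ker\phi$, the deleted column of $C$ is zero, so $\srk(\widetilde{C})=\srk(C)\geq d$; in particular $d(\widetilde{\Cc})\geq d$.

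A short case analysis (splitting $s>j$, $s=j$ with $\delta<n_j-1$, and $s=j$ with $\delta=n_j-1$, in which last case the parameters roll over to $\widetilde{j}=j+1$, $\widetilde{\delta}=0$) shows that the Singleton profile in $\widetilde{\MM}$ at minimum distance $d$ assigns dimension exactly $\dim(\Cc)-m_s$. Applying Theorem~\ref{singletonbound} to $\widetilde{\Cc}$, together with the fact that the Singleton upper bound is strictly decreasing in the minimum distance, we obtain $\dim(\widetilde{\Cc})\leq\dim(\Cc)-m_s$. Combined with the lower bound, this forces equality and also $d(\widetilde{\Cc})=d$, whence $\widetilde{\Cc}$ is MSRD with the prescribed distance.

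Part~(2) proceeds analogously, shortening a row of the $s$-th block for $s\in\{j+1,\ldots,\ell\}$: set $\phi\colon\MM\to\F_q^{n_s}$ to extract the row, $\pi\colon\MM\to\widetilde{\MM}$ to delete it, and put $\widetilde{\Cc}=\pi(\Cc\cap\ker\phi)$. Deleting a zero row of a matrix does not change its rank, so $d(\widetilde{\Cc})\geq d$ and $\dim(\widetilde{\Cc})\geq\dim(\Cc)-n_s$. The restriction $s\geq j+1$ ensures that the Singleton profile parameters $(\widetilde{j},\widetilde{\delta})=(j,\delta)$ are unchanged when passing to $\widetilde{\MM}$ and makes the corresponding Singleton dimension there equal $\dim(\Cc)-n_s$; the same sandwich argument then yields MSRD. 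The main subtlety in both parts is precisely this Singleton-profile computation in the shortened ambient space, and it is what forces the restrictions $s\geq j$ in~(1) and $s\geq j+1$ in~(2): for smaller $s$, the Singleton dimension in $\widetilde{\MM}$ at distance $d$ is strictly larger than the lower bound furnished by the shortening construction, and the sandwich collapses.
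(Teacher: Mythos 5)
Your proposal is correct and follows the same route the paper indicates: the chapter gives no proof of this statement (it is quoted from \cite[Theorem VI.14]{BGRMSRD}) but explicitly remarks that it ``is obtained via shortening,'' which is exactly your construction of $\widetilde{\Cc}=\pi(\Cc\cap\ker\phi)$ sandwiched between the rank--nullity lower bound and the Singleton Bound in $\widetilde{\MM}$. The case analysis on the Singleton profile (including the roll-over to $(\widetilde{j},\widetilde{\delta})=(j+1,0)$ when $s=j$ and $\delta=n_j-1$) is the right bookkeeping, so nothing essential is missing.
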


\begin{theorem}[\protect{\cite[Theorem VI.15]{BGRMSRD}}] \label{th puncturing MSRD}
Suppose there exists an MSRD code $\Cc\subseteq\MM$ such that $$d(\Cc)=\sum_{i=1}^{j-1} n_i+\delta+1 \;\mbox{ and }\; \dim(\Cc)=\sum_{i=j}^{\ell} m_in_i-\delta m_{j},$$ for some $j\in[\ell]$ and $0\leq \delta\leq n_j-1$. If
$ \delta > 0 $, choose $ s \in [j] $ and if $ \delta = 0 $, choose $ s \in [j-1] $. Set
$$ \widetilde{n}_i = \left\lbrace \begin{array}{ll}
n_i, & \textrm{if } i \neq s, \\
n_s - 1, & \textrm{if } i = s.
\end{array} \right. $$
There exists an MSRD code with sum-rank distance $ d-1 $ in $ \widetilde{\MM} := \mathbb{F}_q^{m_1 \times \widetilde{n}_1} \times \cdots \times \mathbb{F}_q^{m_\ell \times \widetilde{n}_\ell} $. 
\end{theorem}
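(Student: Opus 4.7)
The plan is to produce $\widetilde{\Cc}$ by puncturing $\Cc$: fix $s$ as in the statement, choose one column of the $s$-th block (say the last), and let $\pi_s : \MM \to \widetilde{\MM}$ be the $\F_q$-linear projection that deletes this column and leaves the remaining blocks untouched. Set $\widetilde{\Cc} := \pi_s(\Cc) \subseteq \widetilde{\MM}$. It suffices to show that $\widetilde{\Cc}$ has dimension $\dim(\Cc)$ and minimum sum-rank distance exactly $d-1$, and that these parameters fit the MSRD template in $\widetilde{\MM}$.

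For injectivity of $\pi_s\rest_\Cc$, note that any element of its kernel is a codeword supported on a single column, hence of sum-rank weight at most $1$. The hypotheses force $d \geq 2$: if $\delta = 0$ then $s \in [j-1]$ requires $j \geq 2$ and $d \geq n_1 + 1 \geq 2$, while if $\delta > 0$ then $d \geq \delta + 1 \geq 2$. So the kernel is trivial and $\dim(\widetilde{\Cc}) = \dim(\Cc)$. Since deleting a column reduces the rank of any matrix by at most one, $\srk(\pi_s(C)) \geq \srk(C) - 1 \geq d - 1$ for every nonzero $C \in \Cc$, giving $d(\widetilde{\Cc}) \geq d - 1$.

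I would then identify the MSRD parameters of $\widetilde{\Cc}$ in $\widetilde{\MM}$ by a short case analysis on $(s, \delta)$: in the three subcases $s < j$ with $\delta = 0$, $s < j$ with $\delta > 0$, and $s = j$ with $\delta > 0$, set $(\widetilde{j}, \widetilde{\delta})$ equal to $(j, 0)$, $(j, \delta)$, and $(j, \delta - 1)$ respectively, and verify by direct substitution that $d - 1 = \sum_{i=1}^{\widetilde{j}-1} \widetilde{n}_i + \widetilde{\delta} + 1$ with $0 \leq \widetilde{\delta} \leq \widetilde{n}_{\widetilde{j}} - 1$, and that $\dim(\widetilde{\Cc}) = \sum_{i=\widetilde{j}}^{\ell} m_i \widetilde{n}_i - \widetilde{\delta}\, m_{\widetilde{j}}$. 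To upgrade $d(\widetilde{\Cc}) \geq d - 1$ to an equality, I apply Theorem~\ref{singletonbound} to $\widetilde{\Cc}$ with the shifted pair $(\widetilde{j}, \widetilde{\delta} + 1)$, or with $(\widetilde{j} + 1, 0)$ when $\widetilde{\delta} = \widetilde{n}_{\widetilde{j}} - 1$ and $\widetilde{j} < \ell$: the assumption $d(\widetilde{\Cc}) \geq d$ would then yield $\dim(\widetilde{\Cc}) \leq \dim(\widetilde{\Cc}) - m_{\widetilde{j}}$, a contradiction since $m_{\widetilde{j}} \geq 1$. The only remaining possibility is $\widetilde{j} = \ell$ and $\widetilde{\delta} = \widetilde{n}_{\ell} - 1$, in which case a direct computation shows $d - 1 = \widetilde{n}$, and the required bound $d(\widetilde{\Cc}) \leq \widetilde{n} = d-1$ is trivial.

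The main obstacle is the accurate bookkeeping of the three subcases: pinning down the correct $(\widetilde{j}, \widetilde{\delta})$ in each so that the numerical MSRD template is satisfied exactly in $\widetilde{\MM}$, checking admissibility of the Singleton shift, and replacing the Singleton argument by the trivial bound $d(\widetilde{\Cc}) \leq \widetilde{n}$ in the single degenerate boundary case $\widetilde{j} = \ell$, $\widetilde{\delta} = \widetilde{n}_\ell - 1$.
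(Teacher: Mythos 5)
Your proposal is correct and follows exactly the route the paper indicates for this result: the chapter gives no proof of its own, citing \cite[Theorem VI.15]{BGRMSRD} and remarking only that the theorem ``is obtained via puncturing (or restriction),'' which is precisely your construction (delete a column of the $s$-th block, check injectivity via $d\geq 2$, bound the distance drop by one, and pin down the new MSRD parameters $(\widetilde{j},\widetilde{\delta})$ with the Singleton Bound forcing equality). The case bookkeeping and the degenerate boundary case $\widetilde{j}=\ell$, $\widetilde{\delta}=\widetilde{n}_\ell-1$ are handled correctly.
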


Theorem \ref{th shortening MSRD} is obtained via shortening, whereas Theorem \ref{th puncturing MSRD} is obtained via puncturing (or restriction). These operations may be performed e.g. on the MSRD constructed in \cite{linearizedRS}, \cite[Section VII]{BGRMSRD}, and in the next section to obtain MSRD codes with new parameters. Notice that when $ m_1 = \ldots = m_\ell $, the shortening and puncturing above can be done on any index $s$, which recovers \cite[Corollary 7]{gsrws}. Notice however that the construction in \cite[Corollary 7]{gsrws} preserves $ \mathbb{F}_{q^m} $-linearity, as discussed in the next section.

The next is a non-existential results, in the form of a bound on the parameter $ \ell $ for MSRD codes.

\begin{theorem} [\protect{\cite[Theorem VI.12]{BGRMSRD}}] \label{th non existence MSRD}
Suppose $ \nu = n_1 = \ldots = n_\ell $ and $ m = m_1 = \ldots = m_\ell $, and suppose there exists an MSRD code $\Cc\subseteq\MM$ of minimum sum-rank distance $ d \geq 3 $. Then
\begin{equation*}
\begin{split}
\ell & \leq \left\lfloor \frac{d-3}{\nu} \right\rfloor + \left\lfloor \frac{q^\nu - q^{\nu \lfloor (d-3)/\nu \rfloor+\nu-d+3} + (q-1)(q^m+1)}{q^\nu-1} \right\rfloor \\
 & \leq \left\lfloor \frac{d-3}{\nu} \right\rfloor + 1 + \left\lfloor \frac{q^m(q-1)}{q^\nu-1} \right\rfloor .
\end{split}
\end{equation*}
In particular, we have the following:
\begin{enumerate}
\item
If $ \nu \mid d-3 $, then
$$ \ell \leq \frac{d-3}{\nu} + \left\lfloor \frac{(q-1)(q^m+1)}{q^\nu-1} \right\rfloor . $$ 
\item
If $ d \leq \nu+2 $, then
\begin{equation*}
\begin{split}
\ell & \leq \left\lfloor \frac{q^\nu - q^{\nu-d+3} + (q-1)(q^m+1)}{q^\nu-1} \right\rfloor \\
 & \leq 1 + \left\lfloor \frac{q^m(q-1)}{q^\nu-1} \right\rfloor .
\end{split}
\end{equation*}
If in addition $ \nu = m $, then 
$$ \ell \leq \left\lfloor \frac{q^{\nu+1}-1}{q^\nu-1} \right\rfloor \leq q+1, $$
and if $ \nu = m \leq 2 $, then $ \ell \leq q $.
\item
If $ d = 3 $ and $ \nu \mid m $, then
$$ \ell \leq (q-1) \cdot \frac{q^m-1}{q^\nu-1}. $$
\end{enumerate}
\end{theorem}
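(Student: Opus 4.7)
My plan is to reduce the problem to the base case $d=3$ via puncturing, then obtain the required bound in that case by a combinatorial analysis of a parity-check matrix.

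First, I would apply Theorem~\ref{th puncturing MSRD} iteratively. Writing $d-3=\nu a+r$ with $a=\lfloor(d-3)/\nu\rfloor$ and $0\le r<\nu$, the idea is to perform $\nu$ successive puncturings on each of $a$ chosen blocks (eliminating them entirely from the ambient space) and then $r$ more puncturings on one additional block. Each step drops the minimum sum-rank distance by one and preserves the MSRD property, producing an MSRD code $\widetilde{\mathcal{C}}$ of minimum sum-rank distance $3$ living in an ambient space $\widetilde{\MM}$ of $\ell-a$ blocks of row size $m$, all of column size $\nu$ except possibly one of column size $\nu-r$. The bound on $\ell$ then reduces to bounding $\ell-a$ in terms of the parameters of $\widetilde{\MM}$, after which adding back $a$ recovers the first inequality of the theorem.

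In the base case, the code $\widetilde{\mathcal{C}}$ has codimension $2m$ and so admits a parity-check matrix $H=(H_1\mid\cdots\mid H_{\ell-a})$ with $H_i\in\F_q^{2m\times m\tilde n_i}$. The condition $d(\widetilde{\mathcal{C}})\ge 3$ forces two properties: (i) for each $i$, no nonzero matrix of rank at most $2$ in block $i$ lies in the kernel of the corresponding block of $H$; and (ii) for $i\ne j$, the syndromes of rank-one matrices in blocks $i$ and $j$ are never $\F_q^*$-proportional, as otherwise one could build a codeword of sum-rank weight $2$. Writing a rank-one matrix in block $i$ as $xy^t$ with $x\in\F_q^m$ and $y\in\F_q^{\tilde n_i}$, the syndrome factors through a bilinear map into $\F_q^{2m}$; taking the image modulo $\F_q^*$ gives a subset $\overline T_i\subseteq\mathbb{P}(\F_q^{2m})$ of size $(q^m-1)(q^{\tilde n_i}-1)/(q-1)^2$, and the constraints above force the $\overline T_i$ to be pairwise disjoint.

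The hard step, and the main obstacle, is the combinatorial count on the $\overline T_i$. Summing $|\overline T_i|$ over the $\ell-a-1$ full-width blocks together with the one residual block of column size $\nu-r$, and comparing with $|\mathbb{P}(\F_q^{2m})|=(q^{2m}-1)/(q-1)=(q^m-1)(q^m+1)/(q-1)$, one rearranges to obtain precisely $\ell-a\le\lfloor(q^\nu-q^{\nu-r}+(q-1)(q^m+1))/(q^\nu-1)\rfloor$; the ``$+1$'' term in $q^m+1$ arises from the Singer-type factorisation above. Adding $a$ yields the first bound of the theorem, and the coarser inequality follows by estimating $q^\nu-q^{\nu-r}\le q^\nu-1$. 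Parts (1)--(3) are then obtained by specialising $r$, $d$, and $\nu$ and using the divisibility $(q^\nu-1)\mid(q^m-1)$ when $\nu\mid m$. The delicate point is that the rank-two constraints inside a single block impose additional projective-geometric structure on each $\overline T_i$ beyond the naive disjointness, and it is this structure that the Singer-cycle decomposition exploits to extract the sharp numerator.
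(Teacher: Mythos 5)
This chapter does not actually prove the theorem; it is quoted from \cite[Theorem VI.12]{BGRMSRD}, so your proposal has to be measured against the argument in that source. Your two-step plan --- puncture via Theorem~\ref{th puncturing MSRD} down to minimum distance $3$, then analyse the syndromes of the sum-rank-weight-one elements under a parity-check map onto $\F_q^{2m}$ --- is essentially the route taken there, and the reduction is set up correctly: writing $d-3=a\nu+r$ with $a=\lfloor (d-3)/\nu\rfloor$, you end with $\ell-a$ blocks, all of width $\nu$ except one of width $\nu-r$, and the punctured code has $\F_q$-codimension $2m$. Your conditions (i) and (ii) on the parity-check blocks are also correct. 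But what you call ``the hard step'' is not hard: once the projective syndrome sets $\overline T_i$ are known to be pairwise disjoint, of sizes $(q^m-1)(q^{\tilde n_i}-1)/(q-1)^2$, comparison with $|\mathbb{P}(\F_q^{2m})|=(q^m-1)(q^m+1)/(q-1)$ gives $(\ell-a-1)(q^\nu-1)+(q^{\nu-r}-1)\le(q-1)(q^m+1)$, which rearranges \emph{exactly} to the first displayed bound. No ``Singer-cycle decomposition'' and no structure on the $\overline T_i$ beyond disjointness is needed for the main bound or for items 1 and 2; appealing to it there obscures an elementary count that you should simply carry out.

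Two genuine gaps remain. First, item 3 does \emph{not} follow from the main bound when $\nu=1$. For $d=3$ the main bound reads $\ell\le\lfloor(q-1)(q^m+1)/(q^\nu-1)\rfloor$; writing $(q-1)(q^m+1)=(q-1)(q^m-1)+2(q-1)$ and using that $(q^\nu-1)\mid(q^m-1)$ when $\nu\mid m$, the divisibility argument you sketch closes the gap only when $\nu\ge 2$, because then $2(q-1)/(q^\nu-1)<1$; for $\nu=1$ the two bounds differ by exactly $2$, and a genuinely additional argument about the associated partial-spread structure is required --- this is precisely where the extra geometric input enters in \cite{BGRMSRD}, and your proposal gestures at it without supplying it. Second, the coarser inequality does not follow from the estimate $q^\nu-q^{\nu-r}\le q^\nu-1$: that leaves you with $1+\lfloor(q-1)(q^m+1)/(q^\nu-1)\rfloor$, which can exceed $1+\lfloor q^m(q-1)/(q^\nu-1)\rfloor$ (take $\nu=1$). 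You must instead use $r\le\nu-1$, hence $q^{\nu-r}\ge q$, so the numerator is at most $(q^\nu-1)+q^m(q-1)$ and the stated bound follows after taking floors. A last bookkeeping point: at each puncturing step you should verify that an admissible index $s$ exists in Theorem~\ref{th puncturing MSRD}; this holds because the intermediate distances all exceed $3$.
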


Assume that $ \nu = n_1 = \ldots = n_\ell $ and $ m = m_1 = \ldots = m_\ell $. In this case, the MDS Conjecture \cite[page 265]{huffman_pless_2003} implies that, if an MSRD code exists, then $ \ell \leq (q^m+1)/n $. However, Theorem \ref{th non existence MSRD} gives a tighter bound on $ \ell $ when $ d \leq q^m(1-n(q-1)/(q^n-1))+4-n $. If $ m=\nu $ and $ d \leq \nu+2 $, then Theorem \ref{th non existence MSRD} gives the tighter bound $ \ell \leq q+1 $. Notice that, for any $ d \geq 3 $, if $ m=\nu=1 $, then the sum-rank metric coincides with the Hamming metric and Theorem \ref{th non existence MSRD} yields the bound $ \ell \leq q+d-2 $, which is known for MDS codes, see \cite[Corollary 7.4.3(ii)]{huffman_pless_2003}.

Finally, the bound in item 3 in Theorem \ref{th non existence MSRD} is met with equality by the MSRD codes in \cite[Section 4.4]{generalMSRD}. Moreover, following \cite{linearizedRS} one can construct linearized Reed-Solomon codes with $ \ell = q-1 $. When $ m=n $ grows and $ d $ is constant, the value $ \ell = q-1 = \Theta(q) $ attains asymptotically the general bound in Theorem \ref{th non existence MSRD}. 

\section{$\F_{q^m}$-linear codes}\label{sect:Fqmlinear}

Throughout this section, we restrict to the case $ m = m_1 = m_2 = \ldots = m_\ell $ and discuss some of the theory of $\mathbb{F}_{q^m}$-linear sum-rank metric codes in $\mathbb{F}_{q^m}^n$, where $ n = n_1 + n_2 + \cdots + n_\ell $. In this setting, many results from the previous sections may be simplified or even extended. In addition linearized Reed-Solomon codes, the main construction of MSRD codes, are linear over $\mathbb{F}_{q^m}$, whose size is subexponential in $n$. This is in contrast with Gabidulin codes in $\mathbb{F}_{q^m}^n$~\cite{Del, gabidulin, roth} and has the effect of making encoding and decoding more efficient for the same code length $n$.

Throughout the section, $\Cc \subseteq \mathbb{F}_{q^m}^n$ denotes an $\mathbb{F}_{q^m}$-linear code and $\dim(\Cc)$ is the dimension of $\Cc$ over $\F_{q^m}$.
We start by fixing an isomorphism between $ \mathbb{M} $ and $ \mathbb{F}_{q^m}^n $ in order to extend the sum-rank metric from $ \mathbb{M} $ to $ \mathbb{F}_{q^m}^n $, which allows us to consider $ \Cc $ as a sum-rank metric code. 

\begin{definition} \label{def matrix repr map}
Let $ \boldsymbol\alpha = (\alpha_1, \alpha_2, \ldots, \alpha_m) \in \mathbb{F}_{q^m}^m $ be an ordered basis of $ \mathbb{F}_{q^m} $ over $ \mathbb{F}_q $. For each positive integer $r$, we define the map $ M^r_{\boldsymbol\alpha} : \mathbb{F}_{q^m}^r \longrightarrow \mathbb{F}_q^{m \times r} $ as follows: For $ \mathbf{c} \in \mathbb{F}_{q^m}^r $, the matrix $ M_{\boldsymbol\alpha}^r(\mathbf{c}) $ is the only matrix in $ \mathbb{F}_q^{m \times r} $ such that $ \boldsymbol\alpha M_{\boldsymbol\alpha}^r(\mathbf{c}) = \mathbf{c} $. In other words, the entries of the $ i $th column of $ M_{\boldsymbol\alpha}^r(\mathbf{c}) $ are the coordinates of the $ i $th component of $ \mathbf{c} $ in the ordered basis $ \boldsymbol\alpha $.

For $ \mathbf{n} = (n_1, n_2, \ldots, n_\ell) $ and the ordered basis $ \boldsymbol\alpha \in \mathbb{F}_{q^m}^m $, define the map $ M^{\mathbf{n}}_{\boldsymbol\alpha} : \mathbb{F}_{q^m}^n \longrightarrow \mathbb{M} $ as follows: For $ \mathbf{c} = \left( \mathbf{c}^{(1)}, \mathbf{c}^{(2)}, \ldots, \mathbf{c}^{(\ell)} \right) \in \mathbb{F}_{q^m}^n $, where $ \mathbf{c}^{(i)} \in \mathbb{F}_{q^m}^{n_i} $ for $ i \in [\ell] $, we define 
$$ M^{\mathbf{n}}_{\boldsymbol\alpha}(\mathbf{c}) = \left( M^{n_1}_{\boldsymbol\alpha} \left( \mathbf{c}^{(1)} \right) , M^{n_2}_{\boldsymbol\alpha} \left( \mathbf{c}^{(2)} \right), \ldots, M^{n_\ell}_{\boldsymbol\alpha} \left( \mathbf{c}^{(\ell)} \right) \right). $$
\end{definition}

Clearly $ M^{\mathbf{n}}_{\boldsymbol\alpha} $ is an $ \mathbb{F}_q $-linear isomorphism between the vector spaces $ \mathbb{F}_{q^m}^n $ and $ \mathbb{M} $. In $ \mathbb{F}_{q^m}^n $, however, we can consider $ \mathbb{F}_{q^m} $-linear spaces in addition to $ \mathbb{F}_{q} $-linear spaces. Moreover, the isomorphism $ M^{\mathbf{n}}_{\boldsymbol\alpha} $ allows us to transfer the notion of sum-rank weight and distance from $\MM$ to $ \mathbb{F}_{q^m}^n $. Notice however that, in order to set up the isomorphism $ M^{\mathbf{n}}_{\boldsymbol\alpha} $ starting from $ \mathbb{F}_{q^m}^n $, one must fix a partition $ n = n_1 + n_2 + \cdots + n_\ell $ and a subfield $ \mathbb{F}_q $ (or equivalently, $ m $). 

Under this isomorphism, duality with respect to the usual inner product in $ \mathbb{F}_{q^m}^n $ corresponds to trace duality in $ \mathbb{M} $ according to Definition \ref{defn:dual}. The following is a straightforward extension of~\cite[Theorem 21]{Rav16a}.

\begin{proposition}
Let $ \Cc \subseteq \mathbb{F}_{q^m}^n $ be $ \mathbb{F}_{q^m} $-linear. Then
$$ M^{\mathbf{n}}_{\boldsymbol\alpha} \left( \Cc^\perp \right) = M^{\mathbf{n}}_{\boldsymbol\alpha} (\Cc)^\perp, $$
where $ \Cc^\perp = \{ \mathbf{d} \in \mathbb{F}_{q^m}^n \mid \mathbf{c}\mathbf{d}^\intercal = 0 \} $ is the dual of $ \Cc $ in $ \mathbb{F}_{q^m}^n $ with respect to the usual inner product.
\end{proposition}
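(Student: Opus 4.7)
The plan is to reduce the equality of $M^{\mathbf{n}}_{\boldsymbol\alpha}(\Cc^\perp)$ and $M^{\mathbf{n}}_{\boldsymbol\alpha}(\Cc)^\perp$ to a dimension count plus a single inclusion, and to establish that inclusion by translating the standard inner product on $\F_{q^m}^n$ into the trace form on $\MM$ via the field trace $T=T_{\F_{q^m}/\F_q}$.

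First, since $M^{\mathbf{n}}_{\boldsymbol\alpha}$ is an $\F_q$-linear isomorphism, both sides are $\F_q$-subspaces of $\MM$ of the same $\F_q$-dimension: on the left $\dim_{\F_q}M^{\mathbf{n}}_{\boldsymbol\alpha}(\Cc^\perp)=m(n-\dim_{\F_{q^m}}\Cc)$, and on the right $\dim_{\F_q}M^{\mathbf{n}}_{\boldsymbol\alpha}(\Cc)^\perp=mn-m\dim_{\F_{q^m}}\Cc$. These coincide, so it suffices to show the inclusion $M^{\mathbf{n}}_{\boldsymbol\alpha}(\Cc^\perp)\subseteq M^{\mathbf{n}}_{\boldsymbol\alpha}(\Cc)^\perp$. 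Next, I would use the $\F_{q^m}$-linearity of $\Cc$ to rewrite the defining condition of $\Cc^\perp$ in terms of the field trace: $\mathbf{d}\in\Cc^\perp$ if and only if $T(\mathbf{c}\mathbf{d}^\intercal)=0$ for every $\mathbf{c}\in\Cc$. The forward implication is immediate; the reverse uses that $\lambda\mathbf{c}\in\Cc$ for every $\lambda\in\F_{q^m}$, together with the non-degeneracy of $T$ (so that $T(\lambda x)=0$ for all $\lambda$ forces $x=0$).

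The crux is then the componentwise compatibility identity
\[
T(\mathbf{c}\mathbf{d}^\intercal) \;=\; \mathrm{Tr}\bigl(M^{\mathbf{n}}_{\boldsymbol\alpha}(\mathbf{c}),\,M^{\mathbf{n}}_{\boldsymbol\alpha}(\mathbf{d})\bigr),
\]
which, after splitting according to the blocks $n=n_1+\cdots+n_\ell$, reduces to the scalar identity $T(xy)=\sum_i x_iy_i$ for $x=\sum x_i\alpha_i$ and $y=\sum y_i\alpha_i$ in $\F_{q^m}$, obtained by expanding $xy$ and applying $T$ entrywise. Once this identity is in place, combining it with the reformulation above yields $\mathbf{d}\in\Cc^\perp\iff M^{\mathbf{n}}_{\boldsymbol\alpha}(\mathbf{d})\in M^{\mathbf{n}}_{\boldsymbol\alpha}(\Cc)^\perp$, completing the inclusion and thus the proof. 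The main obstacle is the scalar identity itself: it is most transparent when $\boldsymbol\alpha$ is self-dual with respect to the trace form (i.e., $T(\alpha_i\alpha_j)=\delta_{ij}$), but for a general basis one must either pass to the trace-dual basis on one side or use that the $\F_{q^m}$-linearity of $\Cc$ absorbs the twist by the Gram matrix $A=(T(\alpha_i\alpha_j))$, mirroring the argument handled in~\cite[Theorem 21]{Rav16a} for the $\ell=1$ case.
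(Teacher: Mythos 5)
The paper offers no proof of this proposition (it is stated with only a citation to \cite[Theorem 21]{Rav16a}), so your attempt has to stand on its own. The reduction to a dimension count plus a single inclusion is fine, and so is the reformulation of $\mathbf{d}\in\Cc^\perp$ as $T(\mathbf{c}\mathbf{d}^\intercal)=0$ for all $\mathbf{c}\in\Cc$ via $\F_{q^m}$-linearity and non-degeneracy of $T$. The gap is exactly at the point you flag as ``the main obstacle.'' For a general basis, with $A=(T(\alpha_i\alpha_j))$ the Gram matrix, one computes $T(\mathbf{c}\mathbf{d}^\intercal)=\sum_{i=1}^{\ell}\mathrm{tr}\bigl(A\,M^{n_i}_{\boldsymbol\alpha}(\mathbf{c}^{(i)})\,M^{n_i}_{\boldsymbol\alpha}(\mathbf{d}^{(i)})^t\bigr)$, and neither of your two escape routes closes the argument. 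Passing to the trace-dual basis $\boldsymbol\alpha^*$ on one side proves $M^{\mathbf{n}}_{\boldsymbol\alpha^*}(\Cc^\perp)=M^{\mathbf{n}}_{\boldsymbol\alpha}(\Cc)^\perp$ --- which is what \cite[Theorem 21]{Rav16a} actually asserts in the $\ell=1$ case --- not the stated identity with the same basis on both sides. And the claim that $\F_{q^m}$-linearity ``absorbs the twist by $A$'' is false: linearity over $\F_{q^m}$ makes $M^{\mathbf{n}}_{\boldsymbol\alpha}(\Cc)$ invariant under left multiplication by the matrices representing multiplication by scalars of $\F_{q^m}$ in the basis $\boldsymbol\alpha$, but the Gram matrix $A$ is not of that form in general.

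Concretely, take $q=2$, $m=3$, $\F_8=\F_2[\alpha]$ with $\alpha^3=\alpha+1$, $\boldsymbol\alpha=(1,\alpha,\alpha^2)$, $\ell=2$, $n_1=n_2=1$, and $\Cc=\langle(1,\alpha)\rangle_{\F_8}$, so that $\Cc^\perp=\langle(\alpha,1)\rangle_{\F_8}$. Writing $e_1,e_2,e_3$ for the standard basis of $\F_2^{3\times1}$, the codeword $(\alpha,\alpha^2)\in\Cc$ maps to $(e_2,e_3)$ and $(\alpha,1)\in\Cc^\perp$ maps to $(e_2,e_1)$, yet $\mathrm{Tr}\bigl((e_2,e_1),(e_2,e_3)\bigr)=\mathrm{tr}(e_2e_2^t)+\mathrm{tr}(e_1e_3^t)=1\neq0$. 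Hence $M^{\mathbf{n}}_{\boldsymbol\alpha}(\Cc^\perp)\not\subseteq M^{\mathbf{n}}_{\boldsymbol\alpha}(\Cc)^\perp$: the very inclusion you reduce to is false for this basis, because $(1,\alpha,\alpha^2)$ is not trace-self-dual (here $T(\alpha\cdot\alpha^2)=1$ while $T(\alpha\cdot\alpha)=0$). So the proof cannot be completed as proposed; the honest conclusion of your own analysis is that the scalar identity --- and with it the proposition as literally stated --- requires $\boldsymbol\alpha$ to be self-dual with respect to the trace form (such bases do not exist for every pair $q,m$), and that in general one must state the result with the dual basis on one side, as in the cited theorem.
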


Let $ \Cc \subseteq \mathbb{F}_{q^m}^n $ be an $ \mathbb{F}_{q^m} $-linear code of dimension $k$. For $r=1$, the Singleton Bound from Theorem \ref{singletonbound} takes the familiar form 
\begin{equation}\label{eq singleton bound vector form}
d(\Cc) \leq n - k + 1 . 
\end{equation}
This allows us to define MSRD codes as the codes that meet (\ref{eq singleton bound vector form}).
Notice that the bound also holds if $ n_i > m $ for some $ i \in [\ell] $, although when $ n_i > m $ for all $ i \in [\ell] $, there cannot be any codes attaining (\ref{eq singleton bound vector form}) by \cite[Proposition 1.8]{fnt}. Furthermore, all known $ \mathbb{F}_{q^m} $-linear MSRD codes have $ n_i \leq m $ for all $ i \in [\ell] $. For this reason, in the sequel we assume that $ n_i \leq m $ for all $ i \in [\ell] $.

For $ \mathbb{F}_{q^m} $-linear codes, duality preserves the MSRD property.

\begin{theorem}[{\cite[Theorem 5]{gsrws}}] \label{th dual of msrd is msrd}
An $ \mathbb{F}_{q^m} $-linear code $ \Cc \subseteq \mathbb{F}_{q^m}^n $ is MSRD if and only if its dual $ \Cc^\perp $ is MSRD.
\end{theorem}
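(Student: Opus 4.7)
The plan is to reduce this to the analogous statement for $\mathbb{F}_q$-linear sum-rank metric codes with $m_1 = \ldots = m_\ell = m$ (proved earlier in the excerpt, via \cite[Proposition VII.12]{ourpaper}), using the isomorphism $M^{\mathbf{n}}_{\boldsymbol\alpha} : \mathbb{F}_{q^m}^n \to \mathbb{M}$ and the compatibility between the two notions of duality.

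First, I would fix an ordered basis $\boldsymbol\alpha$ of $\mathbb{F}_{q^m}$ over $\mathbb{F}_q$ and transfer $\Cc$ to the $\mathbb{F}_q$-linear code $\Cc' := M^{\mathbf{n}}_{\boldsymbol\alpha}(\Cc) \subseteq \mathbb{M}$. Since $M^{\mathbf{n}}_{\boldsymbol\alpha}$ is an $\mathbb{F}_q$-linear isometry (by construction, it is the isomorphism that defines the sum-rank metric on $\mathbb{F}_{q^m}^n$), we have $d(\Cc') = d(\Cc)$, and if $k := \dim_{\mathbb{F}_{q^m}}(\Cc)$ then $\dim_{\mathbb{F}_q}(\Cc') = mk$.

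Next, I would verify that $\Cc$ is MSRD in the sense of meeting the bound \eqref{eq singleton bound vector form} if and only if $\Cc'$ is MSRD in the sense of the general Definition preceding Proposition~\ref{prop:dualOAC}. For this, write $d(\Cc) - 1 = \sum_{i=1}^{j-1} n_i + \delta$ with $j \in [\ell]$ and $0 \leq \delta \leq n_j - 1$, uniquely. The general Singleton Bound of Theorem~\ref{singletonbound} (with $r=1$, and using $m_i = m$ for all $i$) then reads
\[
\dim_{\mathbb{F}_q}(\Cc') \leq m\Bigl(\sum_{i=j}^{\ell} n_i - \delta\Bigr) = m(n - d(\Cc) + 1),
\]
so $mk \leq m(n - d(\Cc) + 1)$, i.e.\ \eqref{eq singleton bound vector form}. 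Equality in the vector form $k = n - d(\Cc) + 1$ is therefore equivalent to equality in the matrix form $\dim_{\mathbb{F}_q}(\Cc') = \sum_{i=j}^{\ell} m\, n_i - \delta m$, so $\Cc$ is MSRD precisely when $\Cc'$ is.

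Finally, the Proposition just before Theorem \ref{th dual of msrd is msrd} gives $M^{\mathbf{n}}_{\boldsymbol\alpha}(\Cc^\perp) = (\Cc')^\perp$, so the same equivalence applied to $\Cc^\perp$ shows that $\Cc^\perp$ is MSRD if and only if $(\Cc')^\perp$ is MSRD. Invoking the first half of the theorem recalled before Proposition~\ref{prop:dualOAC} (namely, \cite[Proposition VII.12]{ourpaper}: when $m_1 = \ldots = m_\ell$, the dual of an MSRD $\mathbb{F}_q$-linear sum-rank code is MSRD) in both directions yields the biconditional, since $(\Cc')^{\perp\perp} = \Cc'$. The only genuinely non-trivial step is the dimensional bookkeeping establishing that the two Singleton conditions are equivalent; everything else is transport along the isometry $M^{\mathbf{n}}_{\boldsymbol\alpha}$ and an appeal to the previously established result.
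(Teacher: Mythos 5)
Your argument is correct, and it is worth pointing out that the chapter itself gives no proof of Theorem \ref{th dual of msrd is msrd}: it is quoted from \cite{gsrws}, where it can be obtained very quickly from Wei duality for $\F_{q^m}$-linear generalized sum-rank weights (the theorem quoted at the end of Section \ref{sect:Fqmlinear}): if $\Cc$ is MSRD its weight hierarchy is $\{n-k+1,\ldots,n\}$, so the complementation formula forces $\mathrm{d}'_1(\Cc^\perp)=k+1=n-(n-k)+1$. Your route is genuinely different and entirely internal to the results restated in the chapter: you transport $\Cc$ to $\Cc'=M^{\mathbf{n}}_{\boldsymbol\alpha}(\Cc)\subseteq\MM$, check that the vector-form Singleton equality $k=n-d+1$ is equivalent to the matrix-form MSRD condition for $\Cc'$ (the bookkeeping is right: $\dim_{\F_q}(\Cc')=mk$, $\sum_{i=j}^{\ell}mn_i-\delta m=m(n-d+1)$, and the pair $(j,\delta)$ is uniquely determined by $d(\Cc)$, so the existential quantifier in the general definition causes no trouble), use the compatibility of the two dualities, and then apply \cite[Proposition VII.12]{ourpaper} in both directions together with $(\Cc')^{\perp\perp}=\Cc'$. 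What your reduction buys is uniformity -- the $\F_{q^m}$-linear statement becomes a corollary of the equal-$m_i$ case of the general $\F_q$-linear theory -- at the price of resting on a general duality result whose proof is also not reproduced here; the Wei-duality route is shorter and stays within the $\F_{q^m}$-linear framework. The only caveat, inherited from the statement itself, concerns the trivial codes $\Cc=0$ and $\Cc=\F_{q^m}^n$, for which $d(\Cc)$ or $d(\Cc^\perp)$ is undefined; your argument, like the statement, implicitly assumes $\Cc$ is non-trivial.
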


In the case of $ \mathbb{F}_{q^m} $-linear isometries, one can refine Theorem~\ref{thm:isom} as follows.

\begin{theorem}[{\cite[Theorem 2]{sr-hamming}}]
A map $ \varphi : \mathbb{F}_{q^m}^n \longrightarrow \mathbb{F}_{q^m}^n $ is an $ \mathbb{F}_{q^m} $-linear isometry if and only if there exists a permutation $ \sigma : [\ell] \longrightarrow [\ell] $ with the property that $ \sigma(i) = j $ implies $ n_i = n_j $, and there exist scalars $ \beta_i \in \mathbb{F}_{q^m}^* $ and invertible matrices $ A_i \in \mathbb{F}_q^{n_i \times n_i} $, for $ i \in [\ell]$, such that
$$ \varphi(\mathbf{c}) = \left( \beta_1 \mathbf{c}^{(\sigma(1))} A_1, \beta_2 \mathbf{c}^{(\sigma(2))} A_2, \ldots, \beta_\ell \mathbf{c}^{(\sigma(\ell))} A_\ell \right), $$
for all $ \mathbf{c} = \left( \mathbf{c}^{(1)}, \mathbf{c}^{(2)}, \ldots, \mathbf{c}^{(\ell)} \right) \in \mathbb{F}_{q^m}^n $, where $ \mathbf{c}^{(i)} \in \mathbb{F}_{q^m}^{n_i} $ for $ i \in [\ell] $.
\end{theorem}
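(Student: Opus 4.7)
The plan is to transport $\varphi$ through the $\F_q$-linear isomorphism $M^{\mathbf{n}}_{\boldsymbol\alpha}$ of Definition~\ref{def matrix repr map} to obtain an $\F_q$-linear sum-rank isometry $\tilde\varphi := M^{\mathbf{n}}_{\boldsymbol\alpha}\circ\varphi\circ (M^{\mathbf{n}}_{\boldsymbol\alpha})^{-1}$ of $\MM$, apply the classification in Theorem~\ref{thm:isom} to write it down block by block, and then use the additional structure coming from $\F_{q^m}$-linearity of $\varphi$ to pin down the components. Since $m_1=\ldots=m_\ell=m$, Theorem~\ref{thm:isom} yields a permutation $\sigma$ of $[\ell]$ with $n_{\sigma(i)}=n_i$ together with $\F_q$-linear rank isometries $\psi_i:\F_q^{m\times n_i}\to\F_q^{m\times n_i}$ such that $\tilde\varphi(C_1,\ldots,C_\ell)=(\psi_1(C_{\sigma(1)}),\ldots,\psi_\ell(C_{\sigma(\ell)}))$, where each $\psi_i$ has either the form $M\mapsto U_iMV_i$ with $U_i\in\GL_m(\F_q)$ and $V_i\in\GL_{n_i}(\F_q)$, or (only when $m=n_i$) the form $M\mapsto U_iM^tV_i$.

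Next, I would encode $\F_{q^m}$-linearity of $\varphi$ as a commutation condition on each $\psi_i$. For $\lambda\in\F_{q^m}$, let $\Lambda\in\F_q^{m\times m}$ be the matrix of multiplication by $\lambda$ in the ordered basis $\boldsymbol\alpha$; a direct check from $\boldsymbol\alpha M^{r}_{\boldsymbol\alpha}(\mathbf{c})=\mathbf{c}$ gives $M^{r}_{\boldsymbol\alpha}(\lambda\mathbf{c})=\Lambda\, M^{r}_{\boldsymbol\alpha}(\mathbf{c})$. Consequently, $\F_{q^m}$-linearity of $\varphi$ is equivalent to the identity $\psi_i(\Lambda N)=\Lambda\psi_i(N)$ for every $i\in[\ell]$, every $\lambda\in\F_{q^m}$, and every $N\in\F_q^{m\times n_i}$. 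In other words, each $\psi_i$ must commute with left-multiplication by every element of the subfield $\mathcal{R}:=\{\Lambda_\lambda:\lambda\in\F_{q^m}\}\subset\F_q^{m\times m}$.

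The heart of the argument is analyzing both possible forms of $\psi_i$ under this commutation constraint, and ruling out the transpose case is what I expect to be the main obstacle. In the non-transpose case $\psi_i(M)=U_iMV_i$, commutation forces $U_i\Lambda=\Lambda U_i$ for every $\Lambda\in\mathcal{R}$, so $U_i$ lies in the centralizer of $\mathcal{R}$ in $\F_q^{m\times m}$. Viewing $\F_q^m$ as a $1$-dimensional $\F_{q^m}$-vector space via $\mathcal{R}$, its ring of $\mathcal{R}$-endomorphisms is $\mathcal{R}$ itself, whence $U_i=\Lambda_{\beta_i}$ for a unique $\beta_i\in\F_{q^m}^*$. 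In the transpose case $\psi_i(M)=U_iM^tV_i$ with $m=n_i$, rewriting the commutation as $U_iM^t\Lambda^tV_i=\Lambda U_iM^tV_i$ for all $M$, canceling $V_i$, and substituting $N=M^t$, yields $U_iN\Lambda^t=\Lambda U_iN$ for every $N\in\F_q^{m\times m}$. Setting $N=I$ gives $\Lambda^t=U_i^{-1}\Lambda U_i$, and substituting this back transforms the identity into $P\Lambda=\Lambda P$ for every $P\in\F_q^{m\times m}$, forcing $\Lambda$ to be a scalar matrix for every $\lambda\in\F_{q^m}$; this is impossible when $m\geq 2$, so the transpose case is excluded.

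Finally, translating $\psi_i(M)=\Lambda_{\beta_i}MV_i$ back through $M^{\mathbf{n}}_{\boldsymbol\alpha}$ yields $\varphi(\mathbf{c})^{(i)}=\beta_i\mathbf{c}^{(\sigma(i))}A_i$ with $A_i:=V_i\in\GL_{n_i}(\F_q)$, which is the desired form. The converse is immediate: scaling by $\beta_i\in\F_{q^m}^*$ preserves the $\F_q$-span of the entries of any vector in $\F_{q^m}^{n_i}$, and right-multiplication by $A_i\in\GL_{n_i}(\F_q)$ is an $\F_q$-linear change of coordinates, so both operations preserve the rank of each block and the resulting map is an $\F_{q^m}$-linear sum-rank isometry.
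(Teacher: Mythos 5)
Your proof is correct, and it follows exactly the route the paper indicates: the chapter states this result (citing \cite{sr-hamming}) as a refinement of Theorem~\ref{thm:isom}, and your argument—transporting $\varphi$ to $\MM$ via $M^{\mathbf{n}}_{\boldsymbol\alpha}$, applying Theorem~\ref{thm:isom}, and using the commutation with left multiplication by the matrices $\Lambda_\lambda$ representing $\F_{q^m}$ to force $U_i\in\mathcal{R}$ and to exclude the transpose case for $m\geq 2$—is precisely that refinement. The centralizer computation and the elimination of the transposition are carried out correctly, so no gaps remain.
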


Let $\A\subseteq\MM$ be an optimal anticode.
Notice that $(M^{\mathbf{n}}_{\boldsymbol\alpha})^{-1}(\A)\subseteq\F_{q^m}^n$ may not be an $\F_{q^m}$-linear optimal anticode. In fact, $(M^{\mathbf{n}}_{\boldsymbol\alpha})^{-1}(\A)$ is $\F_{q^m}$-linear only if $\A$ is a row-support space. 
In the notation of~\cite{gsrws}, let $ \boldsymbol{\mathcal{L}} = (\mathcal{L}_1, \mathcal{L}_2, \ldots , \mathcal{L}_\ell) \in \prod_{i=1}^\ell \mathcal{P}(\mathbb{F}_q^{n_i}) $, where $ \mathcal{P}(\mathcal{V}) $ denotes the collection of subspaces of a vector space $ \mathcal{V} $. Define $ \mathcal{V}_{\boldsymbol{\mathcal{L}}} \subseteq \mathbb{F}_{q^m}^n $ as the $ \mathbb{F}_{q^m} $-linear subspace of vectors of the form $ \left( \mathbf{c}^{(1)}, \mathbf{c}^{(2)}, \ldots, \mathbf{c}^{(\ell)} \right) $, where $ \mathbf{c}^{(i)} \in \mathbb{F}_{q^m}^{n_i} $ is such that the row space of $ M^{\mathbf{n}}_{\boldsymbol\alpha} \left( \mathbf{c}^{(i)} \right) \in \mathbb{F}_q^{m \times n_i} $ is contained in $ \mathcal{L}_i $. Notice that $ \mathcal{V}_{\boldsymbol{\mathcal{L}}}=(M^{\mathbf{n}}_{\boldsymbol\alpha})^{-1}(\mathcal{V}_{M^{\mathbf{n}}_{\boldsymbol\alpha}(\boldsymbol{\mathcal{L}})})$, where $\mathcal{V}_{M^{\mathbf{n}}_{\boldsymbol\alpha}(\boldsymbol{\mathcal{L}})}$ is a row-support space as in Definition \ref{defn:rowsupp} . 

By restricting to $\F_{q^m}$-linear optimal anticodes, we obtain the following definition of generalized sum-rank weights of an $\mathbb{F}_{q^m}$-linear code.
\begin{definition}[{\cite[Definition 10]{gsrws}}]
Let $ \mathcal{C} \subseteq \mathbb{F}_{q^m}^n $ be an $\mathbb{F}_{q^m}$-linear code and let $ r \in [\dim(\Cc)] $.
The $r$-th generalized sum-rank weight of $\Cc$ is 
\begin{equation}
{\rm d}^\prime_r(\Cc) = \min \left\lbrace \sum_{i=1}^\ell \dim(\mathcal{L}_i) : \boldsymbol{\mathcal{L}} \in \prod_{i=1}^\ell \mathcal{P}_L(\mathbb{F}_q^{n_i}), \dim(\Cc \cap \mathcal{V}_{\boldsymbol{\mathcal{L}}}) \geq r \right\rbrace.
\label{eq def gsrw vector form}
\end{equation}
\end{definition}
    
For these generalized sum-rank weights, we have a simplified version of monotonicity and of the Singleton Bound:
$$ {\rm d}^\prime_r(\Cc) < {\rm d}^\prime_{r+1}(\Cc) \leq n \quad \textrm{and} \quad {\rm d}^\prime_r(\Cc) \leq n - k + r, $$
for $ r \in [\dim(\Cc)-1] $. In fact, most bounds valid for generalized Hamming weights are valid for generalized sum-rank weights as above, as shown in~\cite[Theorem 4]{gsrws}. In addition, Wei duality takes a simple form, which is reminiscent of Wei's original result~\cite[Theorem 3]{Wei}.

\begin{theorem}[{\cite[Theorem 2]{gsrws}}]
Let $ \Cc \subseteq \mathbb{F}_{q^m}^n $ be an $ \mathbb{F}_{q^m} $-linear code of $\dim(\Cc)=k$. For $ r \in [k] $ and $ s \in [n-k] $, let $ d_r = {\rm d}^\prime_r(\Cc) $ and $ d^\perp_s = {\rm d}^\prime_s(\Cc^\perp) $. Then
$$ [n] = \{ d_1, d_2, \ldots, d_k \} \cup \{ n+1-d_1^\perp, n+1-d_2^\perp, \ldots, n+1-d_{n-k}^\perp \}. $$
\end{theorem}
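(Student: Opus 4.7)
The plan is to adapt the classical proof of Wei duality to the sum-rank setting by working with the function
$$f_{\Cc}(h) = \max \left\{ \dim_{\F_{q^m}}(\Cc \cap \mathcal{V}_{\boldsymbol{\mathcal{L}}}) : \boldsymbol{\mathcal{L}} \in \prod_{i=1}^\ell \mathcal{P}(\F_q^{n_i}),\ \sum_{i=1}^\ell \dim_{\F_q}(\mathcal{L}_i) = h \right\}$$
for $h \in \{0,1,\ldots,n\}$. From the definition of generalized sum-rank weights, ${\rm d}^\prime_r(\Cc) \leq h$ if and only if $f_{\Cc}(h) \geq r$, so $h$ lies in $\{d_1,\ldots,d_k\}$ precisely when $f_{\Cc}(h) > f_{\Cc}(h-1)$; the analogous statement holds for $\Cc^\perp$. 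The function $f_{\Cc}$ is non-decreasing with increments in $\{0,1\}$: given $\boldsymbol{\mathcal{L}}$ with $\sum \dim(\mathcal{L}_i) = h$, removing one vector from a basis of some $\mathcal{L}_i$ produces a row-support subspace of $\F_{q^m}$-codimension $1$ inside $\mathcal{V}_{\boldsymbol{\mathcal{L}}}$, forcing $f_{\Cc}(h) - 1 \leq f_{\Cc}(h-1) \leq f_{\Cc}(h)$.

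The crux of the argument is the duality relation
$$f_{\Cc^\perp}(n-h) = f_{\Cc}(h) + n - k - h.$$
This follows from two ingredients. First, for any $\F_{q^m}$-linear subspace $V \subseteq \F_{q^m}^n$, one has the standard identity $\dim_{\F_{q^m}}(\Cc \cap V) = \dim_{\F_{q^m}}(\Cc^\perp \cap V^\perp) + k + \dim_{\F_{q^m}}(V) - n$. Second, the componentwise $\F_q$-dualization $\boldsymbol{\mathcal{L}} \mapsto \boldsymbol{\mathcal{L}}^\perp := (\mathcal{L}_1^\perp,\ldots,\mathcal{L}_\ell^\perp)$ is a bijection between tuples of total dimension $h$ and tuples of total dimension $n-h$, and satisfies $\mathcal{V}_{\boldsymbol{\mathcal{L}}}^\perp = \mathcal{V}_{\boldsymbol{\mathcal{L}}^\perp}$ with respect to the $\F_{q^m}$-bilinear inner product on $\F_{q^m}^n$. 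The latter is verified by identifying $\mathcal{V}_{\boldsymbol{\mathcal{L}}}$ with $\bigoplus_{i=1}^\ell (\mathcal{L}_i \otimes_{\F_q} \F_{q^m})$ and using that $\F_q$-duality commutes with extending scalars to $\F_{q^m}$. Combined with $\dim_{\F_{q^m}}(\mathcal{V}_{\boldsymbol{\mathcal{L}}}) = \sum_i \dim_{\F_q}(\mathcal{L}_i) = h$, the formula for $f_{\Cc^\perp}$ drops out.

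Combining the two steps gives
$$\bigl(f_{\Cc}(h) - f_{\Cc}(h-1)\bigr) + \bigl(f_{\Cc^\perp}(n+1-h) - f_{\Cc^\perp}(n-h)\bigr) = 1$$
for every $h \in [n]$. Since each summand lies in $\{0,1\}$, exactly one of them equals $1$, which is equivalent to saying that each $h \in [n]$ lies in exactly one of $\{d_1,\ldots,d_k\}$ and $\{n+1-d_1^\perp,\ldots,n+1-d_{n-k}^\perp\}$; together with the fact that these two sets have $k$ and $n-k$ distinct elements respectively (by strict monotonicity of the generalized sum-rank weights), this yields the claimed partition of $[n]$. The main obstacle I anticipate is the careful verification that $\mathcal{V}_{\boldsymbol{\mathcal{L}}}^\perp = \mathcal{V}_{\boldsymbol{\mathcal{L}}^\perp}$ under the $\F_{q^m}$-valued inner product, since the pairing mixes vectors over $\F_{q^m}$ while the subspaces $\mathcal{L}_i$ live over $\F_q$, and one must consistently keep track of both $\F_q$- and $\F_{q^m}$-dimensions.
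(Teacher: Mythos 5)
Your proof is correct: the reduction to the step function $f_{\Cc}$, the componentwise support-space duality $\mathcal{V}_{\boldsymbol{\mathcal{L}}}^\perp = \mathcal{V}_{\boldsymbol{\mathcal{L}}^\perp}$ (using $\mathcal{V}_{\boldsymbol{\mathcal{L}}} = \prod_i (\mathcal{L}_i \otimes_{\F_q}\F_{q^m})$), and the resulting identity $f_{\Cc^\perp}(n-h) = f_{\Cc}(h) + n - k - h$ all check out, and the unit-increment argument then partitions $[n]$ as claimed. The chapter itself states this theorem without proof (it is quoted from the reference \cite{gsrws}), and your argument is essentially the standard one used there, resting on the duality of sum-rank support spaces.
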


An interesting property of $ \mathbb{F}_{q^m} $-linear codes in $ \mathbb{F}_{q^m}^n $ is the existence of generator and parity-check matrices, which are matrices with entries in $\F_{q^m}$. By looking at the generator and parity-check matrices, we may characterize $ \mathbb{F}_{q^m} $-linear MSRD codes as follows. This characterization is useful in some applications, such as constructing PMDS codes~\cite{universal-lrc, cai}. The result generalizes the characterizations of MDS codes~\cite{MWSI} and MRD codes~\cite{gabidulin} based on generator and parity-check matrices. Here $ {\rm diag}(A_1, A_2, \ldots , A_\ell) $ denotes the block-diagonal matrix which has the blocks $ A_1, A_2, \ldots, A_\ell $ on the diagonal and zeros elsewhere.

\begin{theorem} [{\cite[Theorem 1.6]{fnt}}] \label{th gen and par matrices of MSRD}
Let $ \mathcal{C} \subseteq \mathbb{F}_{q^m}^n $ be an $ \mathbb{F}_{q^m} $-linear code of $\dim(\Cc)=k$. Let $ G \in \mathbb{F}_{q^m}^{k \times n} $ and $ H \in \mathbb{F}_{q^m}^{(n-k) \times n} $ be a generator matrix and a parity-check matrix, of $ \mathcal{C} $, respectively. The following are equivalent:
\begin{enumerate}
\item 
$ \mathcal{C} $ is MSRD.
\item 
Let $ A_i \in {\rm GL}_{n_i}(\mathbb{F}_q) $, for $ i \in [\ell] $.
Then every $ k \times k $ submatrix of $ G \cdot {\rm diag}(A_1, A_2, \ldots , A_\ell) \in \mathbb{F}_{q^m}^{k \times n} $ is invertible. 
\item 
Let $ A_i \in {\rm GL}_{n_i}(\mathbb{F}_q) $, for $ i \in [\ell] $.
Then every $ (n-k) \times (n-k) $ submatrix of $ H\,\cdot\, {\rm diag}(A_1, A_2, \ldots , A_\ell) \in \mathbb{F}_{q^m}^{(n-k) \times n} $ is invertible.
\end{enumerate}
\end{theorem}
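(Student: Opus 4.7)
The plan is to reduce all three conditions to the classical characterization of MDS codes via their generator/parity-check matrices, using the fact that block-diagonal multiplication by matrices in $\prod_i \GL_{n_i}(\F_q)$ is a sum-rank isometry that converts the sum-rank metric into the Hamming metric in an optimal way.

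First I would establish the following key identity: for every $\mathbf{c} = (\mathbf{c}^{(1)}, \ldots, \mathbf{c}^{(\ell)}) \in \F_{q^m}^n$,
$$\srk(\mathbf{c}) = \min\left\{ \wt_H(\mathbf{c} \cdot \mathrm{diag}(A_1,\ldots,A_\ell)) : A_i \in \GL_{n_i}(\F_q) \text{ for } i\in[\ell] \right\},$$
where $\wt_H$ denotes the Hamming weight. The inequality $\leq$ is immediate because each factor $A_i \in \GL_{n_i}(\F_q)$ preserves rank, so the sum-rank weight of the transformed vector equals that of $\mathbf{c}$, and the sum-rank weight is bounded above by the Hamming weight (each nonzero column contributes at most $1$ to the rank). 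For the reverse inequality, for each block $i$, if $\rk(\mathbf{c}^{(i)}) = s_i$ then the $\F_q$-linear map $\mathbf{c}^{(i)} : \F_q^{n_i} \to \F_{q^m}$ has a kernel of dimension $n_i - s_i$. Choosing $A_i$ whose last $n_i - s_i$ columns span this kernel gives $\mathbf{c}^{(i)} A_i$ with exactly $n_i - s_i$ zero columns and thus Hamming weight $s_i$. Summing over $i$ yields an $A=\mathrm{diag}(A_1,\dots,A_\ell)$ realizing the minimum.

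Next I would deduce the equivalence of (1) and (2). Applying the identity above to codewords of $\Cc$,
$$d(\Cc) = \min_{A \in \prod_i \GL_{n_i}(\F_q)} d_H(\Cc \cdot A),$$
where each $\Cc \cdot A$ is an $\F_{q^m}$-linear code of dimension $k$ with generator matrix $G\cdot A$. Hence $\Cc$ is MSRD, i.e., $d(\Cc) = n-k+1$, if and only if $\Cc \cdot A$ is MDS for every admissible $A$. By the classical characterization of MDS codes via their generator matrices~\cite[Theorem 1.4.3]{huffman_pless_2003}, this is equivalent to saying that every $k \times k$ minor of $G \cdot A$ is invertible for every such $A$, which is precisely condition (2).

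Finally, for the equivalence of (1) and (3), I would invoke Theorem~\ref{th dual of msrd is msrd}, so that $\Cc$ is MSRD if and only if $\Cc^\perp$ is. Since $H$ is a generator matrix of $\Cc^\perp$, of dimension $n-k$, applying the already-proved equivalence (1)$\Leftrightarrow$(2) to $\Cc^\perp$ yields exactly condition (3). The only potentially delicate point is the opening identity linking $\srk$ with a minimum Hamming weight over block-diagonal $\F_q$-changes of basis; the rest is a clean translation of the MDS characterization through a family of isometries, so I expect no serious obstacle beyond verifying that construction carefully.
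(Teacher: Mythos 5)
The paper does not prove this theorem; it is quoted from the reference cited in its header, so there is no internal proof to compare against. Your argument is correct and is essentially the standard one used in the literature for this result: the pivotal identity $\srk(\mathbf{c})=\min\{\wt(\mathbf{c}\cdot\mathrm{diag}(A_1,\dots,A_\ell)):A_i\in\GL_{n_i}(\F_q)\}$ holds exactly as you argue (right multiplication by $A_i$ preserves the column space of the matrix representation of $\mathbf{c}^{(i)}$, and an invertible $A_i$ whose last $n_i-s_i$ columns span the kernel of $v\mapsto \mathbf{c}^{(i)}v$ forces the remaining columns outside the kernel), after which exchanging the two minima reduces MSRD to ``$\Cc\cdot A$ is MDS for all admissible $A$'' and the classical generator-matrix criterion gives (1)$\Leftrightarrow$(2). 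For (1)$\Leftrightarrow$(3) your appeal to Theorem~\ref{th dual of msrd is msrd} is legitimate and non-circular as that duality result is established independently; alternatively one can bypass it entirely by noting that $(\Cc A)^\perp=\Cc^\perp (A^{-1})^{t}$ with $(A^{-1})^{t}$ again block-diagonal over $\F_q$, so classical MDS duality applied blockwise yields (2)$\Leftrightarrow$(3) directly.
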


When the generator matrix is in systematic form, we have the following stronger result, which generalizes the corresponding characterizations of MDS codes~\cite[Chapter 8, Theorem 8]{MWSI} and MRD codes~\cite[Theorem 3.18]{neri-systematic}.

\begin{theorem}[{\cite[Theorem 5]{systematic}}]\label{th systematic MSRD}
Let $ \mathcal{C} \subseteq \mathbb{F}_{q^m}^n $ be an $ \mathbb{F}_{q^m} $-linear code. Let $ k = \dim(\Cc) $ and consider an integer partition $ k = k_1 + k_2 + \cdots + k_\ell $, where $
0 \leq k_i \leq n_i $, for $ i \in [\ell] $. Let
$$ G = (J_1|P_1|J_2|P_2|\ldots|J_\ell|P_\ell) \in \mathbb{F}_{q^m}^{k \times n} $$
be a systematic generator matrix of $ \Cc $, where $ J_i \in \mathbb{F}_{q^m}^{k \times k_i} $, $ P_i \in \mathbb{F}_{q^m}^{k \times (n_i - k_i)} $, and 
$$ I_k = (J_1|J_2|\ldots|J_\ell) \in \mathbb{F}_{q^m}^{k \times k}$$
is the identity matrix. Let $ P = (P_1, P_2, \ldots, P_\ell) \in \mathbb{F}_{q^m}^{k \times (n-k)} $. Then $ \mathcal{C} $ is MSRD if and only if the square submatrices of all sizes of
$$ B P A + C \in \mathbb{F}_{q^m}^{k \times (n-k)} $$
are invertible, for all $ C_i \in \mathbb{F}_q^{k_i \times (n_i-k_i)} $, $ B_i \in {\rm GL}_{k_i}(\mathbb{F}_q) $, and $ A_i \in {\rm GL}_{n_i - k_i}(\mathbb{F}_q) $, for
$ i \in [\ell] $, where
\begin{equation}
\begin{split}
A & = {\rm diag}(A_1,A_2,\ldots,A_\ell) \in \mathbb{F}_q^{(n-k) \times (n-k)}, \\
B & = {\rm diag}(B_1,B_2,\ldots,B_\ell) \in \mathbb{F}_q^{k \times k}, \\
C & = {\rm diag}(C_1,C_2,\ldots,C_\ell) \in \mathbb{F}_q^{k \times (n-k)}.
\end{split}
\label{eq systematic gen matrix A,B,C}
\end{equation}
\end{theorem}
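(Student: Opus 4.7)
The plan is to deduce this from Theorem~\ref{th gen and par matrices of MSRD}. By selecting $j_i$ columns from each $A_i\in\GL_{n_i}(\F_q)$, with $\sum_i j_i=k$, that theorem becomes: $\Cc$ is MSRD if and only if the $k\times k$ matrix $N:=((J_1|P_1)X_1\mid\cdots\mid(J_\ell|P_\ell)X_\ell)$ is invertible for every $(j_i)$ summing to $k$ and every full-column-rank $X_i\in\F_q^{n_i\times j_i}$. The goal is to reduce invertibility of $N$ to invertibility of a square submatrix of $BPA+C$.

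First I would simplify each $X_i$ by column operations, which preserve the invertibility of $N$. Writing $X_i=\left(\begin{smallmatrix}X_i^{(1)}\\X_i^{(2)}\end{smallmatrix}\right)$ with $X_i^{(1)}\in\F_q^{k_i\times j_i}$ and $X_i^{(2)}\in\F_q^{(n_i-k_i)\times j_i}$, and setting $s_i:=\rk(X_i^{(2)})$, such operations bring $X_i$ to the canonical form $\left(\begin{smallmatrix}Z_i&Y_i\\0&W_i\end{smallmatrix}\right)$ with $Z_i\in\F_q^{k_i\times(j_i-s_i)}$ and $W_i\in\F_q^{(n_i-k_i)\times s_i}$ of full column rank and $Y_i\in\F_q^{k_i\times s_i}$ arbitrary. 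After reordering the columns of $N$ to group the pure $J$ parts $J_iZ_i$ first and the mixed parts $J_iY_i+P_iW_i$ second, and letting $R_i\subseteq[k]$ denote the row indices where $J_i$ sits in $I_k$, I would choose $B_i\in\GL_{k_i}(\F_q)$ with $B_iZ_i=\left(\begin{smallmatrix}0\\I_{j_i-s_i}\end{smallmatrix}\right)$ and left-multiply $N$ by $B:=\text{diag}(B_1,\ldots,B_\ell)$. The transformed pure $J$ block then carries an identity in the bottom $j_i-s_i$ rows of each $R_i$; using these to clear, by column operations, the bottom rows of the mixed part reduces invertibility of $N$ to that of an $s\times s$ matrix $N''$, where $s:=\sum_i s_i=\sum_i u_i$ with $u_i:=k_i-(j_i-s_i)$. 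Here $N''$ consists of the top $u_i$ rows of each $R_i$ of the transformed mixed part.

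A direct computation then identifies $N''=(BPA+BC)|_{T,T'}$, where $T_i$ is the set of top $u_i$ rows of $R_i$, $T'_i$ the first $s_i$ columns of the $i$-th column block of $P$, $A:=\text{diag}(A_1,\ldots,A_\ell)$ with $A_i\in\GL_{n_i-k_i}(\F_q)$ any extension of $W_i$ (taking $W_i$ as its first $s_i$ columns), and $C:=\text{diag}(C_1,\ldots,C_\ell)$ with $C_i\in\F_q^{k_i\times(n_i-k_i)}$ having $Y_i$ as its first $s_i$ columns. Setting $C':=BC$, which is again block-diagonal over $\F_q$ since $B$ is, one obtains $N''=(BPA+C')|_{T,T'}$, precisely the form appearing in the theorem.

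The main obstacle is verifying that this correspondence is surjective in both directions. On one side, the $(T,T')$ produced from $(X_i)$ have a canonical shape, but invertibility of $(BPA+C')|_{T,T'}$ depends on $T,T'$ only through the tuple $(B|_{T,*},A|_{*,T'},C'|_{T,T'})$; for fixed block sizes $(u_i,s_i)$, as $B,A,C'$ vary this tuple ranges over the same family regardless of $T,T'$, so only the sizes matter, and all valid sizes with $\sum u_i=\sum s_i$ arise from some $(j_i)$. On the other side, every prescribed $(B,A,C',T,T')$ is realized by some $(X_i)$: the map $Y_i\mapsto(B_i|_{T_i,*})Y_i$ from $\F_q^{k_i\times s_i}$ to $\F_q^{u_i\times s_i}$ is surjective since $B_i|_{T_i,*}$ has rank $u_i$, so any block of $C'|_{T,T'}$ can be matched. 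Combining these with Theorem~\ref{th gen and par matrices of MSRD} concludes the proof.
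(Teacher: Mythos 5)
The paper states Theorem~\ref{th systematic MSRD} only as a cited result from \cite{systematic} and includes no proof of it, so there is no internal argument to compare yours against; judged on its own, your derivation from Theorem~\ref{th gen and par matrices of MSRD} is correct. The opening reformulation is right: a $k\times k$ submatrix of $G\cdot{\rm diag}(A_1,\ldots,A_\ell)$ taking $j_i$ columns from block $i$ is $((J_1|P_1)X_1|\cdots|(J_\ell|P_\ell)X_\ell)$ with $X_i=A_iE_i$, and these range over exactly the full-column-rank matrices in $\F_q^{n_i\times j_i}$ as $A_i$ ranges over $\GL_{n_i}(\F_q)$ and $E_i$ over column selections. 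The normal form for $X_i$ under right multiplication by $\GL_{j_i}(\F_q)$, the choice of $B_i$ sending the full-column-rank block $Z_i$ to the last $j_i-s_i$ columns of $I_{k_i}$, and the Schur-type reduction $\det(N)=\pm\det(N'')$ after clearing the bottom rows all check out, as does the identification $N''=(BPA+BC)|_{T,T'}$ with $BC$ again block diagonal and with $B_iC_i$ ranging over all of $\F_q^{k_i\times(n_i-k_i)}$. The two surjectivity claims needed to close the equivalence are indeed the delicate part, and you handle both: the family $\{(BPA+C)|_{T,T'}\}$ depends on $(T,T')$ only through the block sizes $(u_i,s_i)$, because the tuple $(B|_{T,*},A|_{*,T'},C|_{T,T'})$ sweeps the same set (block-structured full row rank, block-structured full column rank, arbitrary blocks) for any $(T,T')$ of those sizes; and every admissible size tuple with $\sum_i u_i=\sum_i s_i$, $u_i\leq k_i$, $s_i\leq n_i-k_i$ is realized by $j_i=k_i-u_i+s_i\in\{0,\ldots,n_i\}$, with the surjectivity of $Y_i\mapsto (B_i|_{T_i,*})Y_i$ supplying arbitrary prescribed $C$-blocks. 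This is the natural sum-rank generalization of the classical argument that a code with systematic generator matrix $(I_k\,|\,P)$ is MDS if and only if every square submatrix of $P$ is invertible, and it is consistent with the strategy of the original reference.
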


Finally, we discuss known families of $ \mathbb{F}_{q^m} $-linear MSRD codes. There exist several constructions of $ \mathbb{F}_{q^m} $-linear codes in $ \mathbb{F}_{q^m}^n $ for different length partitions $ n = n_1 + n_2 + \cdots + n_\ell $, see e.g.~\cite{linearizedRS, generalMSRD, twisted}. For square matrices, that is $ m = n_1=\ldots=n_\ell $, the only known construction is that of linearized Reed-Solomon codes~\cite{linearizedRS} and their twisted generalization \cite{twisted}. Linearized Reed-Solomon codes exist for any length partition $ n = n_1 + n_2 + \cdots + n_\ell $ such that $ n_i \leq m $, for $ i \in [\ell] $. Their main parameter drawback is that $ q > \ell $ is required.  

Even though one may define linearized Reed-Solomon codes over more general fields~\cite{linearizedRS}, an explicit definition for finite fields is as follows. First, for a positive integer $ k $, an element $ a \in \mathbb{F}_{q^m}^* $ and a vector $ \boldsymbol\beta = (\beta_1, \beta_2, \ldots, \beta_r) \in \mathbb{F}_{q^m}^r $, we define the matrix
\begin{equation}
M_k(a, \boldsymbol\beta) = \left( \begin{array}{cccc}
\beta_1 & \beta_2 & \ldots & \beta_r \\
\beta_1^q a & \beta_2^q a & \ldots & \beta_r^q a \\
\beta_1^{q^2} a^{q+1} & \beta_2^{q^2} a^{q+1} & \ldots & \beta_r^{q^2} a^{q+1} \\
\vdots & \vdots & \ddots & \vdots \\
\beta_1^{q^{k-1}} a^{\frac{q^{k-1}-1}{q-1}} & \beta_2^{q^{k-1}} a^{\frac{q^{k-1}-1}{q-1}} & \ldots & \beta_r^{q^{k-1}} a^{\frac{q^{k-1}-1}{q-1}}
\end{array} \right) \in \mathbb{F}_{q^m}^{k \times r}.
\label{eq def moore}
\end{equation}

\begin{definition} \label{def lrs}
Let $ a_1, a_2, \ldots, a_\ell \in \mathbb{F}_{q^m}^* $ such that $ N_{\mathbb{F}_{q^m}/\mathbb{F}_q}(a_i) \neq N_{\mathbb{F}_{q^m}/\mathbb{F}_q}(a_j) $ if $ i \neq j $, where $ N_{\mathbb{F}_{q^m}/\mathbb{F}_q} $ is the norm relative to the extension $ \mathbb{F}_{q^m}/\mathbb{F}_q $. Let $ \beta_1, \beta_2, \ldots, \beta_m \in \mathbb{F}_{q^m} $ be a basis of $ \mathbb{F}_{q^m} $ over $ \mathbb{F}_q $. Let $ k \in [n] $, $ n = n_1 + n_2 + \cdots + n_\ell $, and $ n_i \leq m $ for $ i \in [\ell] $. The linearized Reed-Solomon code of dimension $ k $ is the code $ \mathcal{C}_{LRS}^{n,k}(\mathbf{a}, \boldsymbol\beta) \subseteq \mathbb{F}_{q^m}^n $ with generator matrix
$$ M_k = (M_k(a_1, \boldsymbol\beta_1) | M_k(a_2, \boldsymbol\beta_2) | \ldots | M_k(a_\ell, \boldsymbol\beta_\ell)) \in \mathbb{F}_{q^m}^{k \times n}, $$
where $ \boldsymbol\beta_i = (\beta_1, \beta_2, \ldots, \beta_{n_i}) $ for all $ i \in [\ell] $.
\end{definition}

From the shape of the generator matrix (see also (\ref{eq def moore})), it is easy to see that linearized Reed-Solomon codes recover classical generalized Reed-Solomon codes when $ m=1 $ (thus $ n_1 = \ldots = n_\ell = 1 $), and Gabidulin codes \cite{gabidulin} when $ \ell = 1 $ and $ a_1 = 1 $. These are also the cases in which the sum-rank metric specializes to the Hamming and rank metric, respectively.

In order to prove that linearized Reed-Solomon codes are MSRD, we need a result concerning roots of skew polynomials. In our setting, a skew polynomial \cite{ore} is a polynomial expression with coefficients on the left, 
$$ f = f_0 + f_1 x + \cdots + f_d x^d, $$
where $ d $ is a non-negative integer and $ f_0, f_1, \ldots, f_d \in \mathbb{F}_{q^m} $. The set of all skew polynomials of this form is denoted by $ \mathbb{F}_{q^m}[x;\sigma] $, where $ \sigma : \mathbb{F}_{q^m} \longrightarrow \mathbb{F}_{q^m} $ is the field automorfism given by $ \sigma(a) = a^q $, for $ a \in \mathbb{F}_{q^m} $. The set $ \mathbb{F}_{q^m}[x;\sigma] $ is a (generally non-commutative) ring with classical addition and multiplication by scalars in $ \mathbb{F}_{q^m} $ on the left, where $ x^i x^j = x^{i+j} $ for $ i,j \in \mathbb{N} $ and
$$ xa = \sigma(a)x $$
for all $ a \in \mathbb{F}_{q^m} $. Linearized Reed-Solomon codes may also be described using evaluation of skew polynomials. For $ f = f_0 + f_1 x + \cdots + f_d x^d \in \mathbb{F}_{q^m}[x;\sigma] $ and $ a, \beta \in \mathbb{F}_{q^m} $, we define
\begin{equation*}
\begin{split}
f_a(\beta) & = f_0 \beta N_0(a) + f_1 \beta^q N_1(a) + f_2 \beta^{q^2} N_2(a) + \cdots + f_d \beta^{q^d} N_d(a) \\
& = f_0 \beta + f_1 \beta^q a + f_2 \beta^{q^2} a^{\frac{q^2-1}{q-1}} + \cdots + f_d \beta^{q^d} a^{\frac{q^d-1}{q-1}} ,
\end{split}
\end{equation*}
where $ N_i(a) = a^{q^{i-1}} a^{q^{i-2}} \cdots a^q a = a^{\frac{q^i-1}{q-1}} $ is called the $ i $th truncated norm of $ a $ (since $ N_m(a) = N_{\mathbb{F}_{q^m}/\mathbb{F}_q}(a) $). Notice that the map $ f_a : \mathbb{F}_{q^m} \longrightarrow \mathbb{F}_{q^m} $ is $ \mathbb{F}_q $-linear, and the set $ \{ f_1 : f \in \mathbb{F}_{q^m}[x;\sigma] \} $ is the classical set of $ q $-linearized polynomials over $ \mathbb{F}_{q^m} $ \cite[Section 3.4]{lidl}. If $ M_k $ is the generator matrix of the linearized Reed-Solomon code $ \mathcal{C}_{LRS}^{n,k}(\mathbf{a}, \boldsymbol\beta) \subseteq \mathbb{F}_{q^m}^n $ from Definition \ref{def lrs}, then its codewords are evaluations
$$ (f_{a_1}(\beta_1), \ldots, f_{a_1}(\beta_{n_1}), \ldots, f_{a_\ell}(\beta_1), \ldots, f_{a_\ell}(\beta_{n_\ell})) = (f_0, f_1, \ldots, f_{k-1}) \cdot M_k, $$
for skew polynomials $ f = f_0 + f_1 x + \cdots + f_{k-1}x^{k-1} \in \mathbb{F}_{q^m}[x;\sigma] $ of degree less than $ k $. The next auxiliary lemma is a combination of  \cite[Theorem 4.5]{lam-leroy}, \cite[Theorem 23]{lam} and \cite[Lemma 24]{linearizedRS} in the special case of finite fields (see also \cite[Theorem 2.1]{leroy-noncommutative} for a more general result). The result is of independent interest, and it recovers the following well-known statements:
The number of zeros of a polynomial is bounded from above by its degree and the dimension of the zero-locus of a $q$-linearized polynomial is bounded from above by the logarithm in base $q$ of its degree.

\begin{lemma} \label{lemma roots}
Let $ a_1, a_2, \ldots, a_\ell \in \mathbb{F}_{q^m}^* $ such that $ N_{\mathbb{F}_{q^m}/\mathbb{F}_q}(a_i) \neq N_{\mathbb{F}_{q^m}/\mathbb{F}_q}(a_j) $ if $ i \neq j $. For any non-zero $ f \in \mathbb{F}_{q^m}[x;\sigma] $, we have
$$ \sum_{i=1}^\ell \dim_{\mathbb{F}_q}(\ker(f_{a_i})) \leq \deg(f). $$
\end{lemma}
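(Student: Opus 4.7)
The plan is to bound $\sum_{i} d_i$, where $d_i := \dim_{\mathbb{F}_q} \ker(f_{a_i})$, by the degree of a right divisor of $f$ in the skew polynomial ring $R = \mathbb{F}_{q^m}[x;\sigma]$. First note that $f_{a}$ is $\mathbb{F}_q$-linear, so each $V_i := \ker(f_{a_i})$ is an $\mathbb{F}_q$-subspace. The single-class case $\ell = 1$ is immediate: the expression $f_a(\beta) = \sum_j (f_j N_j(a))\beta^{q^j}$ exhibits $f_a$ as a $q$-linearized polynomial whose leading coefficient $f_{\deg f}\, N_{\deg f}(a)$ is nonzero, so its kernel is an $\mathbb{F}_q$-subspace of dimension at most $\deg f$.

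For general $\ell$, I would attach to each pair $(V_i,a_i)$ a ``minimal annihilator'' $g_i \in R$, built inductively by left-multiplying suitable degree-one factors, one for each vector in a chosen $\mathbb{F}_q$-basis of $V_i$, as in the Lam--Leroy construction. The resulting $g_i$ is monic of degree $d_i$ and satisfies $(g_i)_{a_i}(\beta) = 0$ for all $\beta \in V_i$. Right Euclidean division $f = q_i g_i + r_i$, combined with the vanishing rule $(h g_i)_{a_i}(\beta) = 0$ whenever $(g_i)_{a_i}(\beta) = 0$ (proved by induction on $\deg h$ using the identity $(xh)_a(\beta) = a\, h_a(\beta)^q$), forces $r_i$ to annihilate $V_i$ at $a_i$ while having degree $<d_i$; the $\ell=1$ case then yields $r_i = 0$, so each $g_i$ right-divides $f$.

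The crux is to show that $g_1,\ldots,g_\ell$ are pairwise right-coprime in $R$: once this is known, their least common left multiple has degree $\sum_i d_i$ and still right-divides $f$, giving $\sum_i d_i \leq \deg f$. By construction, the ``$a_i$-roots'' of every monic right factor of $g_i$ lie in the $\sigma$-conjugacy class $\{\gamma^{q-1} a_i : \gamma \in \mathbb{F}_{q^m}^*\}$ of $a_i$. A direct computation gives $N_{\mathbb{F}_{q^m}/\mathbb{F}_q}(\gamma^{q-1} a_i) = N_{\mathbb{F}_{q^m}/\mathbb{F}_q}(a_i)$, and Hilbert's Theorem 90 supplies the converse, so the $\sigma$-conjugacy classes of $\mathbb{F}_{q^m}^*$ are exactly the fibers of the norm. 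The hypothesis $N(a_i) \neq N(a_j)$ thus makes the conjugacy classes disjoint and forces $g_i, g_j$ to share no common nontrivial right factor when $i \neq j$. The main obstacle is precisely this last step: making the correspondence ``common right factor of $g_i, g_j$ $\Leftrightarrow$ common root conjugacy class'' rigorous inside the noncommutative ring $R$ is where the distinctness of the norms enters essentially, and it requires a careful appeal to the Lam--Leroy description of right divisors of skew polynomials through their evaluations.
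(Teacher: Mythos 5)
Your proposal is correct and follows essentially the same route as the paper, which gives no proof of this lemma but cites it as a combination of \cite[Theorem 4.5]{lam-leroy}, \cite[Theorem 23]{lam} and \cite[Lemma 24]{linearizedRS} --- precisely the Lam--Leroy machinery (operator evaluation $f_a(\beta)=f(\beta^{q-1}a)\beta$, minimal annihilators of the kernels, conjugacy classes as fibers of the norm, and the resulting degree count for the least common left multiple) that you reconstruct. The only step needing care beyond what you wrote is that for $\ell\geq 3$ pairwise right-coprimality alone does not give $\deg(\mathrm{lclm})=\sum_i d_i$ in a noncommutative ring; one needs the additivity of rank over distinct conjugacy classes (equivalently, that right divisors of these minimal annihilators are again Wedderburn polynomials supported on a single class), which is exactly the part you explicitly defer to Lam--Leroy, so nothing essential is missing.
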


We may now prove that linearized Reed-Solomon codes are MSRD.

\begin{theorem}[{\cite[Theorem 4]{linearizedRS}}]
The linearized Reed-Solomon code $ \mathcal{C}_{LRS}^{n,k}(\mathbf{a}, \boldsymbol\beta) \subseteq \mathbb{F}_{q^m}^n $ is MSRD for the partition $ n = n_1 + n_2 + \cdots + n_\ell $ and the subfield $ \mathbb{F}_q $.
\end{theorem}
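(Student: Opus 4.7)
The plan is to show that the minimum sum-rank distance satisfies $d(\mathcal{C}_{LRS}^{n,k}(\mathbf{a},\boldsymbol\beta))\geq n-k+1$ while $\dim(\mathcal{C}_{LRS}^{n,k}(\mathbf{a},\boldsymbol\beta))=k$, which together with the Singleton Bound in \eqref{eq singleton bound vector form} forces the code to be MSRD. Both inequalities will follow from the same computation expressing the sum-rank weight in terms of kernel dimensions of the evaluation maps $f_{a_i}$.

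First, I would use the evaluation description of codewords recalled just before Lemma~\ref{lemma roots}: a codeword is $\mathbf{c}(f)=(f_{a_1}(\beta_1),\dots,f_{a_1}(\beta_{n_1}),\dots,f_{a_\ell}(\beta_1),\dots,f_{a_\ell}(\beta_{n_\ell}))$ for a skew polynomial $f\in\mathbb{F}_{q^m}[x;\sigma]$ of degree less than $k$. For the $i$th block, the matrix $M_{\boldsymbol\alpha}^{n_i}(f_{a_i}(\beta_1),\dots,f_{a_i}(\beta_{n_i}))$ has columns whose $\mathbb{F}_q$-span is $f_{a_i}(V_i)$, where $V_i=\langle\beta_1,\dots,\beta_{n_i}\rangle_{\mathbb{F}_q}$ is an $n_i$-dimensional $\mathbb{F}_q$-subspace of $\mathbb{F}_{q^m}$ because $\beta_1,\dots,\beta_m$ form a basis and $n_i\leq m$. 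Since $f_{a_i}$ is $\mathbb{F}_q$-linear, the rank of this matrix equals $\dim_{\mathbb{F}_q}(f_{a_i}(V_i))=n_i-\dim_{\mathbb{F}_q}(\ker(f_{a_i})\cap V_i)$.

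Summing over $i\in[\ell]$ yields
\begin{equation*}
\mathrm{srk}(\mathbf{c}(f))=n-\sum_{i=1}^\ell\dim_{\mathbb{F}_q}(\ker(f_{a_i})\cap V_i)\geq n-\sum_{i=1}^\ell\dim_{\mathbb{F}_q}(\ker(f_{a_i})).
\end{equation*}
Now I would invoke Lemma~\ref{lemma roots}: since $N_{\mathbb{F}_{q^m}/\mathbb{F}_q}(a_i)\neq N_{\mathbb{F}_{q^m}/\mathbb{F}_q}(a_j)$ for $i\neq j$ and $f\neq 0$, the total kernel dimension is bounded by $\deg(f)\leq k-1$. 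Hence $\mathrm{srk}(\mathbf{c}(f))\geq n-k+1$ for every nonzero $f$ of degree less than $k$. In particular, the evaluation map $f\mapsto\mathbf{c}(f)$ on the $k$-dimensional $\mathbb{F}_{q^m}$-space of skew polynomials of degree $<k$ has trivial kernel, so $\dim(\mathcal{C}_{LRS}^{n,k}(\mathbf{a},\boldsymbol\beta))=k$ and the minimum distance is at least $n-k+1$. Combined with the Singleton Bound~\eqref{eq singleton bound vector form}, this forces $d(\mathcal{C}_{LRS}^{n,k}(\mathbf{a},\boldsymbol\beta))=n-k+1$, proving the MSRD property.

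The only nontrivial step is the passage from the rank of each block to a kernel dimension of $f_{a_i}$, where one has to be careful that the $\mathbb{F}_q$-linear independence of $\beta_1,\dots,\beta_{n_i}$ identifies $V_i$ with an $n_i$-dimensional subspace; the rest is just Lemma~\ref{lemma roots} plus the Singleton Bound. The deep content is therefore entirely packaged in Lemma~\ref{lemma roots}, which replaces the classical degree bound for polynomial roots in the commutative setting.
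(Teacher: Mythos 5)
Your proof is correct, and it takes a somewhat different route from the one in the paper. The paper does not compute sum-rank weights directly: it invokes the generator-matrix characterization of MSRD codes (Theorem \ref{th gen and par matrices of MSRD}), which reduces the claim to showing that, for every choice of $A_i\in\GL_{n_i}(\F_q)$, the code generated by $M_k\cdot{\rm diag}(A_1,\ldots,A_\ell)$ is MDS in the Hamming metric. Since column operations within a block correspond to replacing $\boldsymbol\beta_i$ by $\boldsymbol\beta_i A_i$ (still $\F_q$-linearly independent entries), a codeword of Hamming weight at most $n-k$ has at least $k$ zero coordinates, which forces $\sum_i\dim_{\F_q}(\ker(f_{a_i}))\geq k$ and then $f=0$ by Lemma \ref{lemma roots}. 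You instead compute the sum-rank weight of each codeword head-on, identifying the rank of the $i$-th block with $\dim_{\F_q}(f_{a_i}(V_i))=n_i-\dim_{\F_q}(\ker(f_{a_i})\cap V_i)$ and summing; this is correct (the column span of $M^{n_i}_{\boldsymbol\alpha}(\mathbf{c}^{(i)})$ is the coordinate image of the $\F_q$-span of the entries of $\mathbf{c}^{(i)}$, and $\dim_{\F_q}V_i=n_i$ because $\beta_1,\ldots,\beta_{n_i}$ are part of a basis). Your argument is more self-contained — it does not need the nontrivial characterization Theorem \ref{th gen and par matrices of MSRD}, and it delivers the injectivity of the evaluation map (hence $\dim(\Cc)=k$) as a byproduct — while the paper's route amortizes the rank computation into a previously established equivalence and only has to count Hamming zeros. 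Both proofs place the entire arithmetic content in Lemma \ref{lemma roots}, so the difference is one of packaging rather than substance.
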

\begin{proof}
Let $ M_k $ be the generator matrix of $ \mathcal{C}_{LRS}^{n,k}(\mathbf{a}, \boldsymbol\beta) $ as in Definition \ref{def lrs}. Notice that, by the linearity of the map $ \sigma $, we have
$$ M^\prime_k := M_k \cdot {\rm diag}(A_1, A_2, \ldots, A_\ell) = (M_k(a_1, \boldsymbol\beta_1 A_1) | M_k(a_2, \boldsymbol\beta_2 A_2) | \ldots | M_k(a_\ell, \boldsymbol\beta_\ell A_\ell)), $$
for all $ A_i \in {\rm GL}_{n_i}(\mathbb{F}_q) $, where the components $ \beta^\prime_{i,1}, \beta^\prime_{i,2}, \ldots, \beta^\prime_{i,n_i} $ of $ \boldsymbol\beta^\prime_i = \boldsymbol\beta_i A_i \in \mathbb{F}_{q^m}^{n_i} $ are $ \mathbb{F}_q $-linearly independent, for $ i \in [\ell] $. By Theorem \ref{th gen and par matrices of MSRD}, we only need to prove that the code with generator matrix $ M^\prime_k $ is MDS. As before, the codewords of such a code are of the form
$$ (f_{a_1}(\beta^\prime_{1,1}), \ldots, f_{a_1}(\beta^\prime_{1,n_1}), \ldots, f_{a_\ell}(\beta^\prime_{\ell,1}), \ldots, f_{a_\ell}(\beta^\prime_{\ell,n_\ell})) = (f_0, f_1, \ldots, f_{k-1}) \cdot M^\prime_k, $$
for some $ f = f_0 + f_1 x + \cdots + f_{k-1} x^{k-1} \in \mathbb{F}_{q^m}[x;\sigma] $. If the previous codeword has Hamming weight at most $ n-k $, then we deduce that
$$ \sum_{i=1}^\ell \dim_{\mathbb{F}_q}(\ker(f_{a_i})) \geq k, $$
since $ \beta^\prime_{i,1}, \beta^\prime_{i,2}, \ldots, \beta^\prime_{i,n_i} $ are $ \mathbb{F}_q $-linearly independent. Since $ \deg(f) < k $, we conclude that $ f = 0 $ by Lemma \ref{lemma roots} and we are done.
\end{proof}

Notice that, since $ N_{\mathbb{F}_{q^m}/\mathbb{F}_q}(\mathbb{F}_{q^m}^*) = \mathbb{F}_q^* $, we may choose up to $q-1$ elements $ a_1, a_2, \ldots, a_{q-1} $ with distinct norms in Definition \ref{def lrs}. In other words, the restrictions on the parameters of a linearized Reed-Solomon code are $ \ell \leq q-1 $ and $ n_i \leq m $ for $ i \in [\ell] $, where equalities can be attained. Hence, they are linear over a field that can be chosen of size 
$$ q^m = (\ell+1)^{\max\{ n_1, n_2, \ldots, n_\ell \}} . $$  
Decoding algorithms for linearized Reed-Solomon codes over $ \mathbb{F}_{q^m} $ with cubic complexity~\cite{boucher-skew}, quadratic complexity~\cite{secure-multishot}, and subquadratic complexity~\cite{sven-sum-rank} have been recently developed. Other decoders of linearized Reed-Solomon codes and variants include \cite{interleaved, hormann, hormannefficient}. An algorithm with exponential complexity for decoding an arbitrary linear code in the sum-rank metric appears in \cite{generic}. The complexity of list decoding linearized Reed-Solomon codes was studied in~\cite{list-decoding}. 

A family of MSRD codes generalizing linearized Reed-Solomon codes and their parameters was proposed in~\cite{generalMSRD}. In particular, they may have smaller field sizes relative to their parameters. The idea is to eliminate the assumption that $ \beta_1, \beta_2, \ldots, \beta_{n_i} $ are $ \mathbb{F}_q $-linearly independent. We have the following characterization.

\begin{theorem}[{\cite[Theorem 3.12]{generalMSRD}}] \label{th msrd code general characterization}
Let $ a_1, a_2, \ldots, a_t \in \mathbb{F}_{q^m}^* $ be such that $ N_{\mathbb{F}_{q^m}/\mathbb{F}_q}(a_i) \neq N_{\mathbb{F}_{q^m}/\mathbb{F}_q}(a_j) $ if $ i \neq j $. Let $ \mu $ and $ r $ be positive integers and let $ \boldsymbol\beta = (\beta_1, \beta_2, \ldots, \beta_{\mu r}) \in \mathbb{F}_{q^m}^{\mu r} $. For $ i \in [\mu] $, let $ \mathcal{H}_i \subseteq \mathbb{F}_{q^m} $ be the $ \mathbb{F}_q $-linear subspace generated by $ \beta_{(i-1)r+1}, \beta_{(i-1)r+2}, \ldots, \beta_{ir} $. Let $ \ell = t \mu $ and $ n = \ell r $. For $ k \in [n] $, the linear code with generator matrix
\begin{equation}
(M_k(a_1, \boldsymbol\beta) | M_k(a_2, \boldsymbol\beta) | \ldots | M_k(a_t, \boldsymbol\beta)) \in \mathbb{F}_{q^m}^{k \times n}
\label{eq gen matrix msrd code general}
\end{equation}
is MSRD for the partition $ n = r + r + \cdots + r $ ($ \ell $ times) and the subfield $ \mathbb{F}_q $ if and only if the following two conditions hold:
\begin{enumerate}
\item
$ \dim_{\mathbb{F}_q}(\mathcal{H}_i) = r $, and
\item
$ \mathcal{H}_i \cap \left( \sum_{j \in \Gamma} \mathcal{H}_j \right) = \{ 0 \} $ for any $ \Gamma \subseteq [\mu] $ such that $ i \notin \Gamma $ and $ |\Gamma| \leq \min \{ k,\mu \} -1 $.
\end{enumerate}
\end{theorem}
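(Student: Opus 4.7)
The plan is to invoke Theorem~\ref{th gen and par matrices of MSRD}: it suffices to show that, for every choice of $A_{i,j}\in\GL_{r}(\F_q)$ with $(i,j)\in[t]\times[\mu]$, every $k\times k$ submatrix of the modified generator matrix
$$ G' = G\cdot{\rm diag}(A_{1,1},\ldots,A_{t,\mu}) $$
is invertible, i.e., the code with generator matrix $G'$ is MDS. As in the proof of the MSRD property of linearized Reed-Solomon codes above, the $\F_q$-linearity of $\sigma$ makes the $(i,j)$-block of $G'$ equal to $M_k(a_i,\boldsymbol\beta^{(j)}A_{i,j})$, where $\boldsymbol\beta^{(j)}=(\beta_{(j-1)r+1},\ldots,\beta_{jr})$. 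Hence a codeword takes the form $(f_{a_i}(\beta'_{i,j,s}))_{i,j,s}$ for some $f\in\F_{q^m}[x;\sigma]$ of degree less than $k$, where $\beta'_{i,j,s}\in\mathcal{H}_j$ denotes the $s$-th entry of $\boldsymbol\beta^{(j)}A_{i,j}$; moreover each column of $M_k(a,\cdot)$ is $\F_q$-linear in its scalar input (via Frobenius). A routine induction using condition~(2) shows that, for any $J\subseteq[\mu]$ with $|J|\leq\min\{k,\mu\}$, the sum $\sum_{j\in J}\mathcal{H}_j$ is direct.

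For the \emph{if} direction, assume (1) and (2) and suppose, for contradiction, that a non-zero codeword has Hamming weight at most $n-k$, so $\sum_{i,j}|T_{i,j}|\geq k$, where $T_{i,j}:=\{s:f_{a_i}(\beta'_{i,j,s})=0\}$; set $J_i:=\{j:T_{i,j}\neq\emptyset\}$. Condition~(1) makes $\{\beta'_{i,j,s}\}_{s\in[r]}$ a basis of $\mathcal{H}_j$, so its subset indexed by $T_{i,j}$ is $\F_q$-linearly independent inside $\mathcal{H}_j$. If $|J_i|\leq\min\{k,\mu\}$ for every $i$, the direct-sum statement gives $\dim_{\F_q}\ker f_{a_i}\geq\sum_j|T_{i,j}|$; summing over $i$ yields $\sum_i\dim_{\F_q}\ker f_{a_i}\geq k$, contradicting Lemma~\ref{lemma roots} since $\deg f<k$. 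Otherwise some $|J_i|>\min\{k,\mu\}$, which forces $k<\mu$ and $|J_i|>k$; picking any $k$ distinct $j_1,\ldots,j_k\in J_i$ and a non-zero element of $\mathcal{H}_{j_l}\cap\ker f_{a_i}$ for each $l$, the direct-sum statement applied to $\{j_1,\ldots,j_k\}\subseteq[\mu]$ makes these $k$ elements $\F_q$-linearly independent, so $\dim_{\F_q}\ker f_{a_i}\geq k>\deg f$, again contradicting Lemma~\ref{lemma roots}.

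For the \emph{only if} direction, assume $\Cc$ is MSRD. If condition~(1) fails for some $j$, pick a non-trivial $\F_q$-linear dependence $c\in\F_q^r\setminus\{0\}$ among $\beta_{(j-1)r+1},\ldots,\beta_{jr}$ and use it as the first column of some $A_{i,j}\in\GL_r(\F_q)$; then $\beta'_{i,j,1}=0$ produces a zero column in $G'$, violating MDS. If condition~(2) fails, a minimality argument produces $\Gamma'\subseteq[\mu]$ with $2\leq\kappa:=|\Gamma'|\leq\min\{k,\mu\}$ and non-zero $x_j\in\mathcal{H}_j$, $j\in\Gamma'$, satisfying $\sum_{j\in\Gamma'}x_j=0$. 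Using condition~(1), choose $A_j\in\GL_r(\F_q)$ so that the first entry of $\boldsymbol\beta^{(j)}A_j$ is $x_j$, and set $A_{i,j}=A_j$ for all $i$. By $\F_q$-linearity in the scalar input, the $\kappa$ columns of $G'$ at the positions $(1,j,1)$ for $j\in\Gamma'$ satisfy a non-trivial linear relation and hence span a space of dimension at most $\kappa-1$; adjoining any $k-\kappa\geq 0$ further columns of $G'$ yields $k$ columns of total rank at most $k-1$, producing a singular $k\times k$ submatrix and contradicting MSRD. The most delicate step is the case $|J_i|>\min\{k,\mu\}$ in the \emph{if} direction: the bound on $|\Gamma|$ in condition~(2) is used precisely to extract $k$ independent elements of $\ker f_{a_i}$ without relying on a direct coordinate-counting bound.
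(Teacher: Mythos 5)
Your proof is correct. Note that the chapter itself does not prove Theorem~\ref{th msrd code general characterization}; it only cites \cite[Theorem 3.12]{generalMSRD}. The closest in-text argument is the proof that linearized Reed--Solomon codes are MSRD (the case $\mu=1$), and your proposal is exactly the natural generalization of that argument: reduce to the MDS property of $G\cdot{\rm diag}(A_{1,1},\ldots,A_{t,\mu})$ via Theorem~\ref{th gen and par matrices of MSRD}, absorb the $A_{i,j}$ into the evaluation points by $\F_q$-linearity of the Frobenius, and bound kernels of the maps $f_{a_i}$ via Lemma~\ref{lemma roots}. The two points that go beyond the $\mu=1$ case are handled correctly: condition (2) with the threshold $\min\{k,\mu\}-1$ yields directness of $\sum_{j\in J}\mathcal H_j$ for $|J|\le\min\{k,\mu\}$, which is what converts the count of vanishing coordinates into a lower bound on $\dim_{\F_q}\ker(f_{a_i})$ (with your separate treatment of the case $|J_i|>\min\{k,\mu\}$ being the right fix when $k<\mu$); and the converse correctly turns a failure of (1) into a zero column and a failure of (2) into $\kappa\le k$ columns of rank $\kappa-1$, hence a singular $k\times k$ submatrix. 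The only presentational caveat is that in the ``only if'' direction you invoke condition (1) while refuting condition (2); this is fine provided you state explicitly that MSRD has already been shown to force (1).
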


Notice that the same conditions can be applied to the parity-check matrix of a code as in (\ref{eq gen matrix msrd code general}), since the dual of a linear MSRD code is also MSRD by Theorem~\ref{th dual of msrd is msrd}. This would entail replacing $ k $ by $ n-k $ in the second condition.

Several constructions of families of subspaces satisfying Conditions 1 and 2 were given in~\cite{generalMSRD}, see \cite[Section 4.1]{generalMSRD}. Each construction yields an $ \mathbb{F}_{q^m} $-linear MSRD code as in Theorem~\ref{th msrd code general characterization} with a field size $ q^m $ that is the smallest among known $ \mathbb{F}_{q^m} $-linear MSRD codes for their corresponding parameters, see Table \ref{table MSRD}. The case of linearized Reed-Solomon copdes is recovered by choosing $ \mu = 1 $.
Currently, there are no known efficient decoding algorithms that apply to the MSRD codes from Theorem~\ref{th msrd code general characterization}. Similarly, explicit descriptions of their duals are not yet known.

\begin{table} [!t]
\centering
\resizebox{\textwidth}{!}{\begin{tabular}{c||c|c|c}
\hline
&&\\[-0.8em]
Code $ \mathcal{C}_{\boldsymbol\gamma} $ & $ q $, $ r $, $ h $ & $ \ell = t \mu = (q-1) \mu $ & Field size $ q^m $ \\[0.3em]
\hline\hline
&&\\[-0.8em]
Trivial $ \mathcal{C}_{\boldsymbol\gamma} = \{ 0 \} $ & Any & $ q-1 $ & $ q^r = (\ell +1)^r $, $ m = r $ \\[0.3em]
\hline 
&&\\[-0.8em]
MDS & Any & $ (q-1) \left( q^r + 1 \right) $ & $ \left( \frac{\ell}{q-1} -1 \right)^{\min \left\lbrace h, \frac{\ell}{q-1} \right\rbrace } $ \\[0.3em]
\hline 
&&\\[-0.8em]
Hamming, $ \rho \in \mathbb{Z}^+ $ & $ h = 2 $ & $ (q-1) \cdot \frac{q^{r \rho} - 1}{q^r - 1} $ & $ q^{r \rho} = \frac{q^r - 1}{q-1} \cdot \ell + 1 $ \\[0.3em]
\hline 
&&\\[-0.8em]
BCH, $ s \in \mathbb{Z}^+ $ & Any & $ (q-1) \left( q^{rs} - 1 \right) $ & $ \leq q^r \cdot \left( \frac{\ell}{q-1} + 1 \right) ^{ \left\lceil \frac{q^r - 1}{q^r} (h-1) \right\rceil} $ \\[0.3em]
\hline 
&&\\[-0.8em]
Hermitian curves & $ q^r = p^{2s} $ & $ (q-1) q^{ \frac{3r}{2} } $ & $ \mu^{ \frac{1}{3} ( 2h + \mu^{2/3} - \mu^{1/3} ) } $, $ \mu = \frac{\ell}{q-1} $ \\[0.3em]
\hline 
&&\\[-0.8em]
Suzuki curves & $ q^r = 2^{2s+1} $ & $ (q-1) q^{2r} $ & $ \leq \mu^{ \frac{1}{2} \left( h + \mu^{3/4} - \mu^{1/4} \right) } $, $ \mu = \frac{\ell}{q-1} $ \\[0.3em]
\hline 
&&\\[-0.8em]
Optimal curves, $ i \in \mathbb{Z}^+ $ & $ q^r = p^{2s} $ & $ (q-1) \left( q^{ \frac{r}{2} } -1 \right) q^{ \frac{i r}{2} } $ & $ \leq \left( \frac{\mu_i}{ q^{ \frac{r}{2} } - 1 } \right)^{ \frac{2}{i} \left( h_i + \frac{\mu_i}{ q^{ \frac{r}{2} } - 1 } \right) } $, $ \mu_i = \frac{\ell_i}{q-1} $ \\[0.3em]
\hline 
\end{tabular}}
\caption{Table summarizing the achievable code parameters of the $ \mathbb{F}_{q^m} $-linear MSRD codes obtained in \cite{generalMSRD} based on Theorem \ref{th msrd code general characterization}. The sequence $ \beta_1, \beta_2, \ldots, \beta_{r \mu} $ is obtained using a Hamming-metric $ \mathbb{F}_{q^r} $-linear code $ \mathcal{C}_{\boldsymbol\gamma} \subseteq \mathbb{F}_{q^r}^\mu $, and each row corresponds to a different choice of $ \mathcal{C}_{\boldsymbol\gamma} $. Here, we consider the matrix (\ref{eq gen matrix msrd code general}) with $ h $ rows as a parity-check matrix of the $ \mathbb{F}_{q^m} $-linear MSRD code, whose dimension is therefore $ k = n-h $. The codes in the last three rows are Algebraic-Geometry codes obtained from the curves indicated. }
\label{table MSRD}
\end{table}

We conclude by briefly describing twisted linearized Reed-Solomon codes \cite{twisted}. They are a generalization of linearized Reed-Solomon codes and include $ \mathbb{F}_q $-linear MSRD codes with the same parameters as linearized Reed-Solomon codes, but which are not isometric to them. Notice that only some of the codes from \cite{twisted} are $ \mathbb{F}_{q^m} $-linear. Nevertheless, we decided to discuss them in this section due to their connection to linearized Reed-Solomon codes. The following is \cite[Definition 6.2]{twisted} for finite fields.

\begin{definition}
Let $ a_1, a_2, \ldots, a_\ell \in \mathbb{F}_{q^m}^* $ such that $ N_{\mathbb{F}_{q^m}/\mathbb{F}_q}(a_i) \neq N_{\mathbb{F}_{q^m}/\mathbb{F}_q}(a_j) $ if $ i \neq j $. Let $ \beta_1, \beta_2, \ldots, \beta_m \in \mathbb{F}_{q^m} $ be a basis of $ \mathbb{F}_{q^m} $ over $ \mathbb{F}_q $. Let $ k \in [n] $, and $ n_i = m $ for $ i \in [\ell] $, hence $ n = \ell m $. Let $ \eta \in \mathbb{F}_{q^m} $ such that $ (-1)^{kn} N_{\mathbb{F}_{q^m}/\mathbb{F}_q}(\eta) $ is not in the multiplicative group generated by $ \{ N_{\mathbb{F}_{q^m}/\mathbb{F}_q}(a_i) : 1 \leq i \leq \ell \} $. 

The twisted linearized Reed-Solomon code of dimension $ k $ is the $ \mathbb{F}_q $-linear code $ \mathcal{C}_{tLRS}^{n,k,\eta,h}(\mathbf{a}, \boldsymbol\beta) \subseteq \mathbb{F}_{q^m}^n $ whose codewords are of the form
$$ (f_0, \ldots, f_{k-1}) \cdot (M_k(a_1, \boldsymbol\beta) | M_k(a_2, \boldsymbol\beta) | \ldots | M_k(a_\ell, \boldsymbol\beta)) + \eta f_0^{q^h} \left( \beta_1^{q^k} a_1^{\frac{q^k-1}{q-1}} , \beta_2^{q^k} a_1^{\frac{q^k-1}{q-1}} , \ldots , \beta_m^{q^k} a_\ell^{\frac{q^k-1}{q-1}} \right), $$
for any $ f_0, f_1, \ldots , f_{k-1} \in \mathbb{F}_{q^m} $, where $ \boldsymbol\beta = (\beta_1, \beta_2, \ldots, \beta_m) $.
\end{definition}

The last term added makes the code only $ \mathbb{F}_q $-linear and not $ \mathbb{F}_{q^m} $-linear in general, since the map $ f_0 \mapsto \eta f_0^{q^h} $ is not always $ \mathbb{F}_{q^m} $-linear. Linearized Reed-Solomon codes are obtained by letting $ \eta = 0 $. Notice that, in that case, $ (-1)^{kn} N_{\mathbb{F}_{q^m}/\mathbb{F}_q}(\eta) = 0 $ does not belong to any multiplicative subgroup of $ \mathbb{F}_q^* $. Similarly to the case of linearized Reed-Solomon codes, twisted linearized Reed-Solomon codes can also be seen as evaluations 
$$ (f_{a_1}(\beta_1), \ldots, f_{a_1}(\beta_m), \ldots, f_{a_\ell}(\beta_1), \ldots, f_{a_\ell}(\beta_m)) \in \mathbb{F}_{q^m}^{\ell m} $$
of skew polynomials of the form
$$ f = f_0 + f_1 x + \cdots + f_{k-1}x^{k-1} + \eta f_0^{q^h} x^k \in \mathbb{F}_{q^m}[x;\sigma]. $$

We observe that, when $ \ell = 1 $ and $ a_1 = 1 $, twisted linearized Reed-Solomon codes recover twisted Gabidulin codes \cite{sheekey}, and when $ m = 1 $ (thus $ n_1 = \ldots = n_\ell = 1 $) twisted linearized Reed-Solomon codes recover twisted Reed-Solomon codes \cite{beelen} as a special case. There exist further generalizations of twisted Gabidulin codes. Such generalizations can also be extended to twisted linearized Reed-Solomon codes, see \cite[Remark 6.7]{twisted} and \cite[Section 7.1]{twisted}.

We also have the following.

\begin{theorem}[{\cite[Theorem 6.3]{twisted}}]
The twisted linearized Reed-Solomon code $ \mathcal{C}_{tLRS}^{n,k,\eta,h}(\mathbf{a}, \boldsymbol\beta) \subseteq \mathbb{F}_{q^m}^n $ is MSRD for the partition $ n =\ell m= m + m + \cdots + m $ and the subfield $ \mathbb{F}_q $.
\end{theorem}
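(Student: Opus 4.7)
The plan is to show that every non-zero codeword of $\mathcal{C}_{tLRS}^{n,k,\eta,h}(\mathbf{a},\boldsymbol\beta)$ has sum-rank weight at least $n-k+1$, which combined with the Singleton bound and the injectivity of the skew-polynomial parameterization will give MSRD. First I would express each codeword as the evaluation of the skew polynomial $f = f_0 + f_1 x + \cdots + f_{k-1}x^{k-1} + \eta f_0^{q^h} x^k \in \F_{q^m}[x;\sigma]$ at the basis points $\beta_j$ relative to each $a_i$. Since $\beta_1,\ldots,\beta_m$ is an $\F_q$-basis of $\F_{q^m}$ and each $f_{a_i}$ is $\F_q$-linear, the rank of the $i$-th matrix block of the codeword equals $m - \dim_{\F_q}(\ker f_{a_i})$, so the total sum-rank weight is $n - \sum_{i=1}^\ell \dim_{\F_q}(\ker f_{a_i})$.

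Next, suppose for contradiction that this weight is at most $n-k$, so $\sum_i\dim_{\F_q}(\ker f_{a_i})\geq k$. Since $\deg f\leq k$, Lemma~\ref{lemma roots} forces the equality $\sum_i \dim_{\F_q}(\ker f_{a_i})=\deg f=k$. In particular the leading coefficient $\eta f_0^{q^h}$ is non-zero, so $\eta\neq 0$ and $f_0\neq 0$. The equality case in the root bound is the structural core of the argument: it forces $f$ to equal, up to the leading coefficient $\eta f_0^{q^h}$, the monic least common left multiple $L=\mathrm{LCLM}(g_1,\ldots,g_\ell)$ of the minimal skew polynomials $g_i$ of $V_i=\ker f_{a_i}$ at $a_i$. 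Because the norms $N_{\F_{q^m}/\F_q}(a_i)$ are pairwise distinct, the $g_i$ lie in distinct conjugacy classes, so $\deg L=\sum r_i=k$, and a direct calculation using Galois-invariance of the norm yields $N_{\F_{q^m}/\F_q}(L(0))=\prod_i N_{\F_{q^m}/\F_q}(g_i(0))$.

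Extracting the constant term of $f=\eta f_0^{q^h}\cdot L$ gives $f_0=\eta f_0^{q^h}L(0)$. Applying $N_{\F_{q^m}/\F_q}$, and using that $N_{\F_{q^m}/\F_q}(f_0)\in\F_q^*$ together with $q-1\mid q^h-1$ so that $N_{\F_{q^m}/\F_q}(f_0)^{q^h-1}=1$, one obtains $N_{\F_{q^m}/\F_q}(\eta)\cdot N_{\F_{q^m}/\F_q}(L(0))=1$. Combining this with the identity $N_{\F_{q^m}/\F_q}(g_i(0))=(-1)^{m r_i}N_{\F_{q^m}/\F_q}(a_i)^{r_i}$ (verified for $r_i=1$ by writing $g_i=x-a_iv^{q-1}$ for a generator $v$ of $V_i$, and extended to arbitrary $r_i$ by iterated factorization of $L$) gives $N_{\F_{q^m}/\F_q}(\eta)=(-1)^{mk}\prod_i N_{\F_{q^m}/\F_q}(a_i)^{-r_i}$. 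Since $n+m=m(\ell+1)$, multiplying by $(-1)^{kn}$ and tracking the resulting sign modulo $\langle N_{\F_{q^m}/\F_q}(a_i)\rangle$ (where $-1$ belongs to the subgroup in every case compatible with the existence of a valid $\eta$) places $(-1)^{kn}N_{\F_{q^m}/\F_q}(\eta)$ in $\langle N_{\F_{q^m}/\F_q}(a_i)\rangle$, contradicting the hypothesis on $\eta$. Hence the minimum sum-rank distance is at least $n-k+1$; combined with the Singleton bound and the injectivity of the parameterization (which follows from Lemma~\ref{lemma roots} applied to the zero-codeword case), this proves MSRD.

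The main obstacle is the equality case of Lemma~\ref{lemma roots}: justifying the factorization $f=(\eta f_0^{q^h})\cdot L$ through the Wedderburn polynomial of the kernels and deriving the closed-form expression $N_{\F_{q^m}/\F_q}(g_i(0))=(-1)^{mr_i}N_{\F_{q^m}/\F_q}(a_i)^{r_i}$. Once these two ingredients are available, the contradiction with the hypothesis on $(-1)^{kn}N_{\F_{q^m}/\F_q}(\eta)$ is essentially a sign-chase, but the interplay of the twist $\eta f_0^{q^h}$, the non-commutativity of $\F_{q^m}[x;\sigma]$, and the various exponents requires careful bookkeeping.
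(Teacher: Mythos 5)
The paper does not actually prove this theorem; it only cites \cite[Theorem 6.3]{twisted}, so the only in-paper benchmark is the proof of the untwisted case, which your argument correctly extends through its first half. That half is sound: a nonzero codeword of sum-rank weight at most $n-k$ gives $\sum_i\dim_{\F_q}(\ker f_{a_i})\geq k$, Lemma~\ref{lemma roots} then forces $\deg f=k$ (hence $\eta\neq 0$, $f_0\neq 0$) and equality throughout, the minimal skew polynomials $g_i$ of the kernels all right-divide $f$, their LCLM $L$ is monic of degree $k$ because the $a_i$ lie in distinct $\sigma$-conjugacy classes, so $f=\eta f_0^{q^h}L$, and the constant-term/norm computation (using $N_{\F_{q^m}/\F_q}(f_0)^{q^h-1}=1$ and $N_{\F_{q^m}/\F_q}(g_i(0))=(-1)^{mr_i}N_{\F_{q^m}/\F_q}(a_i)^{r_i}$) yields
$(-1)^{mk}N_{\F_{q^m}/\F_q}(\eta)=\prod_i N_{\F_{q^m}/\F_q}(a_i)^{-r_i}\in G$, where $G=\langle N_{\F_{q^m}/\F_q}(a_1),\ldots,N_{\F_{q^m}/\F_q}(a_\ell)\rangle$.

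The gap is in the final sign-chase. Your parenthetical claim that ``$-1$ belongs to the subgroup in every case compatible with the existence of a valid $\eta$'' is false: a valid nonzero $\eta$ exists exactly when $G\subsetneq\F_q^*$, and for odd $q$ a proper subgroup of odd order never contains $-1$ (e.g.\ $q=7$, $N_{\F_{q^m}/\F_q}(a_1)=2$, $N_{\F_{q^m}/\F_q}(a_2)=4$, $G=\{1,2,4\}$, $-1=6\notin G$). What your computation delivers is $(-1)^{mk}N_{\F_{q^m}/\F_q}(\eta)\in G$, while the hypothesis to be contradicted is $(-1)^{kn}N_{\F_{q^m}/\F_q}(\eta)\notin G$; these differ by the factor $(-1)^{k(n-m)}=(-1)^{km(\ell-1)}$, which lies in $G$ automatically when $q$ is even, $km$ is even, or $\ell$ is odd, but not in general, and your argument does not close that case. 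You need either to check that the exponent of $-1$ in the definition of $\eta$ coincides exactly with the one produced by the constant-term computation (so that no appeal to $-1\in G$ is needed), or to treat the residual parity case separately; as written the contradiction does not follow. A second, smaller point: the step $N_{\F_{q^m}/\F_q}(L(0))=\prod_i N_{\F_{q^m}/\F_q}(g_i(0))$ relies on the fact that in any complete factorization of $L$ into linear factors, exactly $r_i$ of the roots are $\sigma$-conjugate to $a_i$; this is true but is a statement about Wedderburn polynomials across distinct conjugacy classes that you should cite or prove rather than absorb into ``iterated factorization.''
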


Other properties of interest are discussed in \cite{twisted}, including attainable parameters \cite[Sec. 6.2]{twisted} and duality \cite[Sec. 6.3]{twisted}. Their decoding has not yet been studied, to the best of our knowledge. We conclude by discussing the possible values of the parameters of twisted linearized Reed-Solomon codes. Compared to linearized Reed-Solomon codes, we have the additional restriction that $ (-1)^{kn} N_{\mathbb{F}_{q^m}/\mathbb{F}_q}(\eta) $ must not lie in the multiplicative group generated by $ \{ N_{\mathbb{F}_{q^m}/\mathbb{F}_q}(a_i) : 1 \leq i \leq \ell \} $, which we denote by $ G $. 

We always have the restriction $ 1 \leq \ell \leq q-1 $. The case $ \ell = q-1 $ is equivalent to $ G = \mathbb{F}_q^* $ being the whole multiplicative group. In that case, the only possible choice is $ \eta = 0 $, hence we only obtain linearized Reed-Solomon codes. Non-trivial twisted linearized Reed-Solomon codes \cite[Proposition 6.8]{twisted} correspond to $ \eta \neq 0 $ and can be obtained whenever $ \ell \leq (q-1)/r $, where $ r $ is the smallest prime dividing $ q-1 $. Over infinite fields, one can obtain non-trivial twisted linearized Reed-Solomon codes for any choice of $ \ell $.


\end{document}